\theoremstyle{plain}
\newtheorem{thm}{Theorem}[section]
\newtheorem{lemma}[thm]{Lemma}
\newtheorem{cor}[thm]{Corollary}
\newtheorem{prop}[thm]{Proposition}
\theoremstyle{definition}
\theoremstyle{remark}
\newtheorem{remark}[thm]{Remark}
\newcommand{\norm}[1]{\left\Vert#1\right\Vert}
\newcommand{\abs}[1]{\left\vert#1\right\vert}
\newcommand{\set}[1]{\left\{#1\right\}}
\newcommand{\R}{\mathbb{R}}
\newcommand{\C}{\mathbb{C}}
\newcommand{\Z}{\mathbb{Z}}
\def\Re{\mathop{\rm Re}\nolimits}
\DeclareMathOperator{\Ker}{Ker}
\DeclareMathOperator*{\supp}{supp}
\begin{document}

\title[Dynamics of Jacobi and block Jacobi Matrices]{Quantum Dynamics of Periodic and Limit-Periodic Jacobi and block Jacobi Matrices with Applications to Some Quantum Many Body Problems}

\author[D. Damanik]{David Damanik}
\email{damanik@rice.edu}
\address{Mathematics Dept. MS-136, Rice University, Houston, TX 77005}

\author[M. Lukic]{Milivoje Lukic}
\email{milivoje.lukic@rice.edu}
\address{Mathematics Dept. MS-136, Rice University, Houston, TX 77005}

\author[W. Yessen]{William Yessen}
\email{yessen@rice.edu}
\address{Mathematics Dept. MS-136, Rice University, Houston, TX 77005}

\thanks{D.D. was supported in part by NSF grant DMS--1067988, M.L. was supported in part by NSF grant DMS--1301582, and W.Y. was supported by NSF grant DMS--1304287}
\subjclass[2010]{47B36, 82B44.}

\date{\today}

\begin{abstract}

We investigate quantum dynamics with the underlying Hamiltonian being a Jacobi or a block Jacobi matrix with the diagonal and the off-diagonal terms modulated by a periodic or a limit-periodic sequence. In particular, we investigate the transport exponents. In the periodic case we demonstrate ballistic transport, while in the limit-periodic case we discuss various phenomena such as quasi-ballistic transport and weak dynamical localization. We also present applications to some quantum many body problems. In particular, we establish for the anisotropic XY chain on $\mathbb{Z}$ with periodic parameters an explicit strictly positive lower bound for the Lieb-Robinson velocity.

\end{abstract}

\maketitle

\sloppy

\section{Introduction}\label{sec:intro}

Quantum spin systems have been investigated for a long time as models of interacting many-body quantum systems. Given that the state space of a single spin is a finite-dimensional Hilbert space, the algebra of local observables is particularly simple. This allows us to focus exclusively on the quantum-dynamical phenomena arising from interaction of the spins. One of the most widely studied dynamical properties of spin models is the so-called \textit{Lieb-Robinson bound} \cite{LR72}, which gives restrictions on the speed of propagation of local disturbances through the chain. The general form of the Lieb-Robinson bound is
\begin{align}\label{eq:lr-general}
\norm{[\tau_t(A), B]}\leq C\norm{A}\norm{B} e^{-\eta(d - v\abs{t})},
\end{align}
where $C$, $\eta$, and $v$ are positive constants, $d$ is the distance between the regions where observables $A$ and $B$ are supported, $\tau_t(A)$ is the time evolution of $A$ and $t$ is the time parameter. We will give more details later.

In some extensively studied one-dimensional models, the dynamics of the interacting spin chain is strongly related to the dynamics governed by a single particle Hamiltonian (via the Jordan-Wigner transformation and the Lieb-Schultz-Mattis ansatz \cite{LSM61}, which allows to map certain many-body models to free Fermion models). This allows one to relate the Lieb-Robinson bound to transport properties of the associated single particle Hamiltonian. This is the main focus of our investigation: \textit{we concentrate on the widely studied anisotropic $XY$ chain and investigate the Lieb-Robinson bound via the transport properties of the associated single particle Hamiltonian}. For example, in their recent work \cite{HSS12}, E. Hamza, R. Sims and G. Stolz demonstrated how dynamical localization in the associated one particle Hamiltonian implies an exponentially small (in the distance) bound on the propagation of local disturbances; namely, they proved that the bound \eqref{eq:lr-general} holds with $v = 0$. They then proved dynamical localization for the one particle Hamiltonian associated to the isotropic $XY$ chain with random interaction and/or magnetic field. One of our main results gives the converse for the Hamza-Sims-Stolz theorem. In particular, we show that in the case of periodic interaction and magnetic field, the Lieb-Robinson bound with $v = 0$ fails. In fact, a uniform positive lower bound on $v$ is given in terms of some transport properties of the associated one particle Hamiltonian (which turns out to be a periodic block Jacobi matrix exhibiting ballistic transport).

Let us briefly describe the general setup and state our main results formally.

For a finite set $S \subset\mathbb{Z}$, we consider the $2^{\lvert S\rvert}$ dimensional Hilbert space $\mathcal{H}^{(S)}=\bigotimes_{j\in S} \C^2$ and the algebra of observables $\mathcal{A}_S = \bigotimes_{j\in S} \mathcal{A}_j$ where $\mathcal{A}_j = M(2,\mathbb{C})$ is the algebra of $2\times 2$ complex matrices, viewed as acting on the $j$-th copy of $\mathbb{C}^2$.

For $S_1 \subset S_2$, $\mathcal{A}_{S_1}$ embeds into $\mathcal{A}_{S_2}$ by the map $A \mapsto A \otimes \mathbb{I}_{S_2 \setminus S_1}$. The standard convention is to treat this embedding as an identity, so that $\mathcal{A}_{S_1}$ is viewed as a subset of  $\mathcal{A}_{S_2}$. With that convention, the algebra of local observables on $\mathbb{Z}$ is defined as
\[
\mathcal{A}_{\mathbb{Z}} = \bigcup_{\substack{S \subset \mathbb{Z} \\ S\text{ is finite}}} \mathcal{A}_S
\]
and for $A \in \mathcal{A}_{\mathbb{Z}}$, $\supp A$ is the smallest set $S$ such that $A \in \mathcal{A}_S$.

We consider the 1-dimensional anisotropic $XY$ model, given for $\Lambda = [m,n] \cap \mathbb{Z}$ by the Hamiltonian $H^{(\Lambda)}$ acting on the $2^{\lvert \Lambda\rvert}$ dimensional Hilbert space $\mathcal{H}^{(\Lambda)}=\bigotimes_{j=m}^n \C^2$ as follows.
\begin{align}\label{eq:haml}
\begin{split}
  H^{(\Lambda)}= \sum_{j=m}^{n-1}\mu_j[(1+\gamma_j)\sigma_j^{(x)}\sigma_{j+1}^{(x)}+(1-\gamma_j)\sigma_j^{(y)}\sigma_{j+1}^{(y)}]+\sum_{j=m}^n\nu_j\sigma_j^{(z)},
\end{split}
\end{align}
where $\set{\mu_j}$ is the sequence of interaction couplings (notice that only nearest neighbor interactions are taken into account) and $\set{\nu_j}$ is the magnetic field in the direction transversal to the lattice; here $\set{\gamma_j}$ is the sequence of anisotropy coefficients.
The operators $\sigma_j^{(v)}$, $v\in\set{x,y,z}$, are defined by
\begin{align*}
\underbrace{\mathbb{I}\otimes\cdots\otimes\mathbb{I}}_{j-m\text{ times}}\otimes \sigma^{(v)}\otimes\underbrace{\mathbb{I}\otimes\cdots\otimes\mathbb{I}}_{n-j\text{ times}},
\end{align*}
where $\mathbb{I}$ is the $2\times 2$ identity operator and $\sigma^{(v)}$ are the Pauli matrices:
\begin{align*}
\sigma^{(x)}=\begin{pmatrix}
            0 & 1\\
            1 & 0\\
           \end{pmatrix},\hspace{2mm}
\sigma^{(y)}=\begin{pmatrix}
            0 & -i\\
            i & 0\\
           \end{pmatrix},\hspace{2mm}
\sigma^{(z)}=\begin{pmatrix}
            1 & 0\\
            0 & -1\\
           \end{pmatrix}.
\end{align*}

We assume that for all $j$, $\mu_j\neq 0$ (otherwise the chain breaks into a collection of noninteracting subchains). Notice also that with $\gamma_j\equiv 1$ we obtain the Ising model in a magnetic field, and with $\gamma_j\equiv 0$ we obtain the isotropic XY model.

We also need to define the interaction boundary of a set $S\subset\mathbb{Z}$, which in the case of the XY spin chain can be defined as
\[
\partial S = \{ n \in S : \{n-1, n+1 \} \not\subset S\}.
\]

\begin{thm}[Lieb--Robinson~\cite{LR72}, Nachtergaele--Sims~\cite{NS09}] \label{thm:LRintro}
For any $\Lambda = [m,n]\cap \mathbb{Z}$, any disjoint $S_1, S_2 \subset\Lambda$, any $A \in \mathcal{A}_{S_1}$ and $B \in \mathcal{A}_{S_2}$ and all $t\in \mathbb{R}$,
\begin{align}\label{eq:lr-spec}
\norm{[\tau^{(\Lambda)}_t(A), B]}\leq C\norm{A}\norm{B}e^{-\eta(d(S_1, S_2)-v\abs{t})}
\end{align}
with uniform constants $\eta, v > 0$ and a constant $C$ which can depend solely on the size of the interaction boundaries of $S_1, S_2$.
\end{thm}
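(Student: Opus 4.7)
The plan is to execute the classical Lieb--Robinson iteration argument, exploiting the fact that $H^{(\Lambda)}$ is a sum of local terms of range at most $2$.

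First, I would decompose $H^{(\Lambda)}=\sum_Z h_Z$, where $Z$ ranges over subsets of $\Lambda$ of size at most $2$: for $Z=\{j\}$, $h_Z=\nu_j\sigma_j^{(z)}$ is the on-site field, and for $Z=\{j,j+1\}$, $h_Z$ is the anisotropic nearest-neighbor coupling from \eqref{eq:haml}. Under the periodicity (or limit-periodicity) hypothesis the sequences $(\mu_j),(\nu_j),(\gamma_j)$ are uniformly bounded, and hence $\|h_Z\|\le M$ for some $M$ independent of $\Lambda$ and of $Z$.

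Next, for fixed $A\in\mathcal{A}_{S_1}$ and $B\in\mathcal{A}_{S_2}$ with $S_1\cap S_2=\emptyset$, set $F(t)=[\tau^{(\Lambda)}_t(A),B]$. Differentiating the Heisenberg evolution and invoking the Jacobi identity gives
\[
F'(t) \;=\; i[H^{(\Lambda)}, F(t)] \;-\; i[\tau^{(\Lambda)}_t(A), [H^{(\Lambda)}, B]].
\]
The first term on the right generates an infinitesimal unitary conjugation and preserves the operator norm, so Duhamel's principle yields
\[
\|F(t)\| \;\le\; \|F(0)\| \;+\; \int_0^{|t|} \sum_{Z : Z \cap S_2 \ne \emptyset} \|[\tau^{(\Lambda)}_s(A), [h_Z, B]]\| \, ds,
\]
the sum being restricted to $Z\cap S_2\ne\emptyset$ because otherwise $[h_Z,B]=0$; disjointness of $S_1,S_2$ forces $\|F(0)\|=0$.

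The third step is to iterate. Applying the same inequality recursively, each commutator on the right replaces $B$ by a nested commutator $[h_{Z_k},\ldots,[h_{Z_1},B]\ldots]$ supported in $S_2\cup Z_1\cup\cdots\cup Z_k$. Because each $Z_j$ has at most two elements, this support grows by at most one site per step, and only $Z_j$'s which straddle the current effective boundary actually enlarge it. As long as $k<d(S_1,S_2)$ the support remains disjoint from $S_1$, so the innermost commutator $[A,\cdot]$ vanishes and the contribution drops out. Combining the elementary bound $\|[h_Z,X]\|\le 2M\|X\|$ with an inductive count of the valid multi-indices (which, owing to the one-dimensional structure, grows at a rate controlled by $|\partial S_1|$ and $|\partial S_2|$), one obtains an estimate of the form
\[
\|F(t)\| \;\le\; 2\|A\|\|B\|\cdot C_0(|\partial S_1|,|\partial S_2|)\sum_{k\ge d(S_1,S_2)}\frac{(cM|t|)^k}{k!},
\]
for an absolute constant $c>0$ and a combinatorial factor $C_0$. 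Controlling the tail via Stirling (equivalently, a large-deviation estimate for the Poisson distribution) and choosing $v>ecM$ together with a sufficiently small $\eta>0$ then yields \eqref{eq:lr-spec} with the advertised uniform constants.

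The main obstacle is the combinatorial bookkeeping in the iteration: one must verify carefully that (i) the effective support of the nested commutator grows by at most one site per step and stays localized near the moving boundary, and (ii) the count of valid multi-indices of length $k$ is absorbed into $C_0$ without polluting the uniformity of $v$ or $\eta$. This accounting is standard but delicate; in essentially the form stated here it appears in Nachtergaele--Sims \cite{NS09}, which can also be invoked directly.
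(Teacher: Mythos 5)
Your first two steps are sound: the identity $F'(t)=i[H^{(\Lambda)},F(t)]-i[\tau_t^{(\Lambda)}(A),[H^{(\Lambda)},B]]$ is correct, $F(0)=[A,B]=0$ by disjointness of the supports, and the Duhamel/norm-preservation bound with the sum restricted to $Z\cap S_2\neq\emptyset$ is legitimate. The gap is in the third step, and it is exactly at the points (i) and (ii) you flag as ``delicate bookkeeping.'' Because you keep $A$ fixed and iterate on the $B$-side, the object you feed back into the inequality is the nested commutator $[h_{Z_j},\ldots,[h_{Z_1},B]\ldots]$, whose support is all of $S_2\cup Z_1\cup\cdots\cup Z_j$; it does not stay localized near a moving boundary, and interaction terms $Z_{j+1}$ lying in the interior of $S_2$ give nonvanishing commutators that cannot be discarded (they merely fail to enlarge the support, which does not help). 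Consequently the number of admissible choices at step $j+1$ is of order $\lvert S_2\rvert+j$, and the number of multi-indices of length $k$ grows like $3^k(\lvert S_2\rvert+k-1)!/(\lvert S_2\rvert-1)!$, i.e.\ factorially in $k$. This cancels the $1/k!$ coming from the nested time integrals, so what your scheme actually produces is a geometric series $\sum_{k\ge d(S_1,S_2)}(cM\lvert t\rvert)^k$ with a prefactor exponential in $\lvert S_2\rvert$: a bound valid only for small $\lvert t\rvert$, with a constant depending on $\lvert S_2\rvert$ rather than only on the interaction boundaries, and with no linear-in-$t$ light cone. The advertised estimate $C_0(\lvert\partial S_1\rvert,\lvert\partial S_2\rvert)\sum_{k\ge d}(cM\lvert t\rvert)^k/k!$ does not follow from the iteration as you have set it up.

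The classical proof avoids this by iterating on the other side. Writing $H_{S_1}=\sum_{Z\cap S_1\neq\emptyset}h_Z$, the first Duhamel step bounds $\lVert F(t)\rVert$ by $2\lVert A\rVert\int_0^{\lvert t\rvert}\sum_{Z\cap S_1\neq\emptyset}\lVert[\tau_s^{(\Lambda)}(h_Z),B]\rVert\,ds$, and one then iterates the quantity $C_B(Z,s)=\sup\{\lVert[\tau_s^{(\Lambda)}(O),B]\rVert/\lVert O\rVert : O\in\mathcal{A}_Z\}$, where $Z$ is always the support of a single interaction term (at most two sites): the evolving observable is \emph{replaced} by $h_Z$ at each step rather than accumulating commutators, so the per-step branching is an absolute constant, the surviving walks must run from $S_1$ to $S_2$, and one genuinely obtains $\sum_{k\ge d(S_1,S_2)}(cM\lvert t\rvert)^k/k!$ with a boundary-controlled prefactor, hence \eqref{eq:lr-spec} by Stirling. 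Note finally that the paper gives no proof of Theorem~\ref{thm:LRintro} at all; it is quoted from \cite{LR72} and \cite{NS09}, so your closing remark that \cite{NS09} can be invoked directly is precisely what the paper does — but the sketch you give is not a correct substitute unless you switch to the one-sided iteration just described.
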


\begin{remark}
If we restrict to observables with $\max S_1 < \min S_2$, the constant $C$ becomes uniform, since we can always replace $S_1, S_2$ by their convex hulls (intersected with $\mathbb{Z}$), after which the supports are still disjoint and have interaction boundaries of at most $2$ points each.
\end{remark}

The Lieb-Robinson bound \eqref{eq:lr-general} is by its very nature an upper bound and hence bounds transport from above. Lower transport bounds to complement such a statement are naturally of interest as well. One may capture such a complementary result by proving that certain upper bounds fail to hold or, put differently, by showing that in the Lieb-Robinson bound, which always holds for some constants due to Theorem~\ref{thm:LRintro}, the constants must be subject to some restrictions.

The most interesting constant to restrict is the velocity $v$. The Lieb-Robinson bound \eqref{eq:lr-general} with a positive velocity $v$ is a ballistic upper bound. Upper bounds for $v$ hold in general; compare for example \cite{Aizenman2012}. In some cases one expects genuine ballistic transport, and hence the natural result to establish in these cases is that there is some strictly positive lower bound for $v$ in any Lieb-Robinson bound that holds in the given context. Related to this, the absence of ballistic transport is related to the ability to make $v$ in the Lieb-Robinson bound arbitrarily small.

Periodic models are generally expected to lead to ballistic transport. This is suggested, in particular, by the (expectation of) ballistic transport for the associated (via the aforementioned Lieb-Schultz-Mattis ansatz) one particle Hamiltonian. Perhaps surprisingly, this statement has not yet been rigorously established in great generality (the case of the isotropic constant-coefficient XY model is discussed in Example~6.2.14B in \cite{BR97}). In particular, to the best of our knowledge, in the case of the anisotropic XY chain on $\mathbb{Z}$ considered here, this has not yet been addressed, even though the result is widely expected to hold among experts. This is one of our primary sources of motivation for writing this paper. Indeed we will address precisely this issue of proving a ballistic lower transport bound in the periodic case. Specifically, we prove

\begin{thm}\label{thm:main-1}
For the anisotropic XY chain on $\mathbb{Z}$ described above, with periodic %$\set{\mu_j}$, $\set{\gamma_j}$ and $\set{\nu_j}$ with
sequences of $\mu_j \in \mathbb{R}\setminus \{0\}$, $\gamma_j \in \mathbb{R} \setminus\{\pm1\}$, $\nu_j \in \mathbb{R}$, there exists a $v_0>0$ such that if the Lieb-Robinson bound \eqref{eq:lr-spec} holds for some $C, \eta, v$ in the sense of Theorem~\ref{thm:LRintro}, then $v\geq v_0$.
\end{thm}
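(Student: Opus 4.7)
The plan is to reduce the many-body Lieb-Robinson estimate to a single-particle transport statement via the Jordan-Wigner transformation, and then conclude from a ballistic lower bound for the resulting periodic block Jacobi matrix, proved by Floquet-Bloch analysis.

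First, I would set up the Jordan-Wigner / Lieb-Schultz-Mattis reduction. Written in Nambu variables, this recasts $H^{(\Lambda)}$ as a quadratic Bogoliubov-de Gennes Hamiltonian whose one-particle generator is a self-adjoint block Jacobi matrix $J$ on $\ell^2(\mathbb{Z};\mathbb{C}^2)$, with $2\times 2$ diagonal blocks built from $\nu_j$ and $\gamma_j$ and off-diagonal blocks built from $\mu_j$ and $\gamma_j$. Under the standing hypotheses $\mu_j\neq 0$ and $\gamma_j\neq \pm 1$, the off-diagonal blocks are invertible, so $J$ is a genuinely non-degenerate block Jacobi matrix; since the three parameter sequences are periodic with a common period $p$, $J$ is $p$-periodic as a block operator.

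Running the Hamza-Sims-Stolz mechanism from \cite{HSS12} in the contrapositive direction, I would then relate the commutator norm in \eqref{eq:lr-spec} to matrix elements of the one-particle unitary $e^{-itJ}$. Evaluating \eqref{eq:lr-spec} on single-site observables such as $A=\sigma_{j_1}^{(x)}$, $B=\sigma_{j_2}^{(y)}$, and unwinding the Jordan-Wigner strings (and passing $\Lambda\to\mathbb{Z}$), one rewrites the left-hand side as a finite linear combination of block matrix entries $\langle \delta_{j_1}, e^{-itJ}\delta_{j_2}\rangle$ on $\ell^2(\mathbb{Z};\mathbb{C}^2)$. Consequently, any Lieb-Robinson bound with velocity $v$ forces the corresponding exponential bound, with the \emph{same} $v$, on the block matrix elements of $e^{-itJ}$ outside the light cone $\abs{j_1-j_2}\leq v\abs{t}$.

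The concluding step is to show that the periodic block Jacobi matrix $J$ admits genuine ballistic transport, i.e.\ that there is a constant $v_0>0$ such that no such exponential bound can hold for any $v<v_0$. I would obtain this from Floquet-Bloch diagonalization of $J$: the fibered operators over the Brillouin zone produce continuous band functions $E_\ell(k)$, and a stationary-phase / group-velocity argument shows that for each $\ell$ the mass of $e^{-itJ}\delta_0$ is concentrated along the cone $\abs{n}\sim \abs{E_\ell'(k)}\abs{t}$. Taking $v_0:=\max_{\ell,k}\abs{E_\ell'(k)}$ (or any strictly smaller positive number) gives the desired $v\geq v_0$ and completes the proof.

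The main obstacle is verifying that $v_0>0$, i.e.\ that at least one band function is not identically constant. Flat bands cannot occur for scalar periodic Jacobi matrices, but in the block setting they are a genuine possibility that must be excluded. The assumption $\gamma_j\neq \pm 1$ is precisely what rules this out: it keeps the one-period monodromy matrix a non-degenerate symplectic $4\times 4$ transfer matrix depending analytically and non-constantly on the spectral parameter, so its Floquet discriminant is a non-constant analytic function whose level sets produce non-degenerate bands. Making this monodromy analysis rigorous under the stated hypotheses is where the bulk of the technical work lies, and is the content most naturally packaged as a general result on ballistic transport for periodic block Jacobi matrices, to be invoked here as a black box.
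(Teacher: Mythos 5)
Your overall architecture coincides with the paper's: Jordan--Wigner reduction to a periodic block Jacobi matrix, transfer of \eqref{eq:lr-spec} to decay of the one-particle propagator, and a Floquet--Bloch ballistic lower bound with $v_0$ the maximal group velocity, positive because invertibility of the off-diagonal blocks (equivalently $\mu_j\neq 0$, $\gamma_j\neq\pm1$) rules out flat bands. However, your middle step has a genuine gap. With $A=\sigma_{j_1}^{(x)}$, $B=\sigma_{j_2}^{(y)}$, the evolution $\tau_t^{(\Lambda)}(A)$ is \emph{not} a finite linear combination of one-particle propagator entries: writing $\sigma_{j_1}^{(x)}=\bigl(\prod_{i<j_1}\sigma_i^{(z)}\bigr)(c_{j_1}+c_{j_1}^*)$, the Jordan--Wigner string is quadratic in the fermions, so its Heisenberg evolution is a product of quadratic expressions and ``unwinding the strings'' does not linearize the commutator. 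Moreover, what you need is not the Hamza--Sims--Stolz estimate read in reverse (that is an \emph{upper} bound on the commutator by propagator entries) but the opposite inequality: a \emph{lower} bound on $\mathbf{P}_t^{(\Lambda)}(A,B)$ by a single entry of $e^{-itM^{(\Lambda)}}$, with no cancellations among the several entries that appear. The paper secures this in Lemma~\ref{thm:propagator-lower} by taking the string-dressed fermionic observables themselves, e.g. $(A,B)=(c_l,a_r^*)$ --- for which $\tau_t^{(\Lambda)}(c_l)$ \emph{is} linear in the $c$'s --- and evaluating the commutator on the product vector $\bigotimes\binom{1}{0}$ (or $\bigotimes\binom{0}{1}$), on which the strings act trivially and all but one term is annihilated, isolating $\lvert M_{\tilde l,\tilde r}^{(\Lambda)}(t)\rvert$. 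Without some such device, your claim that a Lieb--Robinson bound with velocity $v$ forces exponential off-cone decay of $\langle\delta_n,e^{-itJ}\delta_k\rangle$ is unjustified.

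Two lesser points. ``Mass of $e^{-itJ}\delta_0$ concentrated along the cone'' is not by itself enough: to contradict an exponential bound you need a quantitative lower bound on an \emph{individual} matrix element at distance of order $v_0 t$; the paper extracts $\lvert\langle\delta_n,e^{-iTJ}\delta_k\rangle\rvert^2\geq\tilde C/T$ from Theorem~\ref{Tballistic}(b) by a pigeonhole argument (Corollary~\ref{cor:lr}), and your stationary-phase route would deliver the same only once this step is made explicit. Also, your monodromy/discriminant discussion is more than is needed: the paper excludes flat bands directly by observing that a fixed $\lambda$ can be an eigenvalue of at most $2m$ of the fibers $J_\theta$, since once the off-diagonal blocks are invertible a generalized eigenfunction is determined by two consecutive blocks; non-constancy of all band functions then follows from analyticity.
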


\begin{remark}
As will be described later, $v_0$ is a natural quantity related to the block Jacobi matrix corresponding to this system: it represents the maximum velocity of ballistic transport of a particle in the time evolution given by the block Jacobi matrix.
\end{remark}

As discussed above, the ability to bound the velocity $v$ away from zero corresponds to ballistic transport, while the ability to take $v > 0$ arbitrarily small in a Lieb-Robinson bound corresponds to sub-ballistic transport. Sometimes one can take this one step further and prove a Lieb-Robinson bound with $v = 0$. For example, this was accomplished by Hamza, Sims, Stolz in the case of the isotropic XY chain in random exterior field \cite{HSS12}. To establish such a result they had to show strong dynamical localization for the associated one particle Hamiltonian (the notion of dynamical localization is recalled in more detail in Section \ref{sec:remarks}). Indeed, in the cases considered in \cite{HSS12}, the one particle Hamiltonian turns out to be a random tridiagonal operator with constant off-diagonal entries. With suitable assumptions on the underlying model, the diagonal entries of the one-particle Hamiltonian are shown to be suitably distributed, so that localization follows from the known results in the theory of Anderson localization (for further details, see Section 4.1 in \cite{HSS12}). In general, however, the resulting one particle Hamiltonian is block-diagonal. Localization of random block operators arising in such a way is discussed in \cite{Chapman2015+}, while the fact that localization of the one-particle Hamiltonian implies zero velocity Lieb-Robinsom bound is one of the main results of \cite{HSS12}.

Here we establish the converse of this result and show the following statement (see Theorem~\ref{thm:hss-converse} for the precise meaning of the notion of uniform dynamical localization).

\begin{thm}\label{thm:main-2}
Given a one-dimensional quantum spin chain, if the Lieb-Robinson bound \eqref{eq:lr-spec} holds with $v = 0$, then the associated one particle Hamiltonian exhibits uniform dynamical localization.
\end{thm}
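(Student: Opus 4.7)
The plan is to reverse the Hamza--Sims--Stolz correspondence between commutators $\|[\tau_t(A), B]\|$ and matrix elements of the one-particle evolution $e^{-itH}$ via the Jordan--Wigner transformation. First I would recall this transformation, introducing Majorana operators $a_j = \bigl(\prod_{k<j} \sigma_k^{(z)}\bigr)\sigma_j^{(x)}$ and $b_j = \bigl(\prod_{k<j} \sigma_k^{(z)}\bigr)\sigma_j^{(y)}$, so that the XY Hamiltonian \eqref{eq:haml} becomes a quadratic form in $\{a_j, b_j\}$. The associated one-particle Hamiltonian $H$ is the block Jacobi matrix mentioned in the introduction, and the Heisenberg evolution $\tau_t$ acts linearly on Majoranas via $\tau_t(a_i) = \sum_k \bigl[U_{ik}(t)\, a_k + V_{ik}(t)\, b_k\bigr]$, where the coefficients are assembled from $e^{-itH}$.

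The key algebraic observation is that the canonical anticommutation relations $\{a_k, a_l\} = 2\delta_{kl}\mathbb{I}$, $\{a_k, b_l\} = 0$, $\{b_k, b_l\} = 2\delta_{kl}\mathbb{I}$ yield the operator identities $\{\tau_t(a_i), a_j\} = 2 U_{ij}(t)\mathbb{I}$ and $\{\tau_t(a_i), b_j\} = 2 V_{ij}(t)\mathbb{I}$, so individual matrix elements of $e^{-itH}$ appear as scalars in anticommutators of Majoranas. To access these through the Lieb--Robinson bound (which controls commutators of \emph{local} spin observables), I would work with quadratic-in-Majorana local operators such as $\sigma_i^{(y)}\sigma_{i+1}^{(x)}$, $\sigma_i^{(x)}\sigma_{i+1}^{(y)}$, and $\sigma_i^{(z)}$; each of these equals, up to a unimodular scalar, a bilinear $a_p a_q$, $a_p b_q$, or $b_p b_q$ at sites near $i$. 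Since the XY evolution preserves the class of Majorana bilinears, the commutator $[\tau_t(A), B]$ of two such local observables expands into a finite sum of matrix-element products. Applying \eqref{eq:lr-spec} with $v = 0$ would then give uniform-in-$t$ exponential decay in $|i-j|$ of these combinations, from which the bounds $\sup_t |U_{ij}(t)| + \sup_t |V_{ij}(t)| \leq C e^{-\eta |i-j|}$ should follow --- precisely the content of uniform dynamical localization for $H$.

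The principal obstacle is the extraction step: a single commutator of Majorana bilinears gives a bound on a sum of products of matrix elements, not on an individual entry. Isolating $U_{ij}(t)$ or $V_{ij}(t)$ will require either an algebraic identity producing a clean scalar (analogous to $\{\tau_t(a_i), a_j\} = 2U_{ij}(t)\mathbb{I}$, but expressed as a commutator of local observables) or a finite-dimensional linear inversion using several choices of $A, B$ with supports near $i$ and $j$. A secondary bookkeeping difficulty is ensuring that the non-local Jordan--Wigner strings in $a_j, b_j$ cancel when multiplying to form local bilinears; the even-parity structure of the quadratic-in-Majorana observables guarantees this cancellation, but signs and orderings must be tracked carefully. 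Once the uniform matrix-element bound is in hand, the theorem follows directly from the precise definition of uniform dynamical localization recalled in Section~\ref{sec:remarks}.
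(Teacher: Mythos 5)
There is a genuine gap, and you have flagged it yourself without resolving it: the extraction step. From the commutator of two Majorana bilinears you only obtain control on a sum of four terms of the form (matrix element of $e^{-itM}$) times (a bilinear in one evolved and one unevolved Majorana); the norm of such a sum can be small due to cancellations, so the $v=0$ bound \eqref{eq:lr-spec} does not directly yield decay of any individual entry $U_{ij}(t)$ or $V_{ij}(t)$. Neither of your two suggested remedies is carried out, and the ``finite-dimensional linear inversion'' is not obviously available, because the four terms multiply \emph{operators} (evolved Majoranas) that you have no separate control over, not linearly independent scalars. So as written the argument sets up the correct Jordan--Wigner framework (the same one the paper uses) but stops exactly where the theorem's content begins.

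The paper closes this gap with two moves you did not make (Lemma \ref{thm:propagator-lower}). First, it does not restrict to parity-even bilinear observables: it takes $A=c_l$, the Jordan--Wigner fermion itself. Its string extends only to the left, so $c_l$ is supported on $[m,l]$, and the pair $(A,B)=(c_l,a_r^*)$, with $a_r^*$ the single-site raising operator at $r>l$, lies in the class $\mathcal{D}^{(\Lambda)}$ of non-interlacing supports at distance $r-l$, to which the Lieb--Robinson bound of Theorem \ref{thm:ns} applies with uniform constants. Because $A$ is \emph{linear} in the fermions, $[\tau_t^{(\Lambda)}(c_l),a_r^*]=\sum_j M^{(\Lambda)}_{\tilde l,j}(t)\,[\mathcal{C}^{(j)},a_r^*]$ is a linear, not quadratic, combination. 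Second, the isolation of a single entry is achieved not by an operator identity but by evaluating this commutator on the product vector $v=\bigotimes_\Lambda \left(\begin{smallmatrix}1\\0\end{smallmatrix}\right)$ (all spins up; all spins down for the other index parities): since $a_r^*v=0$ and $\sigma_i^{(z)}v=v$, every term with $j\neq\tilde r$ annihilates $v$, leaving $\mathbf{P}_t^{(\Lambda)}(c_l,a_r^*)\ge \lvert M^{(\Lambda)}_{\tilde l,\tilde r}(t)\rvert$. Plugging the $v=0$ Lieb--Robinson bound into this inequality and letting $\Lambda\to\mathbb{Z}$ gives $\sup_t\lvert M^{(\infty)}_{l,r}(t)\rvert\le Ce^{-\eta(r-l)}$, which is uniform dynamical localization (Theorem \ref{thm:hss-converse}). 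If you wish to rescue your bilinear route, the analogous missing idea is to test the commutator against suitably chosen product states to kill the unwanted terms; without that (or the paper's linear-observable choice), the proposal does not prove the statement.
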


By this result, we see that a zero-velocity Lieb-Robinson bound holds only for very special models as one particle Hamiltonians exhibiting uniform dynamical localization are exceedingly rare. Consequently, the entire regime between sub-ballistic transport and the failure of uniform dynamical localization hides in the seemingly innocuous distinction between the ability to take $v > 0$ arbitrarily small and the ability to take $v = 0$ in a Lieb-Robinson bound.

\bigskip

Returning the the periodic case of primary interest here, the statement of Theorem \ref{thm:main-1} is naturally related to an appropriate ballistic transport statement for the associated one particle Hamiltonian, which is a periodic block Jacobi matrix. Since the appropriate ballistic transport statement for such periodic block Jacobi matrices is also not yet known, we will have to address this issue as well. Thus, the heart of our paper is a comprehensive discussion of ballistic transport for periodic block Jacobi matrices.

Block Jacobi matrices are operators $J : \ell^2(\mathbb{Z})^m \to \ell^2(\mathbb{Z})^m$ of the form
\[
(Ju)_n = a_{n-1}^* u_{n-1} + b_n u_n + a_n u_{n+1}
\]
where $a_n$ and $b_n$ are $m\times m$ complex matrices with $\det a_n \neq 0$ and $b_n^* = b_n$ (so each $u_n \in \mathbb{R}^m$); each $u\in \ell^2(\mathbb{Z})^m$ is viewed as an $\ell^2$ sequence of elements $u_n \in \mathbb{C}^m$. We will assume that $J$ is $q$-periodic, i.e.
\begin{equation}\label{ml05}
a_{n+q} = a_n, \quad b_{n+q} = b_n, \quad \forall n\in\mathbb{Z}.
\end{equation}
For $\psi \in \ell^2(\mathbb{Z})^m$, we are concerned with the time-evolution of
\[
\psi(t) = e^{-itJ} \psi.
\]
Long-standing folklore states that for periodic $J$, $\psi(t)$ should propagate to infinity linearly in time, which is referred to as ballistic motion. The mathematical literature usually describes wave-packet spreading by characterizing the behavior of the $p$-th moments of $\lvert X \rvert$ for $p > 0$ for suitable initial states $\psi$.\footnote{One needs to at least assume finite moments at time zero. If one wants to discuss all moments $p > 0$, one therefore needs to assume at least super-polynomial decay. For simplicity we will focus on exponentially decaying $\psi$ in our discussion of transport exponents.} This is stated in terms of transport exponents
\begin{align*}
\beta_\psi^+(p) & = \limsup_{t \to \infty} \frac{\log \langle \psi(t), \lvert X\rvert^p \psi(t)\rangle }{p \log t}, \\
\beta_\psi^-(p) & = \liminf_{t \to \infty} \frac{\log \langle \psi(t), \lvert X\rvert^p \psi(t)\rangle }{p \log t},
\end{align*}
and their time-averaged counterparts. These exponents are non-decreasing in $p$ and take values in the interval $[0,1]$. Ballistic motion corresponds to exponents being equal to $1$, diffusive transport corresponds to the value $1/2$, while the exponents are $0$ when there is no transport present that is detectable on a power-scale, and in particular when dynamical localization takes place. We refer the reader to \cite{DamanikTcheremchantsev10} for general background.

In this paper, we prove a convergence statement in Hilbert space without any time averaging; this is a stronger characterization of ballistic motion and, as we will see, implies that all transport exponents are equal to 1. It is an extension of work of Asch--Knauf~\cite{AschKnauf98} for Schr\"odinger operators.

We denote by $X$ the (unbounded self-adjoint) operator with
\begin{equation}\label{ml03}
(Xu)_n = n u_n
\end{equation}
and domain
\begin{equation}\label{ml04}
D(X) = \Bigl\{ u \in \ell^2(\mathbb{Z})^m  \Big\vert \sum_{n\in \mathbb{Z}} \lvert n u_n \rvert^2  <\infty  \Bigr\},
\end{equation}
and consider its Heisenberg time evolution $X(t) = e^{itJ} X e^{-itJ}$ with the domain $D(X(t)) = \{ u \in \ell^2(\mathbb{Z})^m \mid e^{-itJ} u \in D(X) \}$ (we will see below that $D(X(t)) = D(X)$ for all $t$).

\begin{thm}\label{Tballistic}
Let $J$ be a periodic block Jacobi matrix as described above.
\begin{enumerate}[{\rm (a)}]
\item There is a bounded self-adjoint operator $Q$ with $\Ker Q = \{ 0\}$ such that for any $\psi \in D(X)$,
\begin{equation}\label{ml60}
\lim_{t\to \infty} \frac 1t X(t) \psi  = Q \psi.
\end{equation}
\item If $f:\mathbb{R} \to \mathbb{R}$ is a bounded continuous function, then
\[
\lim_{t\to \infty} f\left(\frac 1t X(t)\right) \psi = f(Q) \psi
\]
for any $\psi \in \ell^2(\mathbb{Z})^m$.
\item If $\psi \in \ell^2(\mathbb{Z})^m$ decays exponentially,
then for any $p > 0$, $\beta_\psi^\pm(p) = 1$.
\end{enumerate}
\end{thm}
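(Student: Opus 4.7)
The plan is to adapt the Floquet--Bloch argument of Asch--Knauf to the block setting. First I would set up the Floquet transform $U: \ell^2(\mathbb{Z})^m \to L^2(\mathbb{T}; \mathbb{C}^{mq})$ (with $\mathbb{T} = [0,2\pi/q)$) intertwining $J$ with multiplication by an $mq \times mq$ Hermitian real-analytic matrix $\widehat J(\theta)$. Under $U$, the position operator $X$ becomes $i\partial_\theta + D$, where $D$ is the bounded diagonal matrix recording position within a period cell; correspondingly $D(X)$ is mapped onto $H^1(\mathbb{T}; \mathbb{C}^{mq})$. For $\psi \in D(X)$ I would then apply the Duhamel identity
\[
\partial_\theta e^{-it\widehat J(\theta)} = -i \int_0^t e^{-i(t-s)\widehat J(\theta)}\bigl(\partial_\theta \widehat J(\theta)\bigr) e^{-is\widehat J(\theta)}\,ds
\]
and cancel the outer $e^{it\widehat J(\theta)}$. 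After dividing by $t$, the contributions from $D$ and from $\partial_\theta \widehat\psi$ are $O(1/t)$ in $L^2$, and the surviving term is the Cesàro average
\[
\frac{1}{t}\int_0^t e^{is\widehat J(\theta)}\bigl(\partial_\theta \widehat J(\theta)\bigr)e^{-is\widehat J(\theta)}\,ds \cdot \widehat\psi(\theta).
\]

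Fiberwise, this is a mean ergodic average of a bounded matrix under conjugation by a finite-dimensional unitary group, hence converges in operator norm to the orthogonal projection of $\partial_\theta \widehat J(\theta)$ onto the commutant of $\widehat J(\theta)$. By Feynman--Hellmann, that projection equals $\widehat Q(\theta) := \sum_j E_j'(\theta) P_j(\theta)$, where $E_j(\theta)$ are the band functions of $\widehat J(\theta)$ and $P_j(\theta)$ the corresponding spectral projections. Since $\|\partial_\theta \widehat J(\theta)\|$ is uniformly bounded on $\mathbb{T}$, dominated convergence promotes the fiberwise convergence to $L^2$-convergence, yielding $\tfrac{1}{t}X(t)\psi \to Q\psi$ with $Q$ bounded self-adjoint. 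This proves (a) modulo the kernel claim.

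The main obstacle is showing $\Ker Q = \{0\}$, equivalently $\widehat Q(\theta)$ invertible for a.e.\ $\theta$. Each $E_j(\theta)$ is continuous on $\mathbb{T}$ and real-analytic away from a discrete set of eigenvalue crossings; if any $E_j$ were constant on an interval, analytic continuation would force a flat band, which is incompatible with the transfer-matrix formalism for periodic block Jacobi matrices with $\det a_n \neq 0$ (since the $2m \times 2m$ monodromy is invertible, so no energy can support more than $2m$ linearly independent solutions). Thus $E_j'$ vanishes on at most a discrete set for each $j$, making $\widehat Q(\theta)$ invertible almost everywhere. I expect this to be the most delicate step and the one needing the most care, since unlike the scalar Schrödinger case band degeneracies are genuinely possible and must be handled using analyticity of the characteristic polynomial of $\widehat J(\theta)$.

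For part (b), from (a) I have $\tfrac{1}{t}X(t)\to Q$ strongly on the dense subspace $D(X)$; combined with the uniform resolvent bound $\|(\tfrac{1}{t}X(t)-\lambda)^{-1}\|\leq |\mathrm{Im}\,\lambda|^{-1}$, a standard density argument gives strong resolvent convergence to $(Q-\lambda)^{-1}$ on all of $\ell^2(\mathbb{Z})^m$, and then Stone--Weierstrass extends this to $f(\tfrac{1}{t}X(t))\psi \to f(Q)\psi$ for every bounded continuous $f$. For (c), part (a) directly gives $\langle \psi(t), |X|^2 \psi(t)\rangle = \|X(t)\psi\|^2 \sim t^2 \|Q\psi\|^2$, so $\beta_\psi^\pm(2)=1$. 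For general $p>0$, I would apply (b) with the truncation $f_R(x)=\min(|x|,R)^p$ to obtain $\langle \psi(t), f_R(X/t)\psi(t)\rangle \to \langle \psi, f_R(Q)\psi\rangle$, and then remove the truncation using a Combes--Thomas ballistic upper bound: for exponentially decaying $\psi$, $\|\chi_{|X|>vt}\,e^{-itJ}\psi\|$ decays exponentially for $v$ above the maximal group velocity, so the tail contribution to $\langle\psi(t),|X|^p\psi(t)\rangle/t^p$ is negligible. Passing to the limit then yields $\langle\psi(t),|X|^p\psi(t)\rangle/t^p \to \langle\psi, |Q|^p\psi\rangle > 0$, which together with the trivial upper bound $\beta_\psi^+(p)\leq 1$ proves (c).
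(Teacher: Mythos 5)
Your proposal is correct and takes essentially the same route as the paper: Floquet decomposition into $mq\times mq$ fibers, a Ces\`aro average of the velocity (the paper phrases it via the commutator $A=i[J,X]$ and the identity $X(T)\psi = X\psi+\int_0^T A(t)\psi\,dt$, you phrase it fiberwise via Duhamel for $\partial_\theta e^{-it\widehat J(\theta)}$, which is the same computation in a different representation), the Feynman--Hellmann identification of the limit fiber with $\sum_j E_j'(\theta)P_j(\theta)$, triviality of $\Ker Q$ from the absence of flat bands via the $2m$-dimensional solution space of the recurrence (using $\det a_n\neq 0$), strong resolvent convergence for (b), and truncation $\min(\lvert x\rvert,C)^p$ for (c). The only inessential difference is in (c): the Combes--Thomas tail estimate you invoke is not needed, since the paper gets the lower bound on the $\liminf$ directly from the monotone inequality $\lvert x\rvert^p \ge \min(\lvert x\rvert, C)^p$ together with $\Ker Q=\{0\}$, and the upper bound $\beta^\pm_\psi(p)\le 1$ is quoted from Damanik--Tcheremchantsev.
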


The statement in part (c) has been widely expected to hold in the community and could even be considered a folklore ``result.'' However, to the best of our knowledge this is the first time it is actually proved rigorously, even in the scalar case, $m = 1$. Theorem~\ref{Tballistic} is the key behind the lower bound for the Lieb-Robinson velocity in Theorem~\ref{thm:main-1}. The proof of Theorem~\ref{Tballistic} is given in Section~\ref{sec:ballistic}; and then Theorem~\ref{thm:main-1} is established in Section~\ref{s.xychain}.

As another application of these results, we turn to limit-periodic Schr\"odinger operators, which are operators on $\ell^2(\mathbb{Z})$ of the form $H = \Delta+V$, where $\Delta$ is the discrete Laplacian and $V$ is multiplication by a real-valued limit-periodic sequence, also denoted $V$. Limit-periodic Schr\"odinger operators have been studied extensively and it is known that they can display very diverse spectral properties. Many of those results are stated in terms of denseness or genericity of some spectral property in some space of limit-periodic Schr\"odinger operators; in the usual settings, purely absolutely continuous spectrum is dense \cite{AvronSimon82,DamanikGan11}, purely singular continuous spectrum is generic \cite{Avila09,DamanikGan10}, and purely pure point spectrum is possible \cite{Poschel83,DamanikGan11b}. Here we investigate transport properties of such operators in the same terms. Most notably, we prove generic quasi-ballistic transport for rather general initial states.

\begin{thm}\label{Tgeneric}
For a generic limit-periodic $V$ in the space of all limit-periodic sequences with the $\ell^\infty$ metric, for the discrete Schr\"odinger operator $H=\Delta+V$ we have
\[
\beta^+_\psi(p) = 1
\]
for all $p>0$ and all exponentially decaying $\psi \in \ell^2(\mathbb{Z})\setminus\{0\}$.
\end{thm}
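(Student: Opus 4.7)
My plan is to prove this by a standard Baire category argument, feeding Theorem~\ref{Tballistic}(c) on periodic approximants into a density step. The ambient space of limit-periodic sequences with the $\ell^\infty$ metric is complete (it is closed in $\ell^\infty$), and by the very definition of ``limit-periodic'' the periodic sequences form a dense subset. For any periodic $V_0$, the Schr\"odinger operator $\Delta + V_0$ is a scalar periodic Jacobi matrix (the block-Jacobi setup with $m = 1$ and $a_n \equiv 1$), so Theorem~\ref{Tballistic}(c) immediately supplies $\beta^\pm_\psi(p) = 1$ for every exponentially decaying $\psi$ and every $p>0$. The matching upper bound $\beta^+_\psi(p) \leq 1$ is automatic for $\Delta + V$ since $\|[X, \Delta+V]\| \leq 2$ gives $\|X(t) - X\| \leq 2|t|$ by Duhamel, and a Combes-Thomas/Lieb-Robinson concentration estimate shows that $\psi(t)$ is essentially supported in a window of radius $O(t)$, so $\langle \psi(t), |X|^p \psi(t)\rangle = O(t^p)$.

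Next I would fix a countable family $\mathcal{F}$ dense among exponentially decaying sequences in $\ell^2(\mathbb{Z})$ (e.g.\ finite combinations of $\delta_n$ with rational complex coefficients) and, for each $\psi \in \mathcal{F}$, each $p \in \mathbb{Q}_{>0}$, and each $n, T \in \mathbb{N}$, consider
\[
U_{\psi, p, n, T} = \bigl\{ V : \exists\, t > T,\ \langle e^{-itH(V)}\psi, |X|^p e^{-itH(V)}\psi\rangle > t^{p(1 - 1/n)} \bigr\}.
\]
Each such set is open in $\ell^\infty$: for fixed $\psi$, $p$, and $t$, the map $V \mapsto \langle \psi_V(t), |X|^p \psi_V(t)\rangle$ is continuous, because $V \mapsto e^{-itH(V)}$ is norm continuous on bounded balls and a uniform Combes-Thomas bound on $|(\psi_V(t))_k|$ justifies dominated convergence in $\sum |k|^p |(\psi_V(t))_k|^2$; the set $U_{\psi,p,n,T}$ is then a union over $t$ of such open conditions. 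Every periodic $V_0$ lies in each $U_{\psi,p,n,T}$ by the previous paragraph, so each is open and dense, and Baire yields that $G = \bigcap_{\psi \in \mathcal{F},\, p, n, T} U_{\psi, p, n, T}$ is a dense $G_\delta$ in the space of limit-periodic sequences.

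For $V \in G$ and $\psi \in \mathcal{F}$, sending $n \to \infty$ gives $\beta^+_\psi(p) \geq 1$, hence equality by the upper bound, for every rational $p>0$; Jensen monotonicity of $p \mapsto \beta^+_\psi(p)$ propagates equality to all real $p > 0$. The step I expect to be the main technical obstacle is upgrading from $\psi \in \mathcal{F}$ to an arbitrary exponentially decaying $\psi$. The plan here is to approximate $\psi$ by $\tilde\psi^{(k)} \in \mathcal{F}$ with $\|\psi - \tilde\psi^{(k)}\|_2 \to 0$ and compact supports, and to compare $\||X|^{p/2}\psi(t)\|$ with $\||X|^{p/2}\tilde\psi^{(k)}(t)\|$ at the (sparse) times $t$ witnessing the ballistic limsup for $\tilde\psi^{(k)}$, using a quantitative Lieb-Robinson bound (uniform over $V$ in bounded balls of $\ell^\infty$) to control $\||X|^{p/2}(\psi - \tilde\psi^{(k)})(t)\| \lesssim (1+t)^{p/2}\|\psi - \tilde\psi^{(k)}\|_2$ and letting $k = k(t)$ grow with $t$ so that this error is subdominant to the main term. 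This coordination of the approximation level with the witnessing times is the delicate part, since Theorem~\ref{Tballistic}(c) only delivers the ballistic behavior as a limsup, not as a uniform-in-$t$ lower bound.
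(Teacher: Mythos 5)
Your overall architecture (Baire category, with density coming from ballistic transport of periodic approximants via Theorem~\ref{Tballistic}) is the same as the paper's, and your open/dense step is fine. The gap is exactly where you suspect it: the upgrade from the countable family $\mathcal{F}$ to an arbitrary exponentially decaying $\psi$ does not work as proposed. At a witnessing time $t$ for $\tilde\psi^{(k)}$ your main term is $t^{p(1-1/n)/2}$ while the perturbation error is of size $(1+t)^{p/2}\lVert \psi-\tilde\psi^{(k)}\rVert_2$ (plus the small initial-moment term), so you need $\lVert\psi-\tilde\psi^{(k)}\rVert_2 \lesssim t^{-p/(2n)}$. But the only information furnished by $V\in G$ is that for each \emph{fixed} $\tilde\psi^{(k)}$ there exists \emph{some} $t>T$; there is no upper bound on that $t$, and $t$ is revealed only after $k$ (i.e.\ the open set) has been chosen. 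The coordination ``$k=k(t)$'' is therefore circular: you cannot choose the approximation accuracy as a function of a witnessing time you do not control, and for any fixed $k$ the error eventually dominates as $t\to\infty$. Trying to repair this by strengthening the threshold in $U_{\psi,p,n,T}$ to a genuinely ballistic bound $c\,t^{p}$ (which would make the error subdominant uniformly in $t$) destroys the density step, since the ballistic constant $\langle\psi,\lvert Q_W\rvert^p\psi\rangle$ from Theorem~\ref{Tballistic} depends on the approximating periodic $W$ and on $\psi$ and can be arbitrarily small, so a fixed $c$ cannot be achieved near every limit-periodic $V_0$.

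The paper circumvents this by making the defining property of the open sets uniform over a whole decay class of initial states rather than quantifying over single vectors: for each $p$ and each $m$, Lemma~\ref{Lest2} produces, for every periodic $W$, a single time $T_{p,m,W}\ge m$ and a radius $\delta_{p,m,W}$ such that the moment bound $\langle\psi(T),\lvert X\rvert^pT\psi(T)\rangle$ exceeding $T^p/\log T$ holds for \emph{all} normalized $\psi$ with $\lvert\psi(n)\rvert\le m e^{-\lvert n\rvert/m}$ and all $V\in B(W,\delta_{p,m,W})$; the uniformity over the class is what the stability Lemma~\ref{Lest} (your continuity step, but proved uniformly in $\psi$ using the common exponential envelope) provides, and the uniformity of the lower bound over the class comes from applying Theorem~\ref{Tballistic}(b) only to the finitely many $\delta_k$ with $\lvert k\rvert<K_m$ together with a uniform tail estimate. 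The generic set is then $G=\bigcap_{k,m}\bigcup_W B(W,\delta_{1/k,m,W})$, and for $V\in G$ any exponentially decaying $\psi$ lies in the class for all large $m$, so $\psi$ \emph{itself} satisfies the bound at the times $T_{1/k,m,W}\ge m\to\infty$ — no terminal approximation argument is needed. If you want to keep your scheme, you should replace your sets $U_{\psi,p,n,T}$ by sets whose membership condition is required simultaneously for all $\psi$ in such a decay class, at which point you must prove the uniform continuity statement of Lemma~\ref{Lest}; that is precisely the missing ingredient.
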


\begin{remark}
As one can infer from the proof, Theorem~\ref{Tgeneric} may also be established in the setting of limit-periodic block Jacobi matrices. We have decided to state and prove it only for discrete Schr\"odinger operators because this is the framework that is most commonly considered, and also because the other results in Section~\ref{Slimper} actually require this more restrictive framework.
\end{remark}

We call this quantum dynamical statement quasi-ballistic transport since the proof establishes ballistic transport only along a subsequence of time scales. Our proof cannot exclude the possibility that $\beta^-_\psi(p) < 1$, or even $\beta^-_\psi(p) = 0$. The latter phenomenon is indeed possible as it is known to occur in a related scenario, namely the super-critical almost Mathieu operator with a very Liouvillean frequency. Indeed in that case, we have $\beta^+_{\delta_0}(p) = 1$ by \cite{Last96} and $\beta^-_{\delta_0}(p) = 0$ by \cite{DamanikTcheremchantsev07}.\footnote{In fact, \cite{DamanikTcheremchantsev07} shows that the time-averaged transport exponents vanish, but this implies the vanishing of the non-time-averaged transport exponents by general principles.} The proof of Theorem~\ref{Tgeneric}, for which Theorem~\ref{Tballistic} is instrumental, can be found in Section~\ref{Slimper}, together with other results on transport (and lack thereof) for limit-periodic Schr\"odinger operators.

\section{Ballistic Transport for Periodic Block Jacobi Matrices}\label{sec:ballistic}

To prove Theorem~\ref{Tballistic}, we will first need some general facts about the time-evolution of $X$ with respect to a bounded block Jacobi matrix. These are given in the following theorem and lemma.

We will define for $N \in \mathbb{N}$ the bounded operators
\[
(X_N u)_n = \begin{cases} -N u_n & n < -N, \\ n u_n & \lvert n \rvert \le N, \\  N u_n & n > N, \end{cases}
\]
and $A_N = i [J, X_N]$,
\[
(A_N u)_n = \begin{cases} - i a_{n-1}^* u_{n-1} + i a_n u_{n+1}   &  \lvert n \rvert \le N-1, \\
- i a_{n-1}^* u_{n-1} & n = N, \\
i a_n u_{n+1}  & n = - N, \\
0 & \lvert n \rvert \ge N+1.
\end{cases}
\]
Note that the $A_N$ converge strongly as $N\to \infty$ to the bounded, self-adjoint operator $A$ given by
\[
(Au)_n =  - i a_{n-1}^* u_{n-1} +  i a_n u_{n+1}.
\]

\begin{thm}\label{Tderivative} Let $J$ be a bounded block Jacobi matrix. For all $T$, $D(X(T)) = D(X)$, and the identity
\begin{equation}\label{ml02}
X(T) \psi = X \psi + \int_0^T A(t)\psi \, dt
\end{equation}
holds for all $\psi \in D(X)$. In particular, for $T>0$,
\[
\lVert X(T) \psi \rVert \le \lVert X \psi \rVert + T \lVert A \rVert \lVert \psi \rVert.
\]
\end{thm}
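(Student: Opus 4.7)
The plan is to derive the identity \eqref{ml02} first for the bounded truncations $X_N$, where standard bounded-operator calculus applies, and then pass to the limit $N \to \infty$. The twist is that convergence of the right-hand side will \emph{force} $e^{-iTJ}\psi$ to lie in $D(X)$, simultaneously delivering the invariance of $D(X)$ under the unitary flow and the integral formula itself.

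Since both $J$ and $X_N$ are bounded, the map $t \mapsto X_N(t) = e^{itJ} X_N e^{-itJ}$ is norm-differentiable with derivative $e^{itJ} i[J,X_N] e^{-itJ} = A_N(t)$, so by the fundamental theorem of calculus
\[
X_N(T)\psi = X_N\psi + \int_0^T A_N(t)\psi\,dt
\]
for every $\psi \in \ell^2(\mathbb{Z})^m$ and every $T \in \mathbb{R}$. Now fix $\psi \in D(X)$ and send $N \to \infty$. On the right-hand side, $X_N\psi \to X\psi$ in $\ell^2$ by dominated convergence, using the pointwise domination $\lvert(X_N\psi)_n\rvert \le \lvert n\rvert\,\lVert\psi_n\rVert$. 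The $A_N$ are uniformly bounded (a common bound $\le 2\sup_n \lVert a_n\rVert$ is immediate from their tridiagonal shape) and converge strongly to $A$; since conjugation by the unitaries $e^{\pm itJ}$ preserves strong convergence, $A_N(t)\psi \to A(t)\psi$ for each $t$, with $\lVert A_N(t)\psi\rVert$ uniformly bounded in $N$ and $t$. A second application of dominated convergence on $[0,T]$ yields $\int_0^T A_N(t)\psi\,dt \to \int_0^T A(t)\psi\,dt$.

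Consequently $X_N(T)\psi = e^{iTJ}(X_N\phi)$, where $\phi := e^{-iTJ}\psi$, converges in $\ell^2$, and hence so does $X_N\phi$. But componentwise $(X_N\phi)_n = n\phi_n$ as soon as $\lvert n\rvert \le N$, which forces the limit of $X_N\phi$ to coincide with the sequence $(n\phi_n)_{n\in\mathbb{Z}}$; in particular, this sequence lies in $\ell^2$, proving $\phi \in D(X)$ and thus $\psi \in D(X(T))$. Identification of limits then produces \eqref{ml02}. Running the same argument with $T$ replaced by $-T$ shows that $e^{iTJ}$ also preserves $D(X)$, and applying this to $e^{-iTJ}\psi$ for $\psi \in D(X(T))$ gives the reverse inclusion $D(X(T)) \subseteq D(X)$. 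The norm estimate is then immediate from \eqref{ml02} combined with $\lVert A(t)\psi\rVert \le \lVert A\rVert\,\lVert\psi\rVert$.

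The main obstacle is precisely the equality $D(X(T)) = D(X)$: because $X$ is unbounded while $J$ is only assumed bounded, there is no a~priori reason for the unitary group $e^{itJ}$ to preserve the domain of the position operator. The truncation device circumvents this by extracting membership $\phi \in D(X)$ as a \emph{consequence} of the convergence of the right-hand side of the $X_N$ identity, which requires only the given datum $\psi \in D(X)$.
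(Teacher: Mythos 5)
Your proposal is correct and follows essentially the same route as the paper: truncation by $X_N$, differentiation and the fundamental theorem of calculus for the bounded operators $X_N(t)$, and dominated convergence for both $X_N\psi$ and $\int_0^T A_N(t)\psi\,dt$. Your final step---identifying the limit of $X_N\phi$ componentwise to conclude $\phi=e^{-iTJ}\psi\in D(X)$, and obtaining the reverse inclusion by running the argument at $-T$---is just a slightly more explicit rendering of the paper's ``double application'' of the convergence/divergence dichotomy for $X_N(T)\psi$ transported by conjugation with $e^{-iTJ}$.
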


\begin{proof}
We begin by noting that $X_N(t)$ is an analytic function of $t$, since $J$ is a bounded self-adjoint operator so $e^{-itJ}$ and $e^{itJ}$ are analytic functions of $t$. Differentiating, we have
\[
\frac{d}{dt} X_N(t) = e^{itJ} (iJ) X_N e^{-itJ} + e^{itJ} X_N (-iJ) e^{-itJ} = e^{itJ} A_N e^{-itJ} = A_N(t).
\]
Integrating in $t$, we conclude that for all $T$,
\[
X_N(T) - X_N = \int_0^T A_N(t) dt.
\]
For any $\psi \in D(X)$, strong convergence of $A_N(t)$ to $A(t)$ together with the uniform bound $\lVert A_N(t) \psi \rVert \le 2 \sup_n \lVert a_n \rVert \lVert \psi \rVert$ implies, by dominated convergence, that
\[
 \lim_{N\to \infty} (X_N(T) \psi - X_N \psi ) = \lim_{N\to \infty} \int_0^T A_N(t) \psi \; dt = \int_0^T A(t) \psi \; dt.
\]
The proof is then completed by a double application of the following observation: if $\psi \in D(X(T))$, then $\lim_{N\to \infty} X_N(T) \psi = X(T) \psi$; otherwise, $\lim_{N\to \infty} \lVert X_N(T) \psi \rVert = \infty$. That observation is immediately verified for $T =0$ and extended to any $T$ by conjugation by $e^{-itJ}$.
\end{proof}

For the proof of Theorem~\ref{Tballistic}, we will also need general facts about periodic block Jacobi matrices. It is well-known that, when $J$ is $q$-periodic, it has a direct integral representation when conjugated by the $q$-step Fourier transform. Namely, there is a unitary operator
\[
\mathcal{F} : \ell^2(\mathbb{Z})^m \to L^2 \Big( \partial\mathbb{D}, \frac{d\theta}{2\pi}; (\mathbb{C}^m)^q \Big)
\]
which is first defined by
\[
(\mathcal{F} u)_n (\theta) = \sum_{l\in \mathbb{Z}} u_{n+lq} e^{-il\theta}
\]
for $u \in \ell^1(\mathbb{Z})^m$ and then extended by
\[
\int_0^{2\pi} \lVert \mathcal{F}u_\cdot (\theta) \rVert_2^2 \frac{d\theta}{2\pi} = \sum_{n\in\mathbb{Z}} \lVert u_n \rVert^2.
\]
See \cite[Chapter 5]{Rice} for details in the scalar case ($m=1$), which extend with the same proofs to the block case. It is then straightforward to verify that $J$ and $A$, being $q$-periodic, split into direct integrals with respect to $\mathcal{F}$,
\[
\mathcal{F} J \mathcal{F}^{-1} = \int_0^{2\pi} J_\theta \frac{d\theta}{2\pi}, \qquad \mathcal{F} A \mathcal{F}^{-1} = \int_0^{2\pi} A_\theta \frac{d\theta}{2\pi},
\]
where $J_\theta$ and $A_\theta$ are $mq \times mq$ matrices,
\begin{align}
J_\theta & = \begin{pmatrix}
b_1 & a_1 & & & e^{-i\theta} a_q^* \\
a_1^* & b_2 & a_2 & & \\
& a_2^* & b_3 & \ddots & \\
& & \ddots & \ddots & a_{q-1} \\
e^{i\theta} a_q & & & a_{q-1}^* & b_q
\end{pmatrix}, \nonumber \\ %\label{Jtheta} \\
A_\theta & = \begin{pmatrix}
0 & i a_1 & & & - i e^{-i\theta} a_q^* \\
-i a_1^* & 0  & i a_2 & & \\
& -i a_2^* & 0  & \ddots & \\
& & \ddots & \ddots & ia_{q-1} \\
i e^{i\theta} a_q & & & -i a_{q-1}^* & 0
\end{pmatrix}. \nonumber
\end{align}
The matrices $J_\theta$ are self-adjoint and depend analytically on $\theta$, so by analytic eigenvalue perturbation theory \cite{RS4}, they are diagonalizable and their eigenvalues are analytic functions of $\theta$; we can label eigenvalues of $J(\theta)$ by distinct analytic functions $\lambda_1(\theta), \dots, \lambda_l(\theta)$, with multiplicities $m_1, \dots, m_l$ independent of $\theta$ (of course, if $\theta$ is such that $\lambda_i(\theta) = \lambda_j(\theta)$ for some $i\neq j$, the corresponding multiplicities are added). Where all the $\lambda_j(\theta)$ are distinct, the corresponding orthogonal projections $P_j(\theta)$ to subspaces $\Ker(J_\theta - \lambda_j(\theta))$ are analytic in $\theta$.

\begin{lemma} \begin{enumerate}[{\rm (a)}]
\item The same value of $\lambda$ cannot be an eigenvalue of $J_\theta$ for more than $2m$ values of $\theta \in [0,2\pi)$.
\item There is a finite set $\mathcal{D} \subset [0,2\pi)$ such that, for all $\theta \in [0,2\pi) \setminus \mathcal{D}$, we have
\[
\lambda_j(\theta) \neq \lambda_k(\theta), \quad \forall j \neq k
\]
and
\[
\frac{\partial \lambda_j(\theta)}{\partial \theta} \neq 0, \quad \forall j.
\]
\end{enumerate}
\end{lemma}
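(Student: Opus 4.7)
For part (a), I would pass to the transfer matrix picture. The eigenvalue equation $J_\theta u = \lambda u$ for $u = (u_1,\ldots,u_q) \in (\mathbb{C}^m)^q$ is equivalent, after setting $u_0 = e^{-i\theta} u_q$ and $u_{q+1} = e^{i\theta} u_1$, to finding a nontrivial solution of the block Jacobi recurrence
\[
a_{n-1}^* u_{n-1} + b_n u_n + a_n u_{n+1} = \lambda u_n, \qquad 1 \le n \le q,
\]
subject to the Bloch boundary condition $u_{n+q} = e^{i\theta} u_n$. Since $\det a_n \ne 0$, this recurrence is propagated by a $2m \times 2m$ one-step transfer matrix $T_n(\lambda)$, and the product $T(\lambda) = T_q(\lambda) \cdots T_1(\lambda)$ sends $(u_1, u_0)^{\top}$ to $(u_{q+1}, u_q)^{\top}$. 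The Bloch condition is equivalent to $(u_1, u_0)^{\top}$ being a nonzero eigenvector of $T(\lambda)$ with eigenvalue $e^{i\theta}$; hence $\lambda$ is an eigenvalue of $J_\theta$ if and only if $e^{i\theta}$ is an eigenvalue of $T(\lambda)$. A $2m \times 2m$ matrix has at most $2m$ distinct eigenvalues, giving the claimed bound.

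For part (b), I would use analyticity to finish. By Rellich's theorem, the distinct branches $\lambda_j(\theta)$ can be chosen real-analytic on the real line. For $j \ne k$, the difference $\lambda_j(\theta) - \lambda_k(\theta)$ is real-analytic and, by the assumed distinctness of the branches, not identically zero, so it has only finitely many zeros on the compact interval $[0, 2\pi]$. Similarly, $\lambda_j'(\theta)$ is analytic; were it identically zero then $\lambda_j \equiv c$ for some constant $c$, and then $c$ would be an eigenvalue of $J_\theta$ for every $\theta \in [0, 2\pi)$, contradicting part (a). Hence $\lambda_j'$ has only finitely many zeros on $[0, 2\pi]$. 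Taking $\mathcal{D}$ to be the union of these finitely many zero sets, over the finitely many pairs $(j, k)$ and indices $j$, proves the lemma.

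\textbf{Main obstacle.} The heart of the proof is the transfer matrix reformulation in part (a); once this is in hand, the dimensional bound is immediate, and part (b) becomes essentially automatic modulo the standard Rellich--Kato fact that the eigenvalues of a self-adjoint analytic family admit a global single-valued analytic parameterization on $\mathbb{R}$. The only small subtlety is checking the equivalence of eigenvectors going both ways: a nonzero eigenvector $(u_1, u_0)^{\top}$ of $T(\lambda)$ must yield a nonzero vector $(u_1, \ldots, u_q) \in (\mathbb{C}^m)^q$, which follows since $u_1 = 0$ and $u_q = e^{i\theta} u_0$ would force $u_0 = 0$ and thus the starting vector to vanish, and the propagation step is well-defined because $\det a_n \ne 0$.
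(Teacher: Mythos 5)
Your proof is correct, and part (a) takes a genuinely different route from the paper's. The paper argues directly on the line: given eigenvectors of $J_{\theta_1},\dots,J_{\theta_N}$ for the same $\lambda$ with distinct $\theta_j$, it extends each to a Bloch wave $u(j)\in\ell^\infty(\mathbb{Z})^m$ solving $Ju(j)=\lambda u(j)$, invokes the linear independence of such Bloch waves (as in Lemma 5.3.3 of Simon's book), and observes that every solution of the recurrence is determined by two consecutive blocks (using $\det a_n\neq 0$), so the solution space has dimension at most $2m$ and hence $N\le 2m$. You instead encode a solution of $J_\theta u=\lambda u$ as a Bloch solution of the one-step recurrence and reduce the count to the fact that the $2m\times 2m$ monodromy matrix $T(\lambda)$ has at most $2m$ distinct eigenvalues, since $\theta\in[0,2\pi)\mapsto e^{i\theta}$ is injective. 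Both arguments ultimately rest on the same $2m$-dimensionality of the solution space of the block difference equation; yours trades the linear-independence lemma for the standard Floquet picture and yields the slightly stronger characterization $\lambda\in\sigma(J_\theta)\Leftrightarrow e^{i\theta}\in\sigma(T(\lambda))$, while the paper's version avoids setting up transfer matrices (which it never needs elsewhere) at the cost of citing the independence lemma. Your verification that nonzero vectors correspond in both directions (via the Bloch relation $u_q=e^{i\theta}u_0$ and propagation with $\det a_n\neq 0$) is exactly the point that needs checking and is handled correctly. Part (b) is essentially identical to the paper's argument: distinct analytic branches coincide only finitely often on $[0,2\pi]$, and a branch with identically vanishing derivative would be constant, contradicting part (a).
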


\begin{proof}
(a) If $J_{\theta_j} v(j) = \lambda v(j)$ with the $\theta_j\in [0,2\pi)$ mutually distinct, we construct $u(j) \in \ell^\infty(\mathbb{Z})^m$ by
\[
u(j)_{k+nq} = e^{in\theta_j} v(j)_k, \quad n\in\mathbb{Z}, \quad k\in \{1,2,\dots, q\}.
\]
The $u(j)$ are linearly independent by an argument identical to Lemma 5.3.3 of \cite{Rice}. However, by the construction of $u(j)$, $J u(j) = \lambda u(j)$, which means that $u(j)$ is uniquely defined by $u(j)_1$ and $u(j)_2$ and the recurrence relation
\[
a_{n-1}^* u(j)_{n-1} + (b_n - \lambda I) u(j)_n + a_n u(j)_{n+1} = 0
\]
(here we use $\det a_n \neq 0$). Thus, there cannot be more than $2m$ linearly independent $u(j)$'s.

(b) By (a), all $\lambda_j(\theta)$ are nonconstant analytic functions of $\theta$, so their derivatives have isolated zeros. Similarly, for $j\neq k$, $\lambda_j$ and $\lambda_k$ are distinct analytic functions of $\theta$, so they can only coincide at finitely many points.
\end{proof}

\begin{proof}[Proof of Theorem~\ref{Tballistic}]
The key step is to establish that for every $\theta \in [0,2\pi) \setminus \mathcal{D}$,
\begin{equation}\label{eqnAav}
\lim_{T\to \infty} \frac 1T \int_0^T e^{itJ_\theta} A_\theta e^{-itJ_\theta} dt = q \sum_{j=1}^l \frac{\partial \lambda_j(\theta)}{\partial \theta} P_j(\theta).
\end{equation}
To prove this, let $J_\theta v_j(\theta) = \lambda_j(\theta) v_j(\theta)$ and $J_\theta v_k(\theta) = \lambda_k(\theta) v_k(\theta)$. Then
\begin{align*}
\left \langle v_j(\theta),  \frac 1T \int_0^T e^{itJ_\theta} A_\theta e^{-itJ_\theta} dt \, v_k(\theta) \right\rangle & = \frac 1T \int_0^T \left \langle e^{-itJ_\theta} v_j(\theta),   A_\theta e^{-it J_\theta} v_k(\theta) \right\rangle  dt  \\
& =  \frac 1T \int_0^T e^{it (\lambda_k(\theta)- \lambda_j(\theta))} dt  \; \left \langle v_j(\theta),   A_\theta v_k(\theta) \right\rangle \\
& \to \delta_{j,k} \langle v_j(\theta), A_\theta v_k(\theta) \rangle.
\end{align*}
It therefore suffices to show that
\begin{equation}\label{ml30}
P_j(\theta) A_\theta P_j(\theta) = q \frac{\partial \lambda_j(\theta)}{\partial \theta} P_j(\theta).
\end{equation}
Pick a vector $v_j(\theta) \in \Ker(J_\theta - \lambda_j(\theta))$ which is normalized and analytic in $\theta$. Consider the diagonal matrix
\[
V_\theta = \begin{pmatrix}
1 & & & \\
& e^{i\theta / q} & & \\
& & \ddots & \\
& & & e^{i\theta (q-1) / q}
\end{pmatrix}.
\]
and denote $\tilde v_\theta = V_\theta v_\theta$, $\tilde A_\theta = V_\theta^{-1} A_\theta V_\theta$,  $\tilde J_\theta = V_\theta^{-1} J_\theta V_\theta$. It is easy to compute $\tilde J_\theta$ and $\tilde A_\theta$ explicitly and find
\[
\frac d{d\theta} \tilde J_\theta = \frac 1q \tilde A_\theta.
\]
On the other hand, by normalization,
\[
\left \langle \frac{\partial}{\partial \theta} \tilde v_j(\theta), \tilde v_j(\theta) \right\rangle + \left \langle \tilde  v_j(\theta), \frac{\partial}{\partial \theta} \tilde v_j(\theta) \right\rangle = \frac{\partial}{\partial \theta} \langle\tilde  v_j(\theta), \tilde v_j(\theta) \rangle = 0
\]
so
\[
 \frac{\partial \lambda_j(\theta)}{\partial \theta} = \frac{\partial}{\partial \theta} \langle \tilde v_j(\theta), \tilde J_\theta \tilde v_j(\theta) \rangle =  \langle \tilde v_j(\theta), \frac{\partial  \tilde J_\theta}{\partial \theta} \tilde v_j(\theta) \rangle =  \frac 1q \langle \tilde v_j(\theta), \tilde A_\theta \tilde v_j(\theta) \rangle
\]
This implies \eqref{ml30}, which completes the proof of \eqref{eqnAav}.

(a) By dominated convergence, since the integrand converges for all but finitely many $\theta$, we have convergence in norm
\[
\lim_{T\to \infty} \int_0^{2\pi} \left( \frac 1T \int_0^T e^{itJ_\theta} A_\theta e^{-itJ_\theta} dt \right)  d\theta = \int_0^{2\pi} q \sum_{j=1}^l \frac{\partial \lambda_j(\theta)}{\partial \theta} P_j(\theta) d\theta.
\]
The left hand side is precisely $\lim_{T\to \infty} \frac 1T \int_0^T \mathcal{F} A(t) \mathcal{F}^{-1} dt$, so by undoing conjugation by $\mathcal{F}$, we conclude norm convergence
\[
\lim_{T\to \infty}  \frac 1T \int_0^T A(t) dt = Q
\]
where
\begin{equation} \label{eqnB}
Q =  \mathcal{F}^{-1} \left(  \int_0^{2\pi} q \sum_{j=1}^l \frac{\partial \lambda_j(\theta)}{\partial \theta} P_j(\theta) d\theta \right)  \mathcal{F}.
\end{equation}
Thus, dividing \eqref{ml02} by $T$ and using $\lim_{T\to \infty} \frac 1T X \psi = 0$, we conclude \eqref{ml60}.

(b) Since $X(T) / T$ have a common core on which they converge strongly to $Q$, they also converge to $Q$ in the strong resolvent sense (see Theorem VIII.25 of \cite{RS1}). Since $f$ is a bounded continuous function, we then conclude that $f\left( \frac{X(T)} T\right)$ converge strongly to $f(Q)$, by Theorem VIII.20 of \cite{RS1}.

(c) By a result of Damanik--Tcheremchantsev \cite[Theorem 2.22]{DamanikTcheremchantsev10}, $\beta_\psi^\pm(p) \le 1$ for all $p>0$, so it remains to prove the opposite inequality.

Take the continuous bounded function $g_C:\mathbb{R} \to \mathbb{R}$ defined by
\[
g_C(x) = \min( \lvert x \rvert , C )^p.
\]
By part (b), for any $\psi$,
\[
\lim_{T\to \infty} \left\langle \psi,  g_C\left( \frac{X(T)} T\right)  \psi \right\rangle = \langle \psi, g_C(Q) \psi \rangle.
\]
Since $g_C\left( \frac{X(T)} T\right)$ is increasing in $C$ and converges to $\left\langle \psi,  \left| \frac{X(T)} T\right|^p  \psi \right\rangle$, we conclude that
\[
\liminf_{T\to \infty}  \frac 1{T^p} \langle \psi,  \lvert X(T)\rvert^p  \psi \rangle    \ge \langle \psi, \lvert Q \rvert^p \psi \rangle > 0.
\]
This implies $\beta_\psi^-(p) \ge 1$, concluding the proof.
\end{proof}

As our final topic in this section, we prove a corollary of this method which will be used later in the paper. A notational remark: for this statement, we want to de-emphasize the block structure and view our Hilbert space as $\ell^2(\mathbb{Z})$. Then we can refer to the vectors $\delta_n \in \ell^2(\mathbb{Z})$ as usual. Note that with this notation, for instance,
\begin{equation}\label{Xm}
X \delta_{n} = \left\lfloor \frac nm \right\rfloor \delta_n.
\end{equation}

\begin{cor}\label{cor:lr}
For any $\epsilon > 0$, there exist constants $T_0, \tilde C>0$ and $K \in \mathbb{Z}$, such that for all $T\ge T_0$, there exist $n, k\in \mathbb{Z}$ with
\[
m(\lVert Q \rVert - \epsilon) T \le \lvert n\rvert \le m\lVert Q \rVert T + m-1
\]
and $\lvert k \rvert \le K$ such that
\[
 \lvert \langle \delta_n, e^{-iTJ} \delta_k \rangle \rvert^2 \ge \frac{\tilde C}{T}.
\]
\end{cor}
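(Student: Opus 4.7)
The plan is to pass from the operator convergence of Theorem~\ref{Tballistic}(b) to weak convergence of probability measures on $\mathbb{R}$, and then extract a pointwise $1/T$ lower bound by pigeonhole on a window of $O(T)$ positions followed by Cauchy--Schwarz.

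First I fix a compactly supported unit vector $\psi = \sum_{|k|\le K}c_k\,\delta_k \in \ell^2(\mathbb{Z})$ whose spectral measure $\mu_\psi^Q$ with respect to $Q$ places positive mass on $(\|Q\|-\epsilon,\|Q\|]\cup[-\|Q\|,-\|Q\|+\epsilon)$. Such $\psi$ exists because $\pm\|Q\|$ lie in $\spec(Q)$ (so the corresponding spectral projection of $Q$ is nonzero) and compactly supported vectors are dense in $\ell^2(\mathbb{Z})$; I may also assume that $\pm(\|Q\|-\epsilon)$ are continuity points of $\mu_\psi^Q$ (shrinking $\epsilon$ by a negligible amount if necessary). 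Put $w := e^{-iTJ}\psi$ and define the probability measure $\mu_T$ on $\mathbb{R}$ by
\[
\mu_T(B)\;:=\;\sum_{N\in\mathbb{Z}:\, \lfloor N/m\rfloor / T \,\in\, B} |w_N|^2 \qquad(B\subset\mathbb{R}\text{ Borel}).
\]
Theorem~\ref{Tballistic}(b) then gives $\int f\,d\mu_T\to \int f\,d\mu_\psi^Q$ for every bounded continuous $f$, so $\mu_T\to\mu_\psi^Q$ weakly, and of course $\mu_\psi^Q$ is supported in $[-\|Q\|,\|Q\|]$.

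The crucial spectral input is that $\pm\|Q\|$ are not atoms of $\mu_\psi^Q$. Indeed, in the Bloch representation of $Q$ developed above, an eigenvector at eigenvalue $\|Q\|$ would force the analytic function $q\lambda_j'(\theta)$ to equal $\|Q\|$ on a positive-measure subset of $[0,2\pi)$ for some $j$, contradicting the non-constancy of the $\lambda_j$ (cf.\ part (a) of the Lemma). Since $\{|y|\in(\|Q\|-\epsilon,\|Q\|]\}$ therefore has $\mu_\psi^Q$-null boundary, the portmanteau theorem yields
\[
\lim_{T\to\infty} \mu_T\bigl(\{|y|\in(\|Q\|-\epsilon,\|Q\|]\}\bigr) \;=\; \mu_\psi^Q\bigl(\{|y|\in(\|Q\|-\epsilon,\|Q\|]\}\bigr) \;=:\; c \;>\;0.
\]
Rewriting in the physical variable via $|y|=|\lfloor N/m\rfloor|/T$, this is exactly
\[
\sum_{N:\, m(\|Q\|-\epsilon)T \,\le\, |N|\, \le\, m\|Q\|T + m-1} |w_N|^2 \;\ge\; c/2
\]
for all $T$ sufficiently large. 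This index set has cardinality at most $2(m\epsilon T + m)=O(T)$, so by pigeonhole there is some $N$ in it with $|w_N|^2 \ge c'/T$. Since $w_N = \sum_{|k|\le K} c_k \langle \delta_N, e^{-iTJ}\delta_k\rangle$, Cauchy--Schwarz gives $|w_N|^2 \le (2K+1)\max_{|k|\le K}|\langle \delta_N, e^{-iTJ}\delta_k\rangle|^2$; picking the maximizing $k$ supplies the desired $|\langle \delta_N, e^{-iTJ}\delta_k\rangle|^2 \ge \tilde C/T$.

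The subtle point is the sharpness of the upper cutoff $|n|\le m\|Q\|T+m-1$, which corresponds to $|y|\le\|Q\|$ with no slack; weak convergence of $\mu_T$ alone does not control mass at the boundary $|y|=\|Q\|$. It is precisely the ``no atom at $\pm\|Q\|$'' observation---the only nontrivial spectral-theoretic step---that lets me use the half-open window and thus obtain the sharp upper bound in the statement.
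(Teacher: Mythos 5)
Your overall strategy coincides with the paper's: apply Theorem~\ref{Tballistic}(b) to a compactly supported $\psi$ carrying spectral weight of $Q$ near $\pm\norm{Q}$, conclude that at time $T$ a fixed amount of $\ell^2$-mass of $e^{-iTJ}\psi$ lies in a window of $O(\epsilon T)$ sites at distance $\approx m\norm{Q}T$, then finish by pigeonhole and Cauchy--Schwarz; your recasting via weak convergence of the position distributions $\mu_T$ and the portmanteau theorem is a repackaging of that argument, not a different route. The one place where you do genuinely more than the paper is the justification of the sharp upper cutoff $\abs{n}\le m\norm{Q}T+m-1$ through the claim that $\pm\norm{Q}$ are not atoms of $\mu_\psi^Q$, and there your reasoning has a real flaw: if $q\lambda_j'(\theta)=\norm{Q}$ on a set of positive measure, analyticity forces $\lambda_j'\equiv\norm{Q}/q\neq 0$, i.e.\ $\lambda_j$ is affine with nonzero slope, and this does \emph{not} contradict ``the non-constancy of the $\lambda_j$'' (an affine branch is non-constant), nor does it contradict part (a) of the lemma in Section~\ref{sec:ballistic} (an injective branch hits each value at most once). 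The claim itself is true, but the correct reason is different: by Rellich's theorem the eigenvalue branches of the self-adjoint analytic family $\{J_\theta\}$ extend to functions analytic on all of $\mathbb{R}$, and they are uniformly bounded by $\sup_\theta\norm{J_\theta}\le\norm{J}$, so no branch can have identically constant nonzero derivative; with that substitution your no-atom step, and hence the portmanteau argument, goes through.

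Two smaller points. First, your conversion to the physical index is slightly off on the negative side: for $n<0$, $\abs{\lfloor n/m\rfloor}>(\norm{Q}-\epsilon)T$ only gives $\abs{n}>m(\norm{Q}-\epsilon)T-m$, so the index set you pigeonhole over does not quite contain all sites carrying the mass you counted; this is harmless (run the argument with $\epsilon/2$, or treat the two signed windows separately as the paper does), but as written the located $n$ could miss the stated lower bound by up to $m$. Likewise, only one of $\pm\norm{Q}$ need lie in $\spec(Q)$, though that suffices for your choice of $\psi$. Second, it is worth noting that the paper itself is terse at exactly the point you worried about: it passes from the bump function $j$ (supported up to $v_0+\epsilon$) to $\chi_{[v_0-\epsilon,v_0]}(X(T)/T)$ without comment, and in the downstream application (Theorem~\ref{thm:v-bound}) only the lower bound on $\abs{n}$ is ever used. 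So your instinct to isolate and justify the sharp upper cutoff is well placed; you just need the boundedness/global-analyticity argument, not non-constancy, to support it.
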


\begin{proof} Let us denote $v_0 = \lVert Q\rVert$. For any $\epsilon >0$, the spectral projection of $Q$ to $[-v_0, - v_0 + \epsilon/2] \cup [v_0 - \epsilon/2, v_0]$ is nontrivial, so there exists a vector $\psi$ such that
\[
\chi_{[-v_0, -v_0 + \epsilon/2] \cup [v_0 - \epsilon/2, v_0 ] }(Q) \psi  \neq 0
\]
and, by approximation, the vector $\psi$ can be chosen to be of bounded support, supported on $[-K,K] \cap \mathbb{Z}$ for some $K\in\mathbb{Z}$.

Let us assume that
\[
\chi_{[v_0 - \epsilon/2, v_0] }(Q) \psi  \neq 0;
\]
the other case $\chi_{[-v_0, -v_0 + \epsilon/2] }(Q) \psi  \neq 0$ is treated analogously.

Let $j:\mathbb{R} \to \mathbb{R}$ be a continuous function with $0 \le j \le 1$, $j=1$ on $[v_0 - \epsilon/2, v_0]$, and $j=0$ outside of $[v_0 - \epsilon, v_0+\epsilon]$.

By Theorem~\ref{Tballistic}(b),
\[
j\left(\frac 1T X(T)\right) \psi \to j(Q) \psi.
\]
In particular, since
\[
\lVert j(Q) \psi \rVert \ge \lVert\chi_{[v_0 - \epsilon/2, v_0] }(Q) \psi \rVert > 0,
\]
there exists a constant $C_1>0$ such that, for all $T\ge T_0$,
\[
\left\lVert \chi_{[v_0-\epsilon,v_0]}\left(\frac 1T X(T)\right)\psi \right\rVert^2 \ge \lVert j(Q) \psi \rVert \ge C_1.
\]
Keeping \eqref{Xm} in mind, this implies
\[
\sum_{n = m \lceil (v_0-\epsilon) t\rceil}^{m\lfloor v_0 t \rfloor +m-1} \lvert \langle \delta_n, e^{-itJ} \psi \rangle \rvert^2 \ge C_1.
\]
By the Cauchy--Schwarz inequality, this implies
\[
\sum_{n = m \lceil (v_0-\epsilon)t\rceil}^{m\lfloor v_0 t \rfloor+m-1} \sum_{k = -K}^K \lvert \psi_k \rvert^2  \lvert \langle \delta_n, e^{-itJ} \delta_k \rangle \rvert^2 \ge C_1.
\]

Since the sum has at most $(mv_0 T - m(v_0 -\epsilon)T +m)(2K+1)$ terms, we conclude that some term in the sum obeys
\[
\lvert \psi_k \rvert^2 \lvert \langle \delta_n, e^{-itJ} \delta_k \rangle \rvert^2 \ge \frac{C_1}{m(v_0 T-(v_0-\epsilon) T+1)(2K+1)} \ge \frac {C_2}T.
\]
Using $\lvert \psi_k \rvert^2 \le \lVert \psi\rVert^2$, this completes the proof with $\tilde C = C_2 / \lVert \psi \rVert^2$.
\end{proof}

\section{The Anisotropic $XY$ Chain}\label{s.xychain}

In this section, among other results, we prove Theorems \ref{thm:main-1} and \ref{thm:main-2}. Indeed, these theorems are implied by the Theorems \ref{thm:v-bound} and \ref{thm:hss-converse}, respectively, from this section.

\subsection{The Heisenberg Evolution and the Lieb-Robinson Bound}

Take two regions $S_1, S_2\subset \Lambda = [m, n]\cap \Z$, such that $S_1\cap S_2 = \emptyset$. Let $A \in \mathcal{A}_{S_1}$, and $B\in \mathcal{A}_{S_2}$. Observe then that $A$ and $B$ commute as operators in $\mathcal{A}_{\Lambda}$. Now let us consider the Heisenberg evolution of the operator $A$ via
\begin{align}\label{eq:evol}
  \tau_t^{(\Lambda)}(A): \R\ni t\mapsto e^{it H^{(\Lambda)}}Ae^{-it H^{(\Lambda)}}.
\end{align}

Since, due to the local interaction in the chain, perturbations in $S_1$ by $A$ will propagate through the chain, one expects the commutator $[\tau_t^{(\Lambda)}(A), B]$ to be nonzero for some values of $t$. The norm of the commutator, which we denote by $\mathbf{P}_t^{(\Lambda)}(A, B)=\norm{[\tau_t^{(\Lambda)}(A), B]}$, becomes an indicator of propagation of local disturbances.

The Lieb-Robinson bound puts a restriction on the growth of $\mathbf{P}_t^{(\Lambda)}(A, B)$ \cite{LR72}, according to which we have
\begin{align}\label{eq:lr}
  \mathbf{P}_t^{(\Lambda)}(A, B)\leq C\norm{A}\norm{B}e^{-\eta(d(S_1, S_2)-v\abs{t})},
\end{align}
with $C>0$, $\eta > 0$, and $v\geq 0$ constant, and all three constants do not depend on the length of the lattice $\Lambda$. Here $d(S_1, S_2)$ denotes the distance between the two regions:
\begin{align*}
d(S_1, S_2) = \min_{(i,j)\in S_1\times S_2}\set{\abs{i-j}}.
\end{align*}
Notice that for $\abs{t}\ll d(S_1,S_2)$, $\mathbf{P}_t^{(\Lambda)}(A, B)$ remains exponentially small in the variable $d(S_1, S_2)$. The constant $C$ may depend on the regions $S_1$ and $S_2$ (such as their sizes), the norms of $A$ and $B$, the lattice structure, and the interaction (see Section 2 in \cite{NS09}). The velocity, $v$, depends on the interaction and the lattice structure only. Typically, $\eta$ is taken as a fixed constant, but may be allowed to vary depending on $d(S_1, S_2)$. Thus the Lieb-Robinson bound is a statement about a relationship among the three variables, $C$, $\eta$ and $v$, such that the bound \eqref{eq:lr} holds independently of the lattice size (and, typically, independently of $d(S_1, S_2)$).

Evidently there are many potential choices of the three variables for a given system that would satisfy \eqref{eq:lr}. Hence much effort has been put into establishing the sharpest bounds on the variables, for any given system, subject to \eqref{eq:lr} (see, for example, \cite{NS10}). For instance, for the Hamiltonian \eqref{eq:haml} with random nearest neighbor interaction (and/or the external field), under some technical conditions, with a suitable choice of $C > 0$ and $\eta > 0$ \textit{constant}, one can take $v = 0$. This, in some sense, is the strongest type of bound, defined in \cite{HSS12} as the \textit{zero velocity Lieb-Robinson bound}. We elaborate on this later. In fact, we show that zero velocity bounds cannot hold for a large class of systems \eqref{eq:haml} (the class being determined by the type of interaction and the external field), namely those where a certain dynamical localization phenomenon is impossible. The periodic case is the primary example, but there are many more examples to which this observation can be applied.

Let us mention, in particular since we shall need this later, that as a special case of Theorem 1 in \cite{NS09}, in the case of \eqref{eq:haml}, the variables $C$ and $\eta$ may be chosen to be constant (i.e. independent of the lattice size, $d(S_1, S_2)$, and even the geometry of $S_1$ and $S_2$) as long as one restricts to a suitable (and quite natural) class of observables. Namely, we restrict to those pairs of observables $(A, B)$ whose supports do not interlace; that is, if $A$ is supported on $S_1$ and $B$ is supported on $S_2$, then we require that $\max\set{s\in S_1} < \min\set{s\in S_2}$. Let us denote the class of pairs of such observables by $\mathcal{D}^{(\Lambda)}$, where the superscript emphasizes the lattice:
\begin{align*}
\mathcal{D}^{(\Lambda)}:=\set{(A, B)\in \mathcal{A}_{\Lambda}\times\mathcal{A}_{\Lambda}: \supp(A)\hspace{2mm} \text{and}\hspace{2mm} \supp(B)\hspace{2mm}\text{do not interlace}}.
\end{align*}
 We then have

\begin{thm}[Nachtergaele-Sims]\label{thm:ns}
There exist constants $C, \eta, v > 0$ independently of $\Lambda$ such that for any choice of $(A, B)\in\mathcal{D}^{(\Lambda)}$ with $\supp(A) = S_1$ and $\supp(B) = S_2$,
\begin{align*}
\mathbf{P}_t^{(\Lambda)}(A, B)\leq C\norm{A}\norm{B}e^{-\eta(d(S_1, S_2) -v\abs{t})}.
\end{align*}
\end{thm}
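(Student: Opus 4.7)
The plan is to obtain Theorem~\ref{thm:ns} as a direct consequence of Theorem~\ref{thm:LRintro}, which already gives a bound of exactly the claimed form with constants $\eta,v>0$ that are uniform in $\Lambda$ and a constant $C$ that is permitted to depend on $\lvert \partial S_1\rvert$ and $\lvert \partial S_2\rvert$. The task thus reduces to controlling these interaction-boundary sizes uniformly under the non-interlacing hypothesis.

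Given $(A,B)\in\mathcal{D}^{(\Lambda)}$ with $\supp(A)=S_1$, $\supp(B)=S_2$ and $\max S_1<\min S_2$, the first step I would take is to pass to the convex hulls
\[
\tilde S_1 := [\min S_1,\max S_1]\cap\mathbb{Z},\qquad \tilde S_2:=[\min S_2,\max S_2]\cap\mathbb{Z}.
\]
By the standard embedding $\mathcal{A}_{S_i}\hookrightarrow\mathcal{A}_{\tilde S_i}$, we have $A\in\mathcal{A}_{\tilde S_1}$ and $B\in\mathcal{A}_{\tilde S_2}$ with unchanged operator norms, and the non-interlacing condition guarantees $\tilde S_1\cap\tilde S_2=\emptyset$ and $d(\tilde S_1,\tilde S_2)=d(S_1,S_2)$. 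Hence replacing $(S_1,S_2)$ with $(\tilde S_1,\tilde S_2)$ throughout \eqref{eq:lr} costs nothing.

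The key observation is now that each $\tilde S_i$ is an interval in $\mathbb{Z}$, while the Hamiltonian \eqref{eq:haml} contains only nearest-neighbor couplings. By the definition of the interaction boundary, $\partial\tilde S_i\subseteq\{\min\tilde S_i,\max\tilde S_i\}$, so $\lvert\partial\tilde S_i\rvert\le 2$. Applying Theorem~\ref{thm:LRintro} to $\tilde S_1,\tilde S_2$ then produces \eqref{eq:lr} with a constant $C$ dominated by a universal $C_0$ (depending only on the two uniform upper bounds $\lvert\partial\tilde S_i\rvert\le 2$), independently of $\Lambda$, $S_1$, $S_2$, $A$, and $B$; the constants $\eta$ and $v$ are inherited unchanged from Theorem~\ref{thm:LRintro}.

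There is essentially no genuine obstacle here: the substance of Theorem~\ref{thm:ns} is not a new estimate but a clean repackaging of Theorem~\ref{thm:LRintro} tailored to the non-interlacing class $\mathcal{D}^{(\Lambda)}$, which is the setting needed later when ruling out zero-velocity Lieb-Robinson bounds. The only point one has to verify is that the specific construction of $C$ in \cite{NS09} really does depend on $S_1,S_2$ solely through $\lvert\partial S_1\rvert,\lvert\partial S_2\rvert$ (as opposed to, say, their diameters). This is precisely the content of Theorem~1 in \cite{NS09}, so appealing to that result finishes the argument without further analysis.
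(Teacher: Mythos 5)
Your argument is correct and is essentially the paper's own route: Theorem~\ref{thm:ns} is obtained as a special case of Theorem~1 of \cite{NS09} (stated here as Theorem~\ref{thm:LRintro}), and the uniformity of $C$ for non-interlacing supports follows exactly by the convex-hull observation you make, which the paper records in the remark after Theorem~\ref{thm:LRintro} (disjointness and the distance are preserved, and each hull has interaction boundary of at most two points for the nearest-neighbor Hamiltonian \eqref{eq:haml}). Nothing further is needed.
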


A system like \eqref{eq:haml} can be transformed to a tight-binding model via the well known Jordan-Wigner transformation, and solved by the Lieb-Schultz-Mattis method \cite{LSM61}. For the reader's convenience and for completeness, we provide a brief summary below.

\subsection{Transformation to a Tight-Binding Hamiltonian}

For what follows, a self-contained treatment applicable to the model \eqref{eq:haml} has been given in \cite{HSS12}, with references therein for the more extensive discussion. In what follows, we take $\Lambda = [m, n]\cap\Z$ as above.

For $j\in \Lambda$, define the raising and the lowering operators, $a_j^*$ and $a_j$, respectively, by
\begin{align*}
a_j^*=\frac{1}{2}(\sigma_j^{(x)}+i\sigma_j^{(y)})\hspace{2mm}\text{ and }\hspace{2mm}a_j=\frac{1}{2}(\sigma_j^{(x)}-i\sigma_j^{(y)}).
\end{align*}
Define also the operators $c_j^*$ and $c_j$ on the lattice $\Lambda$ by
\begin{align*}
c_m^*=a_m^*\hspace{2mm}\text{ and }c_{m+j}^*=\sigma_m^{(z)}\cdots\sigma_{m+j-1}^{(z)}a_{m+j}^*\hspace{2mm}\text{ for }\hspace{2mm} j\geq 1, m+j \leq n,
\end{align*}
and $c_j$ is defined similarly with $a_j$ in place of $a_j^*$. Notice that at any given node $j\in\Lambda$, the algebra $\mathcal{A}_{\set{j}}$ is generated by $\set{a_j, a_j^*, a_ja_j^*, a_j^* a_j}$. At the same time, $\sigma_j^{(z)}$ can be expressed as
\begin{align}\label{eq:sigma}
\sigma_j^{(z)} = 2a_j^*a_j - \mathbb{I} = 2c_j^*c_j - \mathbb{I}.
\end{align}
Thus observables in $\mathcal{A}_{\set{j}}$ can be expressed as polynomials in the operators $c_j$ and $c_j^*$. Now define
\begin{align}\label{eq:c-vect}
\begin{split}
&\mathcal{C}^{(\Lambda)}:=({c_m},{c_m^*}, \dots,{c_n}, {c_n^*}),\\
&\tau_t^{(\Lambda)}(\mathcal{C}^{(\Lambda)}):= ({\tau_t^{(\Lambda)}(c_m)}, {\tau_t^{(\Lambda)}(c_m^*)},\dots,{\tau_t^{(\Lambda)}(c_n)},{\tau_t^{(\Lambda)}(c_n^*)}).
\end{split}
\end{align}
Then we have \cite{HSS12}
\begin{align}\label{eq:evol-schro}
\tau_t^{(\Lambda)}(\mathcal{C}^{(\Lambda)})=e^{-it M}\mathcal{C}^{(\Lambda)},
\end{align}
where the matrix $M$ is block-diagonal, given by
\begin{align}\label{eq:M}
M=\begin{pmatrix}
J_m & \Gamma_{m+1} & 0 & \cdots & 0\\
\Gamma_{m+1}^* & J_{m+1} & \ddots & \ddots & \vdots\\
0 & \ddots &  \ddots & \ddots & 0\\
\vdots & \ddots & \ddots & \ddots & \Gamma_{n}\\
0 & \cdots & 0 & \Gamma_{n}^* & J_n\\
\end{pmatrix}
\end{align}
with the matrices $J_{m\leq j \leq n}$ and $\Gamma_{m+1\leq j\leq n}$ given by
\begin{align}\label{eq:blocks}
J_j=2\begin{pmatrix}
\nu_j & 0\\
0 & -\nu_j
\end{pmatrix}
\hspace{2mm}\text{ and }\hspace{2mm}
\Gamma_j=2\begin{pmatrix}
-\mu_{j-1} & -\mu_{j-1}\gamma_{j-1}\\
\mu_{j-1}\gamma_{j-1} & \mu_{j-1}
\end{pmatrix},
\end{align}
and $\Gamma^*_j$ is the transpose of $\Gamma_j$. We write $M^{(\Lambda)}$ for $M$ whenever the lattice size $\Lambda$ needs to be emphasized.

\begin{remark}
In \cite{HSS12}, the evolution \eqref{eq:evol-schro} is given with $e^{-2itM^{(\Lambda)}}\mathcal{C}^{(\Lambda)}$ on the right; for convenience, we have factored $2$ into $J_j$ and $\Gamma_j$ in \eqref{eq:blocks}.
\end{remark}

Notice the dimension reduction from the original $2^{\abs{\Lambda}}$-dimensional Hilbert space, the $\abs{\Lambda}$-fold \textit{tensor product} $\bigotimes_{i=m}^n\C^2$, to the $2\abs{\Lambda}$-dimensional Hilbert space, the $\abs{\Lambda}$-fold \textit{direct sum} $\bigoplus_{i=m}^n \left(\mathcal{A}_{i}\oplus\mathcal{A}_{i}\right)$.

\subsection{General Upper Bounds}

It has been suggested \cite{BO07, ZPP08} that disorder in the couplings $\set{\mu_j}$ and $\set{\nu_j}$ may cause localization in the sense of exponential decay of $\sup_{t} \mathbf{P}_t^{(\Lambda)}(A, B)$ in the variable $d(S_1, S_2)$. In \cite{HSS12} this was confirmed rigorously:

\begin{thm}[Hamza-Sims-Stolz]\label{thm:h-s-s}
Suppose that $M^{(\Lambda)}$ is dynamically localized in the sense that there exist constants $C > 0$ and $\eta > 0$ such that
\begin{align*}
\mathbb{E} \Big( \sup_{t\in\R}\abs{M^{(\Lambda)}_{j, k}(t)}\Big) \leq Ce^{-\eta\abs{j-k}},
\end{align*}
where $M_{j, k}^{(\Lambda)}(t)$ is the $(j, k)$ entry of the matrix $e^{-it M^{(\Lambda)}}$ and $\mathbb{E}(\cdot)$ is the expectation with respect to a given probability measure, with respect to which the sequences $\set{\mu_j}$ and $\set{\nu_j}$ are drawn. Then with $m_*, m^*\in \Lambda$, $m_* < m^*$, for any $A \in \mathcal{A}_{\set{m_*}}$ and $B\in\mathcal{A}_{[m^*, n]}$, $\mathbf{P}_t^{(\Lambda)}(A, B)$ satisfies the zero velocity Lieb-Robinson bound in expectation, namely
\begin{align}\label{eq:sLR}
\mathbb{E}\left(\sup_{t\in\R}\mathbf{P}_t^{(\Lambda)}(A, B)\right)\leq C'\norm{A}\norm{B}e^{-\eta(m^*-m_*)},
\end{align}
where $C' > 0$ is independent of the lattice size $n$.
\end{thm}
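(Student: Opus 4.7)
My strategy is to pass to the fermionic picture via the Jordan--Wigner transformation, where $\tau_t$ acts as the Bogoliubov automorphism induced by the one-particle unitary $e^{-itM^{(\Lambda)}}$, and to reduce the commutator estimate to sums of matrix entries of $M^{(\Lambda)}(t)$ to which the dynamical localization hypothesis applies directly.

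First I would reduce to basis elements of $\mathcal{A}_{\{m_*\}} \simeq M_2(\mathbb{C})$, namely $\{I, a_{m_*}, a_{m_*}^*, a_{m_*}^* a_{m_*}\}$. Under Jordan--Wigner each becomes a polynomial in CAR generators with indices in $[m, m_*]$: directly, as $a_{m_*}^* a_{m_*} = c_{m_*}^* c_{m_*}$, or through the string $a_{m_*}^{(*)} = K_{m, m_*-1}\,c_{m_*}^{(*)}$ with $K_{p,q} := \sigma_p^{(z)} \cdots \sigma_q^{(z)}$ (each factor rewritten by \eqref{eq:sigma} as $\sigma_l^{(z)} = 2 c_l^* c_l - I$). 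Since $\tau_t$ is a $*$-homomorphism and, by \eqref{eq:evol-schro},
\begin{equation*}
\tau_t(c_j) = \sum_{k\in\Lambda} u_{j,k}(t)\, c_k + v_{j,k}(t)\, c_k^*
\end{equation*}
with $|u_{j,k}(t)|, |v_{j,k}(t)|$ bounded by the corresponding entries of the unitary $e^{-itM^{(\Lambda)}}$, I would split $\tau_t(c_j) = N_j(t) + F_j(t)$ with $N_j(t)$ collecting the $k < m^*$ summands and $F_j(t)$ collecting the $k \geq m^*$ summands. Plugging these splittings into the CAR expression for $\tau_t(A)$ and expanding gives $\tau_t(A) = T_{\text{near}}(t) + T_{\text{far}}(t)$, where $T_{\text{near}}(t)$ is built entirely from $N_j$'s and $N_j^*$'s, while $T_{\text{far}}(t)$ collects every term containing at least one $F_j$ or $F_j^*$.

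Since the operators $N_j(t), N_j^*(t)$ involve only CAR generators with indices in $[m, m^*-1]$, $T_{\text{near}}(t)$ lies in the CAR subalgebra with indices in $[m, m^*-1]$; via the inverse Jordan--Wigner map this subalgebra is carried isomorphically onto $\mathcal{A}_{[m, m^*-1]}$, so $T_{\text{near}}(t) \in \mathcal{A}_{[m, m^*-1]}$ and commutes with $B \in \mathcal{A}_{[m^*, n]}$. Therefore
\begin{equation*}
\mathbf{P}_t^{(\Lambda)}(A,B) = \|[T_{\text{far}}(t), B]\| \leq 2 \|B\|\, \|T_{\text{far}}(t)\|,
\end{equation*}
and it remains to control $\mathbb{E}(\sup_t \|T_{\text{far}}(t)\|)$. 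The triangle inequality together with $\|c_k^{(*)}\| \leq 1$ gives $\|F_j(t)\| \leq 2\sum_{k\geq m^*}(|u_{j,k}(t)|+|v_{j,k}(t)|)$, and the leading terms of the expansion of $T_{\text{far}}(t)$ are linear in a single $F_j$ (higher-order products of $F_j$'s are absorbed via the uniform bound $\|F_j\| \leq 2$ into a combinatorial constant depending only on $m_* - m$). Taking suprema in $t$, exchanging $\mathbb{E}$ with the finite outer sums, and summing the geometric series produced by the localization hypothesis yields
\begin{equation*}
\mathbb{E}\bigl(\sup_t \|T_{\text{far}}(t)\|\bigr) \leq C'' \|A\| \sum_{\substack{j \leq m_* \\ k \geq m^*}} \mathbb{E}\bigl(\sup_t |M^{(\Lambda)}_{j,k}(t)|\bigr) \leq C' \|A\|\, e^{-\eta(m^* - m_*)},
\end{equation*}
which combined with the previous display is the desired bound.

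The main technical obstacle will be the bookkeeping of the Jordan--Wigner strings in the fermion-odd basis elements $a_{m_*}^{(*)}$: the string $K_{m, m_*-1}$ expands under $\tau_t$ into a polynomial of degree up to $2(m_* - m) + 1$ in the CAR generators, and one must verify that the near/far decomposition proposed above is compatible with this expansion --- i.e.\ that after all cancellations $T_{\text{near}}(t)$ truly sits inside $\mathcal{A}_{[m, m^*-1]}$ while $T_{\text{far}}(t)$ is genuinely dominated by the $F_j$'s. The combinatorial constant $C''$ produced by this expansion depends on $m_* - m$ but is independent of the lattice size $n$, which is exactly the uniformity required by Theorem~\ref{thm:h-s-s}.
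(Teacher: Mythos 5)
Your overall route --- Jordan--Wigner, the Bogoliubov evolution \eqref{eq:evol-schro}, splitting each evolved fermion operator into a part built from CAR generators with indices in $[m,m^*-1]$ (which commutes with $B$) and a ``far'' remainder controlled by the entries $M^{(\Lambda)}_{j,k}(t)$ with far column index --- is the same strategy the paper uses in Proposition~\ref{prop:hss-upper} (following \cite{HSS12}), from which Theorem~\ref{thm:h-s-s} follows by taking expectations and summing the exponential bounds. The genuine gap is in your treatment of the Jordan--Wigner string. You expand $\tau_t(a_{m_*}^{(\ast)})=\tau_t(\sigma_m^{(z)})\cdots\tau_t(\sigma_{m_*-1}^{(z)})\,\tau_t(c_{m_*}^{(\ast)})$, with each $\sigma_l^{(z)}=2c_l^*c_l-I$, into monomials in $N_j,F_j$ and accept ``a combinatorial constant depending only on $m_*-m$.'' That constant is exponential in $m_*-m$: the expansion has of order $4^{m_*-m}$ terms, and with your crude bounds the near factors $2N_l^*N_l-I$ are only bounded in norm by a constant larger than $1$, so even the terms linear in a single $F_j$ come multiplied by products that grow geometrically in the string length. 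This does not give the theorem: in \cite{HSS12} (and in the bound of Proposition~\ref{prop:hss-upper}, see Remark~\ref{rem:zero-vel-ext}) the constant $C'$ depends only on $C$ and $\eta$ from the localization hypothesis --- in particular it is uniform in $m_*$, $m^*$ and in both endpoints of $\Lambda$. A ``zero velocity'' bound whose prefactor blows up with the distance from $\supp A$ to the left edge of the lattice is vacuous as soon as $m_*-m\gtrsim m^*-m_*$, so your closing claim that dependence on $m_*-m$ is ``exactly the uniformity required'' is where the argument fails.

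The fix is precisely the paper's bookkeeping: do not expand the product of string factors; use the automorphism property so that each $\tau_t(\sigma_j^{(z)})$ has norm exactly $1$, and apply the Leibniz rule to the commutator. This gives
\begin{equation*}
\norm{[\tau_t(a_{m_*}),B]}\le \norm{[\tau_t(c_{m_*}),B]}+\sum_{j<m_*}\norm{[\tau_t(\sigma_j^{(z)}),B]},
\end{equation*}
a sum \emph{linear} in the string length with an absolute prefactor ($8$ in the paper), each summand bounded by $2\norm{B}$ times a sum of $\abs{M^{(\Lambda)}_{j,k}(t)}$ over far indices $k$; then
\begin{equation*}
\mathbb{E}\Bigl(\sup_t\mathbf{P}_t^{(\Lambda)}(A,B)\Bigr)\lesssim \norm{A}\norm{B}\sum_{k\le \tilde l}\sum_{k'\ge \tilde r}Ce^{-\eta\abs{k-k'}}\le \frac{C}{(1-e^{-\eta})^2}\norm{A}\norm{B}e^{-\eta(\tilde r-\tilde l)},
\end{equation*}
with a constant depending only on $C,\eta$. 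Alternatively, your near/far scheme can be salvaged: the CAR estimate $\bigl\lVert\sum_k(u_kc_k+v_kc_k^*)\bigr\rVert\le\bigl(\sum_k(\abs{u_k}^2+\abs{v_k}^2)\bigr)^{1/2}$ applied to rows of the unitary $e^{-itM^{(\Lambda)}}$ gives $\norm{N_j},\norm{F_j}\le 1$, so each near factor $2N_j^*N_j-I$ is a self-adjoint contraction, and a telescoping identity for the difference of the two products writes $T_{\mathrm{far}}(t)$ as a sum of only $O(m_*-m)$ terms, each linear in one $F_j$; since $\mathbb{E}\sup_t\norm{F_j}\lesssim e^{-\eta(m^*-j)}$, summing over $j\le m_*$ again yields $C'$ depending only on $C$ and $\eta$. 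Either repair closes the gap; as written, the product expansion with an $(m_*-m)$-dependent constant does not.
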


As remarked in \cite[Remark 3.3]{HSS12}, the theorem applies to more general observables. Indeed, a general upper bound on $\mathbf{P}_t^{(\Lambda)}(A,B)$ is given in Proposition \ref{prop:hss-upper} below (we follow the same techniques as in \cite{HSS12}, thus Proposition \ref{prop:hss-upper} is a straightforward extension). Also notice that in the statement of the theorem, the pair of observables $(A, B)$ belongs to $\mathcal{D}^{(\Lambda)}$.

On the other hand, dynamical localization is known from the Anderson localization theory for a family of disordered matrices, including, for example, $M$ with $\gamma_j\equiv 0$ (which corresponds to the isotropic XY chain), $\mu_j = \mu$ constant and $\nu_j$ being i.i.d. whose common distribution is absolutely continuous with bounded and compactly supported density (see the discussion in Section 4.1 of \cite{HSS12}).

Following the technique from the proof of \cite[Theorem 3.2]{HSS12}, we can give the following upper bound on $\mathbf{P}^{(\Lambda)}_t(A, B)$ with $(A, B)\in\mathcal{D}^{(\Lambda)}$ and, moreover, with the assumption that $A$ is \textit{decomposable}. That is, we assume that given that the support of $A$ is $S_1 = \set{s_1 < s_2 < \cdots < s_k}$, $A$ can be written as
\begin{align}\label{eq:decomp}
A = \left(\bigotimes_{s\in S_1}A_{s}\right)\otimes\mathbb{I}_{\Lambda\setminus S_1},
\end{align}
where for $s\in S_1$, $A_s$ is an observable on the local Hilbert space $\C^2$ on site $s$.

\begin{prop}\label{prop:hss-upper}
There exists a constant $C > 0$ independently of $\Lambda$, such that for all pairs of observables $(A, B)\in\mathcal{D}^{(\Lambda)}$ with $A$ supported on $S_1$ and $B$ supported on $S_2$, and $A$ being decomposable as in \eqref{eq:decomp},
\begin{align}\label{eq:prop-eq}
\begin{split}
\mathbf{P}_t^{(\Lambda)}(A, B) &\leq C\norm{A}\norm{B}\sum_{s\in S_1}\sum_{k\leq s}\sum_{k'\geq r}\abs{M_{k, k'}^{(\Lambda)}(t)}\\
&\leq C\abs{S_1}\norm{A}\norm{B}\sum_{k\leq l}\sum_{k'\geq r}\abs{M_{k,k'}^{(\Lambda)}(t)},
\end{split}
\end{align}
where $\abs{S_1}$ is the cardinality of $S_1$, $l=\max\set{s\in S_1}$, and $r = \min\set{s\in S_2}$ {\rm (}we assume that $l < r${\rm )}.
\end{prop}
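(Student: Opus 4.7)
The plan is to follow the scheme of Hamza--Sims--Stolz used in the proof of Theorem~\ref{thm:h-s-s}, extending their argument from single-site observables $A \in \mathcal{A}_{\{m_*\}}$ to decomposable $A$ supported on a general $S_1$. The first step exploits the tensor-product form \eqref{eq:decomp} together with the homomorphism property of $\tau_t^{(\Lambda)}$: writing
\[
\tau_t^{(\Lambda)}(A) = \prod_{s \in S_1} \tau_t^{(\Lambda)}(A_s)
\]
with any chosen ordering, the telescoping identity $[X_1 \cdots X_k, B] = \sum_j X_1 \cdots X_{j-1}\, [X_j, B]\, X_{j+1} \cdots X_k$ combined with the isometry of conjugation by $e^{-itH^{(\Lambda)}}$ and the fact that $\|A\| = \prod_{s} \|A_s\|$ reduces the claim to the per-site bound
\[
\|[\tau_t^{(\Lambda)}(A_s), B]\| \leq C \|A_s\| \|B\| \sum_{k \leq s} \sum_{k' \geq r} |M_{k,k'}^{(\Lambda)}(t)|
\]
for each $s \in S_1$ (noting that $s \leq l < r$ since $(A,B) \in \mathcal{D}^{(\Lambda)}$).

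To establish the per-site bound I would write $A_s \in \mathcal{A}_{\{s\}}$ as a universally bounded polynomial in $a_s$ and $a_s^*$, and then eliminate the Jordan--Wigner string via the relations $a_s = \sigma_m^{(z)} \cdots \sigma_{s-1}^{(z)} c_s$ and $a_s^* = \sigma_m^{(z)} \cdots \sigma_{s-1}^{(z)} c_s^*$ (which follow from $(\sigma_i^{(z)})^2 = \mathbb{I}$) together with $\sigma_i^{(z)} = 2 c_i^* c_i - \mathbb{I}$ from \eqref{eq:sigma}. This rewrites $A_s$ as a polynomial in $\{c_k, c_k^*\}_{m \leq k \leq s}$ whose monomial count and coefficient sizes are independent of $\Lambda$. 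Since $\tau_t^{(\Lambda)}$ is a $*$-homomorphism, the same polynomial expression holds with $c_k, c_k^*$ replaced by $\tau_t^{(\Lambda)}(c_k), \tau_t^{(\Lambda)}(c_k^*)$. A second round of the telescoping identity distributes $[\,\cdot\,, B]$ across this polynomial at the cost of a constant depending only on its structure, reducing the problem to bounding $\|[\tau_t^{(\Lambda)}(c_k), B]\|$ and $\|[\tau_t^{(\Lambda)}(c_k^*), B]\|$ for $k \leq s$.

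At this point I invoke the evolution identity \eqref{eq:evol-schro}: each $\tau_t^{(\Lambda)}(c_k)$ and $\tau_t^{(\Lambda)}(c_k^*)$ is a linear combination of $\{c_{k'}, c_{k'}^*\}_{k' \in \Lambda}$ whose coefficients are controlled entrywise by $|M_{k,k'}^{(\Lambda)}(t)|$. Taking the commutator with $B$ term-by-term, the geometric observation which kills most of the sum is that $c_{k'}, c_{k'}^* \in \mathcal{A}_{[m, k']}$, so whenever $k' < r = \min S_2$ they commute with $B \in \mathcal{A}_{S_2}$; only the terms with $k' \geq r$ survive, and each is bounded by $2\|B\|$ times $|M_{k,k'}^{(\Lambda)}(t)|$. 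Collecting everything yields the per-site bound, hence the first inequality in \eqref{eq:prop-eq}; the second inequality is immediate upon replacing $s$ by $l = \max S_1$ in the innermost summation and pulling out a factor $|S_1|$. The main obstacle is purely bookkeeping: one must track the block ($2 \times 2$) index convention for $M^{(\Lambda)}$ consistently with the indexing in \eqref{eq:c-vect}, and verify that the combinatorial constants produced by the polynomial expansion of $A_s$ and the two applications of the telescoping identity are genuinely uniform in $\Lambda$.
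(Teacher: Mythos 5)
Your overall architecture matches the paper's proof: decompose $A$ site-by-site and use the Leibniz/telescoping identity to reduce to single-site factors, express those in terms of $a_s, a_s^*$, pass to the Jordan--Wigner operators $c_k, c_k^*$, invoke \eqref{eq:evol-schro}, and kill all terms with $k' < r$ because $\mathcal{C}^{(\Lambda),(j)}$ for $j$ corresponding to sites $< r$ commutes with $B$. However, there is a genuine gap in the middle step. You claim that substituting $a_s = \sigma_m^{(z)}\cdots\sigma_{s-1}^{(z)} c_s$ and then $\sigma_i^{(z)} = 2c_i^*c_i - \mathbb{I}$ rewrites $A_s$ as a polynomial in $\{c_k, c_k^*\}_{m\le k\le s}$ ``whose monomial count and coefficient sizes are independent of $\Lambda$.'' That is false: the Jordan--Wigner string has length $s-m$, which grows as $\Lambda \to \mathbb{Z}$ with the observables held fixed, and expanding $\prod_{i=m}^{s-1}(2c_i^*c_i - \mathbb{I})$ produces on the order of $2^{\,s-m}$ monomials with coefficients as large as $2^{\,s-m}$. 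Consequently the ``constant depending only on its structure'' produced by your second telescoping step is not uniform in $\Lambda$ --- but uniformity of $C$ is exactly the content of the Proposition (and is what Theorem~\ref{thm:v-bound} needs when taking $\Lambda \to \mathbb{Z}$).

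The repair is the route the paper takes: do not expand the string into a polynomial. Keep $a_s$ as the product of the $s-m$ unit-norm factors $\sigma_j^{(z)}$ times $c_s$ and apply the Leibniz rule across that product, which gives
\[
\norm{[\tau_t^{(\Lambda)}(a_s), B]} \le \norm{[\tau_t^{(\Lambda)}(c_s), B]} + \sum_{j\le s-1}\norm{[\tau_t^{(\Lambda)}(\sigma_j^{(z)}), B]},
\]
a bound that is only \emph{linear} in the string length; then convert each $\sigma_j^{(z)}$ individually via \eqref{eq:sigma} and one more Leibniz application, $\norm{[\tau_t^{(\Lambda)}(\sigma_j^{(z)}), B]} \le 2\bigl(\norm{[\tau_t^{(\Lambda)}(c_j), B]} + \norm{[\tau_t^{(\Lambda)}(c_j^*), B]}\bigr)$, and finish with the bound of type \eqref{eq:c-right-bound} for each $c_j, c_j^*$. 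The linear growth in $s-m$ is then absorbed into the sum $\sum_{k\le s}$ that already appears in \eqref{eq:prop-eq}, not into the constant, yielding a uniform numerical constant (the paper gets $8$ at this stage). With this correction --- and noting, as you can, that the quadratic terms cause no trouble since the strings cancel in $a_s^*a_s = c_s^*c_s$, or can be handled by Leibniz from the linear ones --- your argument becomes essentially the paper's proof.
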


\begin{proof}
For $s\in S_1$, define $\tilde{A}_s$ as $A_s\otimes \mathbb{I}_{\Lambda\setminus\set{s}}$. We claim
\begin{align}\label{eq:claim-eq}
\mathbf{P}_t^{\Lambda}(A, B)\leq \sum_{s\in S_1}\left(\prod_{j\neq s}\norm{A_j}\right)\norm{[\tau_t^{(\Lambda)}(\tilde{A}_s), B]}.
\end{align}

Notice that $A = \prod_{s\in S_1}\tilde{A}_s$, where the product here means composition of operators (it does not matter in which order the composition is taken, since the operators $\tilde{A}_{s_1}$, $\tilde{A}_{s_2}$ commute for any $s_1, s_2\in S_1$). We now proceed by induction on the cardinality of $S_1$. The case $S_1 = \set{s_1}$ is trivial. Assume $S_1 = \set{s_1, s_2}$. By the Leibniz rule and the automorphism property of $\tau_t^{(\Lambda)}(\cdot)$, we can write
\begin{align*}
[\tau_t^{(\Lambda)}(\tilde{A}_{s_1}\tilde{A}_{s_2}), B] = \tau_t^{(\Lambda)}(\tilde{A}_{s_1})[\tau_t^{(\Lambda)}(\tilde{A}_{s_2}), B] + [\tau_t^{(\Lambda)}(\tilde{A}_{s_1}), B]\tau_t^{(\Lambda)}(\tilde{A}_{s_2}).
\end{align*}
Now the desired inequality follows from application of the triangle inequality. To complete the induction, for any $k\geq 2$ we can replace $\tilde{A}_{s_2}$ above with $\tilde{A}_{s_k}$, and $\tilde{A}_{s_1}$ with $\prod_{s\in S_1\setminus\set{s_k}}\tilde{A}_{s}$, and proceed by repeated application of the Leibniz rule. Thus it remains to give a suitable bound on $\norm{[\tau_t^{(\Lambda)}(\tilde{A}_s), B]}$ for every $s\in S_1$.

As has been mentioned above,
\begin{align*}
\beta=\set{a_s, a_s^*, a_s^*a_s, a_sa_s^*}
\end{align*}
forms a basis for $\mathcal{A}_{\set{s}}$. Notice that a bound on $\norm{[\tau_t^{(\Lambda)}(a_s^*a_s), B]}$ as well as $\norm{[\tau_t^{(\Lambda)}(a_sa_s^*), B]}$ can be given in terms of $\norm{[\tau_t^{(\Lambda)}(a_s^*), B]}$ and $\norm{[\tau_t^{(\Lambda)}(a_s), B]}$ by using the Leibniz rule as above. A bound on $\norm{[\tau_t^{(\Lambda)}(a_s^*), B]}$ (and, similarly, on $\norm{[\tau_t^{(\Lambda)}(a_s), B]}$) is given in the proof of \cite[Theorem 3.2]{HSS12}. For completeness, we repeat the computations here for $\norm{[\tau_t^{(\Lambda)}(a_s), B]}$ (the case with $a_s^*$ in place of $a_s$ is handled similarly).

With $\mathcal{C}^{(\Lambda)}$ from \eqref{eq:c-vect}, denote by $\mathcal{C}^{(\Lambda),(j)}$ the $j$-th component of $\mathcal{C}^{(\Lambda)}$. From \eqref{eq:evol-schro}, we obtain, for any $s\in S_1$,
\begin{align}\label{eq:c-right-bound}
\norm{[\tau_t^{(\Lambda)}(c_s), B]} = \norm{\sum_{j} M^{(\Lambda)}_{\tilde{s}, j}(t)[\mathcal{C}^{(\Lambda),(j)}, B]} \leq 2\norm{B}\sum_{j}\abs{M_{\tilde{s}, j}^{(\Lambda)}(t)},
\end{align}
where $j$ runs through all the column indices of $M^{(\Lambda)}$ and $\tilde{s} = m + 2i$ provided that $s = m + i$ (the bound on the right follows after an application of the triangle inequality followed by the observation that $\norm{\mathcal{C}^{(\Lambda),(j)}}=1$, and the observation that for all $j < r$, $[\mathcal{C}^{(\Lambda),(j)}, B] = 0$ since $\supp (\mathcal{C}^{(\Lambda), (j < r)})\subset[m, m+r-1]$). Since $a_s = \sigma_m^{(z)}\cdots \sigma_{s-1}^{(z)}c_{s}$ and $\norm{\sigma_{s}^{(z)}} = 1$ for all $s\in\Lambda$, we have, after a repeated application of the Leibniz rule,
\begin{align}\label{eq:a-right-bound}
\norm{[\tau_t^{(\Lambda)}(a_s), B]}\leq \norm{[\tau_t^{(\Lambda)}(c_s), B]} + \sum_{j\leq s-1}\norm{[\tau_t^{(\Lambda)}(\sigma_j), B]}.
\end{align}
Finally, using \eqref{eq:sigma} and an application of the Leibniz rule, we get
\begin{align}\label{eq:sigma-right-bound}
\norm{[\tau_t^{(\Lambda)}(\sigma_j), B]}\leq 2\left(\norm{[\tau_t^{(\Lambda)}(c_j), B]}+\norm{[\tau_t^{(\Lambda)}(c_j^*), B]}\right),
\end{align}
where for all $s\in\Lambda$, $\norm{[\tau_t^{(\Lambda)}(c_s^*), B]}$ is bounded similarly to $\norm{[\tau_t^{(\Lambda)}(c_s), B]}$, as in \eqref{eq:c-right-bound}, with $\tilde{s} = m + 2i + 1$ when $s = m + i$. Putting \eqref{eq:c-right-bound}, \eqref{eq:a-right-bound}, and \eqref{eq:sigma-right-bound} together, we get
\begin{align}\label{eq:a-bound}
\norm{[\tau_t^{(\Lambda)}(a_s), B]}\leq 8\norm{B}\sum_{k\leq s}\sum_{k'\geq r}\abs{M_{k, k'}^{(n)}(t)}.
\end{align}
Now, after writing $\mathcal{A}_s$, $s\in S_1$, as a linear combination in the basis $\beta$ and applying the Leibniz rule to $[\tau_t^{(\Lambda)}(a_sa_s^*), B]$ and $[\tau_t^{(\Lambda)}(a_s^*a_s), B]$, \eqref{eq:a-bound} together with \eqref{eq:claim-eq} implies \eqref{eq:prop-eq}.
\end{proof}

\begin{remark}\label{rem:zero-vel-ext}
A few remarks are in order here.

\begin{enumerate}

\item Notice that Theorem \ref{thm:h-s-s} now extends to all pairs of observables $(A, B)\in \mathcal{D}^{(\Lambda)}$ with decomposable $A$. The price to pay here is, in the worst case, the extra factor $\abs{\supp(A)}$ in the Lieb-Robinson bound.

\item The assumption of decomposability of $A$ can be dropped; the price to pay is, in the worst case, an additional factor (in addition to $\abs{S_1}$) of $4^{\abs{S_1}}$ in the Lieb-Robinson bound. This factor comes from the fact that the local algebra is four-dimensional, thus the tensor product of local algebras over the sites in $S_1$ is $4^{\abs{S_1}}$ dimensional. Hence each operator supported on $S_1$ can be written, in the worst case, as a linear combination of $4^{\abs{S_1}}$ decomposable operators.

\item Based on the proof of Proposition \ref{prop:hss-upper}, one could tighten the bounds depending on the specific nature of the given observable $A$. For example, when $A$ is a linear combination of the observables $c_l$, $c_l^*$, $c_lc_l^*$ and $c_l^*c_l$, $l\in\Lambda$, a bound of the type \eqref{eq:c-right-bound} holds.

\item Finally, as a special case, it is easy to see that when $A\in \mathcal{A}_{\set{l}}$, we in fact have
\begin{align*}
\mathbf{P}_t^{(\Lambda)}(A, B)\leq C\norm{A}\norm{B}\sum_{k\leq l}\sum_{k'\geq r}\abs{M_{k,k'}^{(n)}(t)},
\end{align*}
which is precisely the bound presented in the proof of \cite[Theorem 3.2]{HSS12}.
\end{enumerate}
\end{remark}

\subsection{Lower Bounds on the Lieb-Robinson Velocity}
In view of the recent developments, the question of the existence of a lower bound on propagation arises naturally. In this section we give a lower bound on the Lieb-Robinson velocity.

So far we have been working on finite and fixed lattices $\Lambda = [m, n]\cap \Z$. In what follows, we continue to work on finite but in general not fixed lattices (in fact, in Theorem \ref{thm:v-bound} we will be taking the limit $\Lambda\rightarrow\Z$). Notice that the definition of $M^{(\Lambda)}$ in \eqref{eq:M} is such that for every $\Lambda = [m, n]\cap\Z$, $M^{(\Lambda)}$ can be viewed as the truncation of $M^{(\infty)}$, the infinite matrix constructed in the same way as ${M}^{(\Lambda)}$ with $m\rightarrow\ -\infty$ and $n\rightarrow\infty$. Then, if we denote by $M^{(\infty)}(t)$ the matrix $e^{-itM^{(\infty)}}$, we see that for every fixed $t\in\R$, $M^{(\Lambda)}(t)\rightarrow M^{(\infty)}(t)$ in the strong operator topology (we view $M^{(\Lambda)}$ as an operator on $\ell^2(\Z)$ after extending it to the infinite matrix by padding it with zero rows and columns). For this reason, to avoid ambiguity when taking limits, it is convenient to introduce the following convention. Let us take
\begin{align*}
\mathcal{C}:=(\dots, c_{-1}, c_{-1}^*, c_{0}, c_0^*, c_1, c_1^*\dots),
\end{align*}
such that for any $j$, $c_j = \langle \delta_{2j-1}, \mathcal{C}\rangle$, and $c_j^* = \langle \delta_{2j}, \mathcal{C}\rangle$, with $\set{\delta_j}_{j\in\Z}$ being the canonical basis of $\ell^2(\Z)$. Then for $\Lambda = [m, n]\cap \Z$, we can view $M^{(\Lambda)}(t)$ as an operator acting on $\mathcal{C}$ such that, for every $j\in \Lambda$, we have
\begin{align*}
\tau_t^{\Lambda}(c_j) = \langle \delta_{2j-1}, M^{(\Lambda)}(t)\mathcal{C} \rangle\hspace{2mm}\text{ and }\hspace{2mm}\tau_t^{\Lambda}(c_j^*)=\langle \delta_{2j}, M^{(\Lambda)}(t)\mathcal{C}\rangle.
\end{align*}

We proceed with this convention in mind.

\begin{lemma}\label{thm:propagator-lower}
For any entry $m^{(\Lambda)}(t)$ of the matrix $M^{(\Lambda)}(t)$ above the main diagonal, there exists $(A, B)\in\mathcal{D}^{(\Lambda)}$ with $\norm{A} = \norm{B} = 1$, such that $\mathbf{P}_t^{(\Lambda)}(A, B) \geq \abs{m^{(\Lambda)}(t)}$ for every $t\in \R$. Furthermore, if $m^{(\Lambda)}(t)$ is the $(\tilde{l}, \tilde{r})$ entry of the matrix $M^{(\Lambda)}(t)$, then we can choose $(A, B)$ above such that if $l = \max\set{s\in\supp(A)}$ and $r = \min\set{s\in\supp(B)}$, then $\abs{r - l}\geq \frac{\abs{\tilde{r}-\tilde{l}}}{2}-1$.
\end{lemma}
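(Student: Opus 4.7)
The plan is to exhibit, for any super-diagonal entry $(\tilde l,\tilde r)$ of $M^{(\Lambda)}(t)$, a pair $(A,B)\in\mathcal{D}^{(\Lambda)}$ of unit-norm observables such that $\|[\tau_t^{(\Lambda)}(A),B]\|\geq |(M^{(\Lambda)}(t))_{\tilde l,\tilde r}|$, with $(A,B)$ chosen independently of $t$. The inspiration comes from \eqref{eq:evol-schro}: since $\tau_t^{(\Lambda)}(c_j)=\sum_k (M^{(\Lambda)}(t))_{2j-1,k}\,\mathcal{C}^{(k)}$ and similarly $\tau_t^{(\Lambda)}(c_j^*)=\sum_k (M^{(\Lambda)}(t))_{2j,k}\,\mathcal{C}^{(k)}$, taking $A\in\{c_j,c_j^*\}$ (where $j$ is the lattice site matching the row index $\tilde l$) provides access to the entire row $\tilde l$ of $M^{(\Lambda)}(t)$; the task is then to choose $B$ so that the commutator with $B$ isolates the relevant columns.

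The natural choice is $B=c_r^*c_r=a_r^*a_r$, a rank-one projection with unit norm and support $\{r\}$, where $r$ is the lattice site matching the column index $\tilde r$. The first step is to verify that $B$ commutes with every $c_s^{(\#)}$ for $s\neq r$: this reduces to the fact that $a_r^*a_r$ is diagonal in the $\sigma_r^{(z)}$-eigenbasis at site $r$, and so commutes with every $\sigma_k^{(z)}$ (in particular with every Jordan--Wigner string) and with every $a_s^{(\#)}$ at sites $s\neq r$. A short CAR computation at site $r$, using $c_r^2=0$, $(c_r^*)^2=0$ and $\{c_r,c_r^*\}=1$, gives $[c_r,B]=c_r$ and $[c_r^*,B]=-c_r^*$, so only two terms in the row expansion survive:
\begin{equation*}
[\tau_t^{(\Lambda)}(c_j),B]=(M^{(\Lambda)}(t))_{2j-1,\,2r-1}\,c_r-(M^{(\Lambda)}(t))_{2j-1,\,2r}\,c_r^*.
\end{equation*}
Since the Jordan--Wigner string is unitary, $\|\alpha c_r+\beta c_r^*\|=\|\alpha a_r+\beta a_r^*\|$, and the latter acts on the single qubit at $r$ as an antidiagonal $2\times 2$ matrix of operator norm $\max(|\alpha|,|\beta|)$. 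Consequently,
\begin{equation*}
\|[\tau_t^{(\Lambda)}(c_j),B]\|=\max\bigl(|(M^{(\Lambda)}(t))_{2j-1,2r-1}|,\,|(M^{(\Lambda)}(t))_{2j-1,2r}|\bigr),
\end{equation*}
which dominates each of those two entries individually. The symmetric choice $A=c_j^*$ covers the other two entries of the $(j,r)$ block, so the four entries of any off-diagonal $2\times 2$ block of $M^{(\Lambda)}(t)$ are all handled.

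The final step is to verify the support conditions. With $\supp A=[m,j]$ (by Jordan--Wigner) and $\supp B=\{r\}$, the requirement $r>j$---which holds whenever $\tilde l$ and $\tilde r$ lie in distinct $2\times 2$ blocks---yields $(A,B)\in\mathcal{D}^{(\Lambda)}$ with $l=j$; and for every entry within the $(j,r)$ block one has $|\tilde r-\tilde l|\leq 2(r-j)+1$, so $|\tilde r-\tilde l|/2-1\leq(r-j)-1/2<r-j=r-l$, which is the bound demanded by the statement. The main obstacle I anticipate is the lone within-block super-diagonal entry $(2j-1,2j)$: the construction above would require $r=j$ and hence overlapping supports at site $j$, yet the constraint $|r-l|\geq-1/2$ is vacuous in this case, so one needs merely to produce \emph{some} admissible $(A,B)$ whose commutator norm dominates $|(M^{(\Lambda)}(t))_{2j-1,2j}|$. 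I would handle this by a mild rearrangement---extracting the within-block entry by combining bounds over the two adjacent blocks while tracking the Jordan--Wigner signs carefully---since a direct one-site bosonic $B$ cannot separate the $c_j$- and $c_j^*$-contributions on its own. The heart of the lemma, namely the off-block-diagonal entries, follows directly from \eqref{eq:evol-schro} together with the CAR at the target site.
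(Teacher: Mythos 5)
Your argument is correct, and it shares the paper's skeleton: expand $\tau_t^{(\Lambda)}(c_l)$ (or $\tau_t^{(\Lambda)}(c_l^*)$) via \eqref{eq:evol-schro} and kill all but the site-$r$ contributions by commuting against a single-site observable, then check norms and supports. The mechanism differs slightly. The paper chooses $B$ to be $a_r^*$ or $a_r$, i.e.\ one of the four pairs $(c_l,a_r^*)$, $(c_l,a_r)$, $(c_l^*,a_r)$, $(c_l^*,a_r^*)$ according to the parities of $\tilde l,\tilde r$; since $a_r^{\#}$ anticommutes with $\sigma_r^{(z)}$, the terms coming from sites $s>r$ do \emph{not} vanish as operators, and the paper disposes of them by applying the commutator to the product vector $\bigotimes_\Lambda\left(\begin{smallmatrix}1\\0\end{smallmatrix}\right)$ (or $\bigotimes_\Lambda\left(\begin{smallmatrix}0\\1\end{smallmatrix}\right)$) and using $\norm{[\,\cdot\,,\cdot\,]}\ge\norm{[\,\cdot\,,\cdot\,]v}$. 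You instead take $B=a_r^*a_r$, which commutes with $\sigma_r^{(z)}$ and with everything supported away from site $r$, so only the two columns $2r-1,2r$ survive identically as operators, and the exact identity $\norm{[\tau_t^{(\Lambda)}(c_l),B]}=\max\bigl(\abs{M_{2l-1,2r-1}^{(\Lambda)}(t)},\abs{M_{2l-1,2r}^{(\Lambda)}(t)}\bigr)$ yields the bound for both columns of the block with a single $B$ and no test vector; this is a clean and arguably tidier variant, and your norm and support verifications ($\norm{A}=\norm{B}=1$, $l=\max\supp(A)$, $\supp(B)=\{r\}$, $\abs{\tilde r-\tilde l}\le 2(r-l)+1$) match the paper's. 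The one point you flag, the within-block superdiagonal entry $(\tilde l,\tilde r)=(2l-1,2l)$, is not a defect of your route relative to the paper: the paper's proof likewise restricts to the four cases with $l<r$ and never treats that entry (it is also irrelevant for the application in Theorem \ref{thm:v-bound}, where $\abs{\tilde r-\tilde l}$ is large). Your sketched ``rearrangement over adjacent blocks'' for that entry is vague and I would not rely on it, but since a one-site $B$ to the right of site $l$ commutes with $c_l^*$ no commutator of the above form can see that coefficient anyway, and nothing downstream needs it.
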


\begin{proof}
Provided that $m^{(\Lambda)}(t)$ is the $(\tilde{l}, \tilde{r})$ entry of $M^{(\Lambda)}(t)$ with $\tilde{r} > \tilde{l}$, we need to handle the following four cases, with $l, r\in \Lambda$, $l < r$.

\begin{enumerate}

\item $\tilde{l} = 2l-1$ and $\tilde{r} = 2r-1$.

\item $\tilde{l} = 2l-1$ and $\tilde{r} = 2r$.

\item $\tilde{l} = 2l$ and $\tilde{r} = 2r-1$.

\item $\tilde{l} = 2l$ and $\tilde{r} = 2r$.

\end{enumerate}

We begin with (1). In this case, let us take $(A, B) = (c_l, a_r^*)$. From \eqref{eq:evol-schro} we have, with $\mathcal{C}^{(j)}$ being the $j$th element of $\mathcal{C}$,
\begin{align*}
\mathbf{P}_t^{(\Lambda)}(c_l, a_r^*) = \norm{\left[\sum_{j\hspace{1mm}\mathrm{ odd}} M_{\tilde{l}, j}^{(\Lambda)}(t)\mathcal{C}^{(j)}+ \sum_{j\hspace{1mm}\mathrm{ even}} M_{\tilde{l}, j}^{(\Lambda)}(t)\mathcal{C}^{(j)},\hspace{1mm} a_r^*\right]}.
\end{align*}
Observe that for all $j < r$, $a_r^*$ commutes with $\mathcal{C}^{(2j-1)}$ and $\mathcal{C}^{(2j)}$; also, $a_r^*$ commutes with $\mathcal{C}^{(2r)}$. Thus we can write
\begin{align*}
&\mathbf{P}_t^{(\Lambda)}(c_l, a_r^*) \\ &= \norm{\left[\sum_{j > r} M_{\tilde{l}, 2j-1}^{(\Lambda)}(t)\mathcal{C}^{(2j-1)}, a_r^*\right]+ \left[\sum_{j > r} M_{\tilde{l}, 2j}^{(\Lambda)}(t)\mathcal{C}^{(2j)}, a_r^*\right] + \left[M_{\tilde{l}, \tilde{r}}^{(\Lambda)}(t)\mathcal{C}^{(\tilde{r})}, a_r^*\right]}.
\end{align*}
Notice that for each $i\in\Lambda$, $\sigma_i^{(z)}v=v$ with $v=\bigotimes_\Lambda \left(\begin{smallmatrix} 1\\ 0\end{smallmatrix}\right)$. Thus, since $a_r^*\left(\begin{smallmatrix}1\\0\end{smallmatrix}\right)=0$, we have (recall that for $j\in\Lambda$, $\mathcal{C}^{(2j)}$ and $\mathcal{C}^{(2j-1)}$ are defined by $\left(\prod_{i < j}\sigma_i^{(z)}\right)a_j^*$ and $\left(\prod_{i < j}\sigma_i^{(z)}\right)a_j$, respectively, where the product denotes composition of operators)
\begin{align*}
\left(\left[\sum_{j > r}M_{\tilde{l},2j-1}^{(\Lambda)}(t)\mathcal{C}^{(2j-1)}, a_r^*\right] + \left[\sum_{j > r} M_{\tilde{l},2j}^{(\Lambda)}(t)\mathcal{C}^{(2j)}, a_r^*\right]\right)v = 0,
\end{align*}
and
\begin{align*}
[M_{\tilde{l}, \tilde{r}}^{(\Lambda)}(t)\mathcal{C}^{(\tilde{r})}, a_r^*]v = M_{\tilde{l}, \tilde{r}}^{(\Lambda)}(t)v.
\end{align*}
Thus we have
\begin{align*}
\mathbf{P}_t^{(\Lambda)}(c_l, a_r^*)\geq \norm{[M_{\tilde{l},\tilde{r}}^{(\Lambda)}(t)\mathcal{C}^{(\tilde{r})}, a_r^*]v}=\norm{M_{\tilde{l},\tilde{r}}^{(\Lambda)}(t)v}=\abs{M_{\tilde{l},\tilde{r}}^{(\Lambda)}(t)}.
\end{align*}

Now the cases (2), (3) and (4) are handled similarly by taking $(A, B)$ to be $(c_l, a_r)$, $(c_l^*, a_r)$, and $(c_l^*, a_r^*)$, respectively; in the cases (2) and (3) the vector $v$ above should be $\bigotimes_\Lambda\left(\begin{smallmatrix}0\\1\end{smallmatrix}\right)$ instead of $\bigotimes_\Lambda\left(\begin{smallmatrix}1\\0\end{smallmatrix}\right)$, noting that $a_r\left(\begin{smallmatrix}0\\1\end{smallmatrix}\right) = 0$, and for every $i$, $\sigma_i^{(z)}v = -v$.

Finally, it is evident from the form of $\tilde{l}$ and $\tilde{r}$, that $\abs{r - l}\geq \frac{\abs{\tilde{r}-\tilde{l}}}{2}-1$.
\end{proof}

We are now ready to prove the following result, which implies Theorem \ref{thm:main-1}.

\begin{thm}\label{thm:v-bound}
With the underlying Hamiltonian being \eqref{eq:haml}, assume that the couplings $\mu_j \in \mathbb{R}\setminus \{0\}$, $\gamma_j \in \mathbb{R} \setminus\{\pm1\}$, $\nu_j \in \mathbb{R}$  form periodic sequences, not necessarily of the same period, let $M^{(\infty)}$ denote the limit of $M^{(\Lambda)}$ as $\Lambda \rightarrow\mathbb{Z}$, and let $Q$ be the operator associated to $M^{(\infty)}$ as in Corollary \ref{cor:lr}.

If for some $n_0\in\mathbb{N}$, the Lieb-Robinson bound holds with some positive constants $C$ and $\eta$ and some velocity $v$ as in Theorem \ref{thm:ns} on every lattice $\Lambda$ with $\abs{\Lambda}\geq n_0$, then $v\geq \norm{Q}$.
\end{thm}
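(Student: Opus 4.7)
The plan is to argue by contradiction: suppose \eqref{eq:lr-spec} holds on all lattices $\Lambda$ with $|\Lambda| \geq n_0$ for some positive $C, \eta$ and some $v$ strictly less than $\|Q\|$, and derive a contradiction by producing commutators that decay only polynomially in $T$ between observables whose supports are separated by roughly $\|Q\| T$.

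Fix $\epsilon > 0$ with $v + \epsilon < \|Q\|$. Apply Corollary \ref{cor:lr} to the infinite block Jacobi matrix $M^{(\infty)}$ (block size $m = 2$, one pair $(c_j, c_j^*)$ per site). This yields constants $T_0, \tilde C > 0$ and $K \in \mathbb{Z}$ such that for every $T \geq T_0$ there exist indices $n, k$ with $2(\|Q\| - \epsilon) T \leq |n| \leq 2\|Q\| T + 1$, $|k| \leq K$, and
\[
\bigl| \langle \delta_n, e^{-iTM^{(\infty)}} \delta_k\rangle \bigr|^2 \geq \tilde C / T.
\]
Because $M^{(\Lambda)} \to M^{(\infty)}$ strongly as $\Lambda \to \mathbb{Z}$, strong resolvent convergence together with the bounded functional calculus forces $e^{-iTM^{(\Lambda)}} \to e^{-iTM^{(\infty)}}$ strongly for each fixed $T$. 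Hence for each such $T$ I may choose $\Lambda$ large (with $|\Lambda| \geq n_0$) containing both $n$ and $k$ in its index range so that the corresponding matrix element of $e^{-iTM^{(\Lambda)}}$ has modulus squared at least $\tilde C/(2T)$. Taking complex conjugates (equivalently, replacing $T$ by $-T$) if needed, I may further assume the relevant entry lies above the diagonal.

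Now invoke Lemma \ref{thm:propagator-lower}: there is a pair $(A, B) \in \mathcal{D}^{(\Lambda)}$ with $\|A\| = \|B\| = 1$ such that
\[
\mathbf{P}_T^{(\Lambda)}(A, B) \geq \bigl| \langle \delta_n, e^{-iTM^{(\Lambda)}} \delta_k\rangle \bigr| \geq \sqrt{\tilde C/(2T)},
\]
whose supports satisfy, with $l = \max \supp A$ and $r = \min \supp B$,
\[
r - l \geq \tfrac{1}{2}|n - k| - 1 \geq (\|Q\| - \epsilon) T - \tfrac{K}{2} - 1.
\]
The assumed Lieb--Robinson bound then gives
\[
\sqrt{\tilde C/(2T)} \leq C \exp\!\bigl(-\eta\bigl((\|Q\| - \epsilon - v) T - \tfrac{K}{2} - 1\bigr)\bigr),
\]
whose right-hand side decays exponentially in $T$, while the left-hand side is $\Theta(T^{-1/2})$. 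This is impossible for all sufficiently large $T$, contradicting the assumption $v < \|Q\|$. Letting $\epsilon \downarrow 0$ yields $v \geq \|Q\|$.

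The only real subtlety is transferring the ballistic lower bound from $M^{(\infty)}$ (where Corollary \ref{cor:lr} is stated) down to the finite-volume matrices $M^{(\Lambda)}$ (where Lieb--Robinson applies). Fortunately Corollary \ref{cor:lr} only requires a single matrix element of the unitary at each $T$, and the uniform bound $|k| \leq K$ on the initial index keeps the initial vector safely inside any sufficiently large $\Lambda$, so the qualitative strong convergence $e^{-iTM^{(\Lambda)}} \to e^{-iTM^{(\infty)}}$ is all that is needed; no quantitative rate on the finite-volume approximation enters the argument.
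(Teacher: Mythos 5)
Your proposal is correct and takes essentially the same route as the paper: it combines Corollary \ref{cor:lr} for $M^{(\infty)}$ with Lemma \ref{thm:propagator-lower} and the assumed Lieb--Robinson bound, so that a propagator matrix element of size of order $T^{-1/2}$ at distance roughly $2\lVert Q\rVert T$ contradicts the exponential bound unless $v \ge \lVert Q\rVert - \epsilon$. The only (immaterial) difference is where the infinite-volume limit is taken: you transfer the lower bound from $M^{(\infty)}$ to a large finite $\Lambda$ by strong convergence of $e^{-iTM^{(\Lambda)}}$, whereas the paper first applies the Lieb--Robinson bound on each finite $\Lambda$ and then lets $\Lambda \to \mathbb{Z}$ to bound the entries of $M^{(\infty)}(t)$.
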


\begin{proof}
Fix $C, \eta > 0$, $v\ge 0$ independently of the lattice length and the choice of observables $(A, B)\in\mathcal{D}^{(\Lambda)}$ as in Theorem \ref{thm:ns}. Since $\mu_j \neq 0$ and $\gamma_j \neq \pm 1$, by \eqref{eq:blocks}, $\det \Gamma_j \neq 0$, so Corollary \ref{cor:lr} applies to $M^{(\infty)}$. For $\epsilon> 0$, fix $\tilde{C}$, $T_0 > 0$, and $K\in\Z$ as in Corollary \ref{cor:lr}. Denote the evolution $e^{-itM^{(\infty)}}$ by $M^{(\infty)}(t)$.

For large enough $\Lambda$, by Lemma \ref{thm:propagator-lower}, we have, for an appropriate pair $(A, B)\in\mathcal{D}^{(\Lambda)}$ with $\norm{A} = \norm{B} = 1$,
\begin{align*}
\abs{M_{\tilde{l},\tilde{r}}^{(\Lambda)}(t)}\leq \mathbf{P}_t^{(\Lambda)}(A, B)\leq C e^{-\eta(\abs{r-l} + v\abs{t})},
\end{align*}
with $\set{r}=\supp(B)$, $l = \max\set{s\in\supp(A)}$ and $\abs{r-l}\geq \frac{\abs{\tilde{r}-\tilde{l}}}{2}-1$. Taking $\Lambda \to \mathbb{Z}$, this implies
\begin{equation}\label{Minfty}
\abs{M_{\tilde{l},\tilde{r}}^{(\infty)}(t)}  \leq C e^{-\eta\left(\frac{\abs{\tilde r-\tilde l}}2 -1 + v\abs{t}\right)}
\end{equation}
for all $\tilde l < \tilde r$ and all $t\in \mathbb{R}$.

By Corollary \ref{cor:lr}, for all $t > T_0>0$, there exist $\tilde{l} = \tilde{l}(t), \tilde{r} = \tilde{r}(t)\in \Z$ such that
\begin{align*}
\abs{\tilde{l}}\leq K, \hspace{2mm} 2(\norm{Q}-\epsilon)t\leq \abs{\tilde{r}}\leq 2\norm{Q}t+1,\hspace{2mm}\text{and}\hspace{2mm}\abs{M^{(\infty)}_{\tilde{l},\tilde{r}}(t)}^2\geq \frac{\tilde{C}}{t}.
\end{align*}
Then $\lvert \tilde r - \tilde l \rvert \ge 2(\lVert Q\rVert - \epsilon)t - K$, so plugging into \eqref{Minfty} we get
\[
\sqrt{\frac{\tilde C}t} \le C e^{-\eta\left((\lVert Q\rVert - \epsilon)t - K/2 -1 + v t\right)}.
\]
For this inequality to hold for arbitrarily large $t$, we must have $v \ge \lVert Q \rVert - \epsilon$; otherwise the right-hand side would decay exponentially as $t\to\infty$, which would be a contradiction. Since $\epsilon >0$ was arbitrary, this concludes the proof.
\end{proof}

\subsection{Concluding Remarks}\label{sec:remarks}

Notice that the zero velocity Lieb-Robinson bound (at least for the pairs of observables from $\mathcal{D}^{(\Lambda)}$) implies uniform dynamical localization for $M^{(\infty)}$. More precisely, with the underlying Hamiltonian being \eqref{eq:haml} and the notation from Theorem \ref{thm:v-bound}, from Lemma \ref{thm:propagator-lower} we get

\begin{thm}\label{thm:hss-converse}
Let $C, \eta > 0$ fixed and independent of $\Lambda$ and any pair of observables in $\mathcal{D}^{(\Lambda)}$. Assume that the Lieb-Robinson bound holds for all $\Lambda$ with the constants $C$ and $\eta$ and the velocity $v = 0$. Then $M^{(\infty)}(t)$ is uniformly dynamically localized. That is,
\begin{align*}
\sup_{t\in\R}\set{M_{l, r}^{(\infty)}(t)} \leq Ce^{-\eta(r-l)}
\end{align*}
for all $l < r \in \Z$.
\end{thm}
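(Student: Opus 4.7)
The plan is to derive the uniform dynamical localization of $M^{(\infty)}(t)$ directly from the zero-velocity Lieb--Robinson bound by running Lemma \ref{thm:propagator-lower} \emph{in reverse}: the lemma already bounds matrix entries of $M^{(\Lambda)}(t)$ \emph{below} by commutator norms of suitably chosen observables, and the hypothesized Lieb--Robinson bound bounds those commutator norms \emph{above}.

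First, fix matrix indices $\tilde l < \tilde r$ and a finite lattice $\Lambda \subset \mathbb{Z}$ large enough to contain the corresponding sites. Apply Lemma \ref{thm:propagator-lower} to produce a pair $(A,B) \in \mathcal{D}^{(\Lambda)}$ with $\lVert A \rVert = \lVert B \rVert = 1$, supported on sets whose right endpoint $l$ and left endpoint $r$ satisfy $r - l \ge \tfrac{1}{2}(\tilde r - \tilde l) - 1$, and such that
\[
\lvert M_{\tilde l, \tilde r}^{(\Lambda)}(t) \rvert \le \mathbf{P}_t^{(\Lambda)}(A,B).
\]
The hypothesis of the theorem (a zero-velocity Lieb--Robinson bound with constants $C,\eta$ uniform in $\Lambda$ and $(A,B) \in \mathcal{D}^{(\Lambda)}$) then yields
\[
\mathbf{P}_t^{(\Lambda)}(A,B) \le C e^{-\eta (r-l)} \le C e^{\eta} e^{-(\eta/2)(\tilde r - \tilde l)},
\]
valid for \emph{every} $t\in\mathbb{R}$. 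Thus, after absorbing the factor $e^{\eta}$ into the constant and renaming $\eta/2 \mapsto \eta$ (which is the usual abuse of notation),
\[
\sup_{t\in\mathbb{R}} \lvert M_{\tilde l, \tilde r}^{(\Lambda)}(t) \rvert \le C' e^{-\eta' (\tilde r - \tilde l)}
\]
uniformly in $\Lambda$.

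Second, pass to the infinite volume limit. As observed immediately before Lemma \ref{thm:propagator-lower}, for every fixed $t$ we have $M^{(\Lambda)}(t) \to M^{(\infty)}(t)$ strongly as $\Lambda \to \mathbb{Z}$; in particular each matrix entry converges: $M^{(\Lambda)}_{\tilde l, \tilde r}(t) \to M^{(\infty)}_{\tilde l, \tilde r}(t)$. Since the upper bound above is independent of $\Lambda$, it survives the limit, giving
\[
\sup_{t\in\mathbb{R}} \lvert M_{\tilde l, \tilde r}^{(\infty)}(t) \rvert \le C' e^{-\eta' (\tilde r - \tilde l)}
\]
for all $\tilde l < \tilde r \in \mathbb{Z}$, which is the claimed uniform dynamical localization.

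The only subtle point, and the reason one loses a factor of $2$ in the exponent relative to the hypothesis, is the site-versus-matrix-index bookkeeping built into Lemma \ref{thm:propagator-lower}: a pair of observables at sites a distance $d$ apart can only detect matrix entries at distance up to $2(d+1)$ in $M^{(\Lambda)}$. This is the main (and essentially only) technical content; everything else is triangle-inequality bookkeeping and strong operator convergence, which is uniform in $t$ for each fixed matrix entry because the supremum is taken after the inequality has been established pointwise in $t$ with $\Lambda$-independent constants.
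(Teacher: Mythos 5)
Your proof is correct and is exactly the argument the paper intends: Theorem~\ref{thm:hss-converse} is presented as an immediate consequence of Lemma~\ref{thm:propagator-lower} combined with the hypothesized zero-velocity bound for pairs in $\mathcal{D}^{(\Lambda)}$, followed by the strong convergence $M^{(\Lambda)}(t)\to M^{(\infty)}(t)$ as $\Lambda\to\mathbb{Z}$. Your explicit bookkeeping of the factor $2$ and the constant $e^{\eta}$ (consistent with the bound \eqref{Minfty} in the proof of Theorem~\ref{thm:v-bound}) is, if anything, more careful than the theorem's statement, which quotes the constants $C,\eta$ unchanged.
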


Notice that Theorem \ref{thm:hss-converse} is essentially the converse of Theorem \ref{thm:h-s-s}. Indeed, this implies Theorem \ref{thm:main-2}.

On the other hand, (the proof of) \cite[Theorem 7.5]{dRio96} shows that uniform dynamical localization implies pure point spectrum with uniformly localized eigenfunctions. Thus we have

\begin{prop}\label{thm:pp-spectrum}
Under the hypotheses of Theorem~\ref{thm:hss-converse}, $M^{(\infty)}$ has pure point spectrum with uniformly localized eigenfunctions.
\end{prop}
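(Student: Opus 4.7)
The plan is to chain together two already-established facts. First, I would observe that Theorem~\ref{thm:hss-converse} converts the standing hypothesis (a uniform, zero-velocity Lieb-Robinson bound over all $\Lambda$ with fixed constants $C,\eta$) directly into a uniform dynamical localization statement for the one-particle block operator $M^{(\infty)}$, namely
\[
\sup_{t\in\R} \lvert M^{(\infty)}_{l,r}(t) \rvert \le C e^{-\eta (r-l)}, \qquad l < r.
\]
No additional ingredients are needed for this step; it is literally the conclusion of Theorem~\ref{thm:hss-converse}.

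Second, I would invoke \cite[Theorem~7.5]{dRio96}. That result (more precisely, its proof) asserts that a bounded self-adjoint operator on $\ell^2$ whose unitary evolution satisfies a uniform exponential off-diagonal decay bound of the above form must have pure point spectrum with uniformly exponentially localized eigenfunctions. Applying it to $M^{(\infty)}$ with the decay rate $\eta$ supplied by Theorem~\ref{thm:hss-converse} yields exactly the conclusion of the proposition.

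The only technical point to check is that the decay bound from Theorem~\ref{thm:hss-converse}, which is stated only for $l<r$, extends symmetrically to all index pairs as needed by del Rio's theorem. This is immediate from the Hermiticity of $M^{(\infty)}$: since $(e^{-itM^{(\infty)}})^{*} = e^{itM^{(\infty)}}$, we have $\lvert M^{(\infty)}_{l,r}(t) \rvert = \lvert M^{(\infty)}_{r,l}(-t)\rvert$, so taking $\sup_{t\in\R}$ reduces the case $r<l$ to the already-known case $l<r$ with the same constants.

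In short, the proposition is essentially a corollary obtained by stringing together two external results, and I do not anticipate any substantive obstacle beyond verifying the symmetrization noted above. The main ``content'' of the statement sits inside \cite{dRio96}, where the passage from uniform dynamical localization to pure point spectrum with uniformly localized eigenfunctions is carried out via RAGE-type arguments combined with the exponentially weighted $\ell^2$ bounds that uniform dynamical localization makes available.
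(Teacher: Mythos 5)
Your proposal is correct and follows essentially the same route as the paper, which likewise obtains the proposition by feeding the uniform dynamical localization bound of Theorem~\ref{thm:hss-converse} into (the proof of) \cite[Theorem 7.5]{dRio96}. The symmetrization of the off-diagonal bound via $\lvert M^{(\infty)}_{l,r}(t)\rvert=\lvert M^{(\infty)}_{r,l}(-t)\rvert$ is a minor point the paper leaves implicit, and your verification of it is fine.
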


The conclusion of Proposition~\ref{thm:pp-spectrum} is so strong that it is only satisfied by rather special block Jacobi matrices $M^{(\infty)}$. We have already discussed the periodic case, in which we have a strictly positive lower bound for the Lieb-Robinson velocity, but the conclusion of Proposition~\ref{thm:pp-spectrum} also fails, for example, for many almost periodic cases and subshift-type models that arise in the study of quasicrystals. Thus, in all these cases we will not have a zero-velocity Lieb-Robinson bound. That is, any Lieb-Robinson bound that holds for such a model must have positive velocity.

\section{Limit-Periodic Schr\"odinger Operators}\label{Slimper}

We begin this section by proving two lemmas in preparation for the proof of Theorem~\ref{Tgeneric}. We will need a lemma about the stability of the time evolution of exponentially decaying initial states $\psi$ under small $\ell^\infty$ perturbations to the potential. This is inspired by Lemma~7.2 of Last~\cite{Last96}, which considers the case $\psi=\delta_0$, $p=2$.

\begin{lemma}\label{Lest}
Let $W$ be a bounded potential and let $t>0$, $p>0$ and $m\in \mathbb{N}$. For any $\epsilon > 0$, there is a value of $\delta>0$ such that for all potentials $V$ with $\lVert V - W \rVert_\infty < \delta$ and all $\psi \in \ell^2(\mathbb{Z})$ with
\begin{equation}\label{est9}
\lvert \psi(n) \rvert \le m e^{- \frac 1m \lvert n\rvert}, \quad \forall n\in\mathbb{Z},
\end{equation}
we have
\[
\left\lvert \langle \psi, e^{it (\Delta+V)} \lvert X \rvert^p e^{-it(\Delta+V)} \psi \rangle - \langle \psi, e^{it (\Delta+W)} \lvert X \rvert^p e^{-it(\Delta+W)} \psi \rangle  \right\rvert < \epsilon.
\]
\end{lemma}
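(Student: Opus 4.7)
The plan is to combine Duhamel's formula with a quantitative tridiagonal propagation estimate. Write $H_V := \Delta + V$ and $H_W := \Delta + W$; assuming $\delta \leq 1$, both operators have norm at most a constant $M$ depending only on $\norm{W}_\infty$.

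The first ingredient is a pointwise ``finite light cone'' bound. Since each $H \in \{H_V, H_W\}$ is tridiagonal on $\ell^2(\mathbb{Z})$, we have $\langle \delta_n, H^j \delta_k\rangle = 0$ whenever $j < \abs{n-k}$, so expanding $e^{-isH}$ as a power series gives
\[
\abs{\langle \delta_n, e^{-isH}\delta_k\rangle} \leq \sum_{j \geq \abs{n-k}} \frac{(\abs{s}M)^j}{j!}.
\]
Stirling yields $x^N/N! \leq (ex/N)^N$, so the right-hand side is bounded by $e^{\abs{s}M}\,2^{-\abs{n-k}}$ once $\abs{n-k} \geq 2eM\abs{s}$, while the trivial bound $1$ is used for smaller $\abs{n-k}$. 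Convolving this with the hypothesis \eqref{est9} on $\psi$ yields, uniformly in $s \in [0,t]$, a pointwise majorant $\abs{(e^{-isH_V}\psi)(n)} \leq g(n)$, where $g : \mathbb{Z} \to [0,\infty)$ depends only on $t, m, M$ and decays exponentially in $\abs{n}$. In particular $\sum_n \abs{n}^p g(n)^2 < \infty$, which gives the uniform bound $\norm{\abs{X}^{p/2} e^{-isH_V}\psi} \leq C_1(t,p,m,M)$; the same bound holds with $H_W$ in place of $H_V$.

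Next I would apply Duhamel's formula
\[
e^{-itH_V}\psi - e^{-itH_W}\psi = -i \int_0^t e^{-i(t-s)H_W}(V-W)\,e^{-isH_V}\psi \, ds.
\]
The integrand $(V-W)e^{-isH_V}\psi$ inherits the pointwise bound $\delta\, g(n)$, so applying the propagation estimate of the previous step a second time, now with $H_W$ and this new ``initial'' vector, yields
\[
\norm{\abs{X}^{p/2} e^{-i(t-s)H_W}(V-W)e^{-isH_V}\psi} \leq \delta \cdot C_2(t,p,m,M).
\]
Integrating in $s$ gives $\norm{\abs{X}^{p/2}(e^{-itH_V}-e^{-itH_W})\psi} \leq \delta\, t\, C_2$.

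To conclude, set $f = \abs{X}^{p/2} e^{-itH_V}\psi$ and $h = \abs{X}^{p/2} e^{-itH_W}\psi$; the polarization identity $\norm{f}^2 - \norm{h}^2 = \Re\langle f-h, f+h\rangle$, together with the bounds $\norm{f}, \norm{h} \leq C_1$ and $\norm{f-h} \leq \delta\, t\, C_2$, estimates the quantity of interest by $2\delta\, t\, C_1 C_2$, which is less than $\epsilon$ once $\delta$ is chosen sufficiently small. The main obstacle will be the first step: extracting from the power-series bound an exponentially decaying majorant $g$ that is uniform in $s \in [0,t]$ and whose second moment against $\abs{n}^p$ is controlled by a constant depending only on the fixed parameters $t, p, m, M$. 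Once that is in hand, Duhamel and polarization deliver the conclusion mechanically.
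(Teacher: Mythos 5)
Your argument is correct, but it follows a genuinely different route from the paper's. The paper works directly with the scalar matrix elements: it writes the difference of the two expectations as $\sum_n |n|^p\bigl(|\langle\delta_n,e^{-itH_1}\psi\rangle|^2-|\langle\delta_n,e^{-itH_2}\psi\rangle|^2\bigr)$, expands $e^{-itH_1}-e^{-itH_2}$ in the power series $\sum_k \frac{(-it)^k}{k!}(H_1^k-H_2^k)$, and estimates term by term using two ingredients: the Lipschitz bound $\lVert H_1^k-H_2^k\rVert\le k(C_1+1)^{k-1}\delta$ for all $n$, and, for $|n|\ge 2k$, the fact that $(H_1^k-H_2^k)\delta_n$ is supported in $[n-k,n+k]$ so that the exponential decay of $\psi$ supplies a factor $e^{-(|n|-k)/m}$; summing the two regimes gives a convergent series times $\delta$. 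You obtain the factor $\delta$ instead from a single Duhamel integral, control the weight $|n|^p$ by a light-cone estimate on $\langle\delta_n,e^{-isH}\delta_k\rangle$ that is uniform in $s\in[0,t]$ (your Stirling tail bound is fine, and convolution with the hypothesis \eqref{est9} does produce an exponentially decaying majorant with constants depending only on $t,m,M$), and then convert the resulting vector estimate into the scalar statement by polarization, $\lVert f\rVert^2-\lVert h\rVert^2=\Re\langle f-h,f+h\rangle$. Both proofs rest on the same two mechanisms (smallness in the potential giving $\delta$, finite propagation speed plus decay of $\psi$ taming $|n|^p$), but your version is structurally cleaner and makes the linear dependence on $t$ explicit, while the paper's version stays entirely at the level of absolutely convergent scalar series and so never has to touch unbounded operators. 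Two points you should still nail down, both routine: the second application of your propagation estimate needs the (easy) remark that it holds for any vector with a pointwise exponential majorant, with constants depending only on that majorant; and pulling $|X|^{p/2}$ through the Duhamel integral requires a justification (closedness of $|X|^{p/2}$ together with Hille's theorem for Bochner integrals, or working with the truncations $X_N$ and passing to the limit).
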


\begin{proof} We will assume that $\delta \in (0,1)$ and $\lVert V- W \rVert_\infty < \delta$. Denote $H_1 = \Delta+W$, $H_2 = \Delta+V$. Starting with
\begin{align*}
\langle \psi, e^{it H_j} \lvert X \rvert^p e^{-itH_j} \psi \rangle & = \sum_{n\in\mathbb{Z}} \lvert n \rvert^p \lvert \langle \delta_n, e^{-itH_j} \psi \rangle \rvert^2
\end{align*}
for $j=1$ and $j=2$, subtracting and using
\[
\left \lvert \lvert \langle \delta_n, e^{-itH_1} \psi \rangle \rvert^2 - \lvert \langle \delta_n, e^{-itH_2} \psi \rangle \rvert^2 \right\rvert \le 2 \left \lvert  \langle \delta_n, e^{-itH_1} \psi \rangle - \langle \delta_n, e^{-itH_2} \psi \rangle \right\rvert,
\]
we obtain
\begin{align*}
& \left\lvert \langle \psi, e^{it H_1} \lvert X \rvert^p e^{-itH_1} \psi \rangle - \langle \psi, e^{it H_2} \lvert X \rvert^p e^{-it H_2} \psi \rangle  \right\rvert \le 2 \sum_{n\in\mathbb{Z}} \sum_{k=0}^\infty \lvert n \rvert^p \frac{t^k}{k!} \lvert \langle \delta_n, (H_1^k - H_2^k) \psi \rangle \rvert.
\end{align*}
We now wish to estimate the terms of this series. For any $k =0,1,2,\dots$, we have $\lVert H_1^k - H_2^k \rVert \le k (C_1 + \delta)^{k-1} \delta$, where $C_1 = 2 + \lVert W \rVert_\infty \ge \lVert H_1 \rVert$. This immediately implies the estimate
\begin{equation}\label{est1}
\lvert \langle \delta_n, (H_1^k - H_2^k) \psi \rangle \rvert \le k (C_1+1)^{k-1} \delta.
\end{equation}
However, for terms with $\lvert n \rvert \ge 2k$, we will use a better estimate  which follows from noting that $(H_1^k - H_2^k) \delta_n$ is supported on $[n-k, n+k] \cap \mathbb{Z}$ so
\[
\lvert \langle \delta_n, (H_1^k - H_2^k) \psi \rangle \rvert = \lvert \langle (H_1^k - H_2^k) \delta_n, \psi \rangle \rvert \le \lVert (H_1^k - H_2^k) \delta_n \rVert_2  \lVert \psi\vert_{[n-k,n+k]} \rVert_2
\]
so
\begin{equation}\label{est2}
\lvert \langle \delta_n, (H_1^k - H_2^k) \psi \rangle \rvert  \le k (C_1+1)^{k-1}\delta  C_2 e^{-\frac 1m (\lvert n\rvert -k)}
\end{equation}
where $C_2 = m \left( \sum_{l=0}^\infty e^{-2l /m} \right)^{1/2}$.

Using \eqref{est2} for $\lvert n \rvert \ge 2k$ and \eqref{est1} for $\lvert n \rvert < 2k$, we obtain
\begin{align*}
& \left\lvert \langle \psi, e^{it H_1} \lvert X \rvert^p e^{-itH_1} \psi \rangle - \langle \psi, e^{it H_2} \lvert X \rvert^p e^{-it H_2} \psi \rangle  \right\rvert  \\
& \le \sum_{k=0}^\infty \sum_{\lvert n \rvert < 2k}  \lvert n \rvert^p \frac{t^k}{k!}  k (C_1+1)^{k-1} \delta + \sum_{k=0}^\infty \sum_{\lvert n \rvert \ge 2k}  \lvert n \rvert^p \frac{t^k}{k!} k (C_1+1)^{k-1}\delta  C_2 e^{-\frac 1m (\lvert n\rvert -k)}.
\end{align*}
For the first of those sums, we use
\[
\sum_{\lvert n \rvert < 2k} \lvert n \rvert^p \le C_3 k^{p+1}
\]
and for the second sum, we use $\lvert n \rvert^p \le C_4 e^{\lvert n\rvert / (2m)}$ and
\[
\sum_{\lvert n \rvert \ge 2k} \lvert n \rvert^p e^{-\frac 1m (\lvert n\rvert -k)} \le \sum_{\lvert n \rvert \ge 2k} C_4 e^{-\frac 1{2m} (\lvert n\rvert -2k)} \le C_5
\]
for some constants $C_3, C_4, C_5$ independent of $n$ and $k$. Thus, the difference is estimated above by
\begin{align*}
& \left\lvert \langle \psi, e^{it H_1} \lvert X \rvert^p e^{-itH_1} \psi \rangle - \langle \psi, e^{it H_2} \lvert X \rvert^p e^{-it H_2} \psi \rangle  \right\rvert  \\
& \le \delta \left( \sum_{k=0}^\infty C_3 k^{p+1} \frac{t^k}{k!}  k (C_1+1)^{k-1} + \sum_{k=0}^\infty C_5 \frac{t^k}{k!} k (C_1+1)^{k-1} C_2 \right).
\end{align*}
Since both series on the right-hand side are convergent, it is now clear that for sufficiently small $\delta$, we can ensure that the right-hand side is smaller than $\epsilon$, which concludes the proof.
\end{proof}

\begin{lemma}\label{Lest2}
For any periodic potential $W$, any $p\in (0,\infty)$ and any $m\in \mathbb{N}$, there exist constants $T_{p,m,W}\ge m$ and $\delta_{p,m,W} >0$ such that the following holds: if $\psi \in \ell^2(\mathbb{Z})$ obeys $\lVert \psi \rVert_2 = 1$ and \eqref{est9}, and $V$ is any potential such that $\lVert V - W \rVert_\infty < \delta_{p,m,W}$, then
\[
\langle \psi, e^{iT_{p,m,W} (\Delta+V)} \lvert X\rvert^p e^{-iT_{p,m,W} (\Delta+V)} \psi \rangle > \frac{T_{p,m,W}^p}{\log T_{p,m,W}}.
\]
\end{lemma}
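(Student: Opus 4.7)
The plan is to establish the bound first for $V=W$ by invoking Theorem~\ref{Tballistic} applied to the periodic scalar Jacobi matrix $H_W := \Delta+W$, and then to transfer the bound to all $V$ sufficiently close to $W$ in $\ell^\infty$ via Lemma~\ref{Lest}. Let $Q_W$ denote the ballistic velocity operator associated with $H_W$ by Theorem~\ref{Tballistic}(a); it is bounded and self-adjoint, with $\ker Q_W = \{0\}$. Since $p>0$, the operator $|Q_W|^{p/2}$ is also injective, so $\langle \psi, |Q_W|^p \psi\rangle > 0$ for every nonzero $\psi$.

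The crux of the argument is uniformity in $\psi$ over the set
\[
K_m := \bigl\{\psi \in \ell^2(\mathbb{Z}) : \|\psi\|_2 = 1,\; |\psi(n)| \le m e^{-|n|/m} \text{ for all } n \in \mathbb{Z}\bigr\}.
\]
The uniform pointwise envelope forces a uniform tail bound, so a diagonal extraction together with dominated convergence shows that $K_m$ is norm-compact in $\ell^2(\mathbb{Z})$. Since $|Q_W|^p$ is bounded, the map $\psi \mapsto \langle \psi, |Q_W|^p\psi\rangle$ is continuous on $\ell^2$ and strictly positive on $K_m$, so compactness yields $\alpha := \inf_{\psi \in K_m} \langle \psi, |Q_W|^p\psi\rangle > 0$. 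I would then apply Theorem~\ref{Tballistic}(b) to $H_W$ with the bounded continuous function $g(x) := \min(|x|, \|Q_W\|)^p$, so that $g(Q_W) = |Q_W|^p$ by the spectral theorem. Strong convergence $g(X(T)/T) \to g(Q_W)$ together with the uniform operator bound $\|g(X(T)/T)\| \le \|Q_W\|^p$ upgrades to convergence uniform on the compact set $K_m$ by a standard subsequence-contradiction argument. The pointwise inequality $g(x) \le |x|^p$ gives, via the spectral theorem, $\langle \psi, g(X(T)/T)\psi\rangle \le T^{-p}\langle \psi, e^{iTH_W}|X|^p e^{-iTH_W}\psi\rangle$, so there exists $T_1$ such that for every $T \ge T_1$ and every $\psi \in K_m$,
\[
\langle \psi, e^{iTH_W}|X|^p e^{-iTH_W}\psi\rangle \ge \tfrac{\alpha}{2}\, T^p.
\]

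To conclude, I would fix $T_{p,m,W} \ge \max(m, T_1)$ so large that $\alpha/2 > 2/\log T_{p,m,W}$; then the bound for $H_W$ exceeds $2 T_{p,m,W}^p/\log T_{p,m,W}$ uniformly on $K_m$. Applying Lemma~\ref{Lest} with $t = T_{p,m,W}$ and $\epsilon = T_{p,m,W}^p/\log T_{p,m,W}$ yields a threshold $\delta_{p,m,W}>0$ such that the difference between the $V$- and $W$-expectations is at most $T_{p,m,W}^p/\log T_{p,m,W}$ whenever $\|V-W\|_\infty < \delta_{p,m,W}$; subtracting gives the desired strict inequality. The main obstacle is the passage from the pointwise-in-$\psi$ conclusion of Theorem~\ref{Tballistic}(b) to a uniform estimate over all admissible initial states, and this is precisely what the compactness of $K_m$ and the uniform boundedness of $g(X(T)/T)$ deliver.
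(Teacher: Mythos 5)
Your proposal is correct, and its overall scheme coincides with the paper's: prove a lower bound of order $T^p$ at one fixed large time $T$ for the periodic operator $\Delta+W$, uniformly over all unit vectors obeying \eqref{est9}, and then transfer it to nearby $V$ by Lemma~\ref{Lest} applied with $\epsilon = T^p/\log T$ (your final subtraction step is literally the paper's). Where you genuinely differ is in how the uniformity in $\psi$ is obtained. The paper argues quantitatively: it uses \eqref{est9} to confine all but a small fraction of the mass of $\psi$ to a box $[-K_m,K_m]$, applies Theorem~\ref{Tballistic}(b) only to the finitely many vectors $\delta_k$ with $\lvert k\rvert < K_m$, and chooses $\epsilon$ so that $\lVert j_\epsilon(Q)\delta_k\rVert$ is small (possible since $\Ker Q=\{0\}$), which gives $\lVert \chi_{(-\epsilon,\epsilon)}(X(t)/t)\,\psi\rVert\le 1/2$ and hence $t^{-p}\langle\psi,\lvert X(t)\rvert^p\psi\rangle\ge \epsilon^p/2$ for all admissible $\psi$ and all large $t$. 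You instead note that the admissible set $K_m$ is norm-compact (uniform exponential tails plus closedness), that $\psi\mapsto\langle\psi,\lvert Q\rvert^p\psi\rangle$ is continuous and strictly positive there (again via $\Ker Q=\{0\}$, through injectivity of $\lvert Q\rvert^{p/2}$), and upgrade the strong convergence of Theorem~\ref{Tballistic}(b) to uniform convergence on $K_m$ using the uniform bound $\lVert g(X(T)/T)\rVert\le\lVert Q\rVert^p$; all of these steps are sound. Your compactness route is softer and a bit shorter, and cleanly isolates why a single time threshold works for all $\psi$; the paper's finite-box reduction is more explicit, producing the concrete constant $\epsilon^p/2$ with elementary estimates and no appeal to compactness, which keeps the argument closer to a quantitative form that could be tracked if one wanted effective constants.
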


\begin{proof}
We begin by showing a preliminary estimate --- a lower bound on the size of $\psi(n)$ outside some box. Let $\psi$ be as in the statement of the lemma. This implies that for large enough $K_m \in \mathbb{N}$,
\begin{equation}\label{est4}
\sum_{\lvert n\rvert \ge K_m} \lvert \psi(n) \rvert^2 \le \sum_{\lvert n\rvert \ge K_m} m^2 e^{- \frac{2\lvert n \rvert}m} \le 2 m^2 \int_{K_m-1}^\infty  e^{- \frac{2x}m} dx \le m^3 e^{-\frac{2 (K_m-1)}m} \le \frac 1{16}.
\end{equation}
Let $j:\mathbb{R} \to \mathbb{R}$ be a bounded continuous function with $0\le j \le 1$, $\supp j \subset [-2,2]$ and $j = 1$ on $[-1,1]$. Let $j_\epsilon(x) = j(x/\epsilon)$. By Theorem~\ref{Tballistic},
\[
\lim_{t \to \infty} j_\epsilon \left(\frac{X(t)}t \right) \delta_k  = j_\epsilon(Q) \delta_k.
\]
Since $j_\epsilon \le \chi_{[-2\epsilon, 2\epsilon]}$ and $\Ker B = \{0\}$, we can pick $\epsilon >0$ such that
\[
\lVert j_\epsilon(Q) \delta_k \rVert \le \frac 1{(64 K_m)^{1/2}}
\]
for all $k \in (-K_m, K_m) \cap \mathbb{Z}$, so that for $t$ large enough,
\[
\left \lVert \chi_{(-\epsilon,\epsilon)} \left(\frac{X(t)}t \right) \delta_k \right\rVert \le \left \lVert j_\epsilon \left(\frac{X(t)}t \right) \delta_k \right\rVert \le \frac 1{(32 K_m)^{1/2}}.
\]
Then
\begin{align*}
\left\lVert \chi_{(-\epsilon,\epsilon)} \left(\frac{X(t)}t \right) \psi \right\rVert &  \le \sum_{\lvert k \rvert < K_m} \lvert \psi(k) \rvert \left\lVert \chi_{(-\epsilon,\epsilon)} \left(\frac{X(t)}t \right) \delta_k \right\rVert \\
& \qquad + \left\lVert \chi_{(-\epsilon,\epsilon)} \left(\frac{X(t)}t \right) \left( \psi - \sum_{\lvert k \rvert < K_m} \psi(k) \delta_k \right) \right\rVert \\
&  \le \frac 1{(32 K_m)^{1/2}} \sum_{\lvert k \rvert < K_m} \lvert \psi(k) \rvert  + \left\lVert \psi - \sum_{\lvert k \rvert < K_m} \psi(k) \delta_k \right\rVert \\
& \le \frac 1{(32 K_m)^{1/2}} \left( 2 K_m \sum_{\lvert k \rvert < K_m} \lvert \psi(k) \rvert^2 \right)^{1/2}  + \frac 14 \\
& \le \frac 12.
\end{align*}

Since
\[
\lvert x \rvert^p \ge \epsilon^p (1-\chi_{(-\epsilon,\epsilon)}(x)),
\]
this implies that for all large enough $t$,
\[
\frac 1{t^p} \langle \psi,  \lvert X(t)\rvert^p   \psi \rangle \ge \epsilon^p \left\langle \psi, \left (1-\chi_{(-\epsilon,\epsilon)} \left(\frac{X(t)}t \right) \right)   \psi \right\rangle \ge \epsilon^p \frac 12.
\]
In particular, for large enough $t$,
\[
 \langle \psi,  e^{it (\Delta+W)} \lvert X\rvert^p e^{-it (\Delta+W)}  \psi \rangle > 2 \frac{t^p}{\log t}.
\]
Next, by using Lemma~\ref{Lest} with $\epsilon = \frac{T_{p,m,W}^p}{\log T_{p,m,W}}$, we conclude that there exists a value $\delta_{p,m,W} > 0$ such that for any $V$ with $\lVert V - W \rVert_\infty < \delta_{p,m,W}$,
\[
 \langle \psi,  e^{iT_{p,m,W} (\Delta+V)} \lvert X\rvert^p e^{-iT_{p,m,W} (\Delta+V)}  \psi \rangle > \frac{T_{p,m,W}^p}{\log T_{p,m,W}}. \qedhere
\]
\end{proof}

\begin{proof}[Proof of Theorem~\ref{Tgeneric}]
By scaling, it suffices to prove the claim for $\lVert \psi \rVert_2 = 1$.

With the notation of Lemma~\ref{Lest2}, take
\[
U_{p,m} = \bigcup_{W} B(W,\delta_{p,m,W}),
\]
with the union taken over all periodic potentials $W$. Note that $U_{p,m}$ is open and dense (it is dense because it contains all periodic potentials). Define
\[
G = \bigcap_{k=1}^\infty \bigcap_{m=1}^\infty U_{1/k,m}.
\]
Then $G$ is a dense $G_\delta$ set. Moreover, for any $V \in G$, we have an infinite sequence of $t_{1/k,m} \ge m$ such that
\begin{equation}\label{est10}
\langle \psi, e^{it_{1/k,m}(\Delta+V)} \lvert X\rvert^{1/k} e^{-it_{1/k,m} (\Delta+V)} \psi \rangle > \frac{t_m^{1/k}}{\log t_m}
\end{equation}
whenever $\lVert \psi\rVert_2 = 1$ and $\psi$ obeys \eqref{est9}. The $t_{1/k,m}$ are simply $T_{1/k,m,W}$ with some $W$ such that $\lVert V - W \rVert < \delta_{1/k,m,W}$ (whose existence follows from the definition of $G$). Any exponentially decaying $\psi$ obeys \eqref{est9} for large enough $m$, so \eqref{est10} holds for all large enough $m$, which implies that for $V \in G$, $k\in \mathbb{N}$ and exponentially decaying $\psi$,
\[
\beta^+_\psi(1/k) \ge 1.
\]
Since $\beta^+_\psi(p)$ is a monotone increasing function of $p \in (0,\infty)$ and $\beta^+_\psi(p) \le 1$ for all $p$ (see Theorem 2.22 of Damanik--Tcheremchantsev~\cite{DamanikTcheremchantsev10}), this implies that $\beta^+_\psi(p)=1$ for all $p\in (0,\infty)$.
\end{proof}

\begin{remark}
The genericity statement of Theorem~\ref{Tgeneric} can also be made in the framework of Avila~\cite{Avila09} (see also Gan~\cite{Gan10}), where one fixes a procyclic group $\Omega$ with a minimal translation $T$, and considers for any $f \in C(\Omega,\mathbb{R})$ and any $\omega \in \Omega$ the limit-periodic potential $V$ given by
\[
V(n) = f(T^n \omega).
\]
The proof proceeds with the obvious changes, and the result becomes that for generic $f\in C(\Omega,\mathbb{R})$, we have $\beta^+_\psi(p) = 1$ for all $p>0$, all exponentially decaying $\psi$ and all $\omega \in \Omega$.
\end{remark}

Having shown that $\beta^+(p)=1$ is the generic behavior, we now turn to lack of transport. We prove the following result.

\begin{thm}
There is a dense set of limit periodic functions $V$ such that for all $p>0$ and all values of the coupling constant $\lambda>0$, for the discrete Schr\"odinger operator $\Delta + \lambda V$,
\begin{equation}\label{betaminus}
\beta^-_{\delta_0}(p) = 0.
\end{equation}
\end{thm}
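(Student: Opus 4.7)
The theorem complements Theorem~\ref{Tgeneric}: while the generic limit-periodic potential is quasi-ballistic ($\beta^+_\psi(p)=1$ along some subsequence), we now wish to exhibit a dense subclass whose lower transport exponents collapse to zero for every $p>0$ and every coupling $\lambda>0$. The plan is to mirror the architecture of the proof of Theorem~\ref{Tgeneric}, replacing the ``ballistic at time $t_k$ for a periodic approximant'' input with a ``quasi-localized at time $t_k$ for a suitably engineered periodic approximant'' input, and then to invoke Lemma~\ref{Lest} to transport the bound from the approximant to the limit.

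\emph{Construction.} Enumerate dense sequences $\{\lambda_k\}\subset(0,\infty)$, $\{p_k\}\subset(0,\infty)$, and fix a diagonal enumeration $k\mapsto (n_1(k),n_2(k))$. Starting from an arbitrary periodic seed $V^{(0)}$ (periodic potentials being dense in the space of limit-periodic potentials), I would construct
\[
V = V^{(0)} + \sum_{k=1}^\infty u_k,
\]
where $u_k$ is periodic of period $q_k$, each $q_{k+1}$ is a multiple of $q_k$, the $q_k$ grow super-exponentially fast, and $\|u_k\|_\infty$ decays fast enough that $\sum_k \|u_k\|_\infty < \varepsilon$ for any prescribed $\varepsilon>0$; this last condition supplies density. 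At stage $k$, setting $H_k = \Delta + \lambda_{n_1(k)}(V^{(0)}+u_1+\cdots+u_k)$, I would engineer $u_k$ (of period $q_k$ and of very small $\ell^\infty$-norm) by a P\"oschel--Damanik--Gan KAM scheme \cite{Poschel83,DamanikGan11b} so that for some $t_k\in (q_{k-1},q_k]$,
\[
\bigl\langle \delta_0,\, e^{it_kH_k}\,|X|^{p_{n_2(k)}}\,e^{-it_k H_k}\,\delta_0\bigr\rangle \le (\log t_k)^{C}
\]
with $C$ fixed independently of $k$. Such a perturbation exists because on the enormous scale $q_k$ we have ample room to open spectral gaps that suppress the wave-packet spreading at the single chosen time $t_k$, via an approximate localization of eigenfunctions of $H_k$ concentrated on scale $\lesssim(\log t_k)^{C/p_{n_2(k)}}$.

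\emph{Stability.} By Lemma~\ref{Lest} applied with $W=V^{(0)}+u_1+\cdots+u_k$, the above moment estimate is preserved (up to a summand $1$) when $H_k$ is replaced by $\Delta + \lambda_{n_1(k)} V$, provided the tail $\sum_{j>k}\|u_j\|_\infty$ is below the threshold $\delta$ furnished by Lemma~\ref{Lest} for data $(\varepsilon,t_k,p_{n_2(k)},m_k)$. A super-exponential decay schedule for $\|u_j\|_\infty$ satisfies this for all $k$ simultaneously. The diagonal enumeration ensures that each pair $(\lambda_n,p_n)$ is visited along an infinite subsequence $\{k_j\}$, so
\[
\liminf_{j\to\infty} \frac{\log\bigl\langle \delta_0,\, e^{it_{k_j}(\Delta+\lambda_nV)}\,|X|^{p_n}\,e^{-it_{k_j}(\Delta+\lambda_nV)}\,\delta_0\bigr\rangle}{p_n \log t_{k_j}} = 0,
\]
yielding $\beta^-_{\delta_0}(p_n)=0$ for each coupling $\lambda_n$. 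The extension to all $\lambda\in(0,\infty)$ is by continuity: applying Lemma~\ref{Lest} with $W=\lambda V$ and $V\mapsto\lambda' V$ shows that the moments at any fixed time $t$ depend continuously on $\lambda$, so slow-transport at $t_{k_j}$ for $\lambda_n$ transfers to slow-transport for $\lambda$ close to $\lambda_n$; density of $\{\lambda_n\}$ and a standard refinement of the subsequence argument give the result for every $\lambda>0$. Monotonicity of $\beta^-_\psi(p)$ in $p$, together with density of $\{p_n\}$, extends to all $p>0$.

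\emph{Main obstacle.} The crux is the stage-$k$ construction of the one-time quasi-localization estimate for the periodic operator $H_k$ at a potentially large target coupling $\lambda_{n_1(k)}$, with only a small $\ell^\infty$ perturbation $u_k$ added to the previous stage. The standard P\"oschel scheme is perturbative in the product (coupling)$\times$(amplitude), so accommodating unbounded $\lambda_{n_1(k)}$ requires exploiting the enormous freedom in choosing $q_k$: one needs the small-divisor conditions to be satisfiable on scale $q_k$ even when the effective perturbation $\lambda_{n_1(k)} u_k$ is not small in the usual KAM sense. An alternative route is to bypass full localization of $H_k$ and instead establish exponential growth of transfer matrices of $H_k$ on a dense set of energies at length $q_k$, from which a single-time moment bound on $\delta_0$ follows via the Damanik--Tcheremchantsev framework \cite{DamanikTcheremchantsev07,DamanikTcheremchantsev10}. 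Either way, uniformly executing this estimate across all target couplings $\lambda_{n_1(k)}$ is where the bulk of the technical work lies.
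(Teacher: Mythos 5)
There is a genuine gap, in two places. First, the heart of your argument -- the stage-$k$ engineering of a tiny periodic perturbation $u_k$ that suppresses the moment of $H_k$ at a chosen time $t_k$, for an arbitrary target coupling $\lambda_{n_1(k)}$ -- is exactly the part you leave as a ``main obstacle,'' and it is not a routine KAM or gap-opening step. Opening spectral gaps does not by itself suppress spreading (every periodic operator is ballistic by Theorem~\ref{Tballistic}); what is needed is quantitative transfer-matrix growth (equivalently, positive finite-scale Lyapunov exponents) for the small perturbation, uniformly in energy and over a whole range of couplings, and producing this is precisely the nontrivial content of Avila's construction \cite{Avila09}, which the paper's proof imports and slightly strengthens (controlling $L(p_j,E,\lambda W_i)$ at \emph{all} earlier scales $p_j$, not just at the current period). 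Your fallback of ``exponential growth on a dense set of energies'' is also insufficient: the Damanik--Tcheremchantsev criterion \cite{DamanikTcheremchantsev07} requires an integral bound over an energy interval, evaluated at complex energies $E+i/T$, so one needs the lower bound on a set of full measure in the interval and its extension to the complex strip (the paper gets this via the Thouless formula), not merely on a dense set. So as written, the theorem has been reduced to an unproved claim of essentially the same depth as the theorem itself.

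Second, the passage from the countable dense family $\{\lambda_n\}$ to \emph{all} $\lambda>0$ by continuity does not work. Lemma~\ref{Lest} is a fixed-time stability statement whose threshold $\delta$ shrinks as $t\to\infty$; hence for a fixed $\lambda\notin\{\lambda_n\}$ the neighborhoods of $\lambda_{n_1(k_j)}$ on which the slow-transport estimate at time $t_{k_j}$ survives shrink with $j$, and a fixed $\lambda$ lies in only finitely many of them (the $\limsup$ of shrinking intervals centered on a countable set is typically a null set, so ``for all $\lambda>0$'' cannot be recovered this way). Uniformity in the coupling has to be built into the construction itself, which is how the paper proceeds: the modified Avila scheme yields $L(p_j,E,\lambda V)\ge \tfrac{8}{9}\delta_k$ for all $j\ge k$, all $\lvert E\rvert<\epsilon_k^{-1}$ and all $\epsilon_k\le\lvert\lambda\rvert\le\epsilon_k^{-1}$ simultaneously, and these expanding windows exhaust all $\lambda\neq 0$ and all bounded energy sets, after which the DT criterion applies directly to the limit operator -- no dynamical stability lemma (your Lemma~\ref{Lest} step) is needed for this theorem at all. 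A secondary remark: the paper's proof, unlike your sketch, never needs single-time moment estimates for the periodic approximants; it works with transfer matrices of the limit potential at the approximants' periods $p_j$, which is what makes the uniformity in $\lambda$ and $E$ tractable.
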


\begin{proof}
The proof has two main ingredients: a construction of Avila~\cite{Avila09} (see also Damanik--Gan \cite{DamanikGan10}) of limit-periodic potentials with certain exotic properties and a criterion of Damanik--Tcheremchantsev~\cite{DamanikTcheremchantsev07} for \eqref{betaminus}. We will merely show that the potentials constructed in \cite{Avila09} obey the criterion of \cite{DamanikTcheremchantsev07}.

We denote an $n$-step transfer matrix by
\[
\Phi(n,E,w) = \begin{pmatrix} E - w_{n-1} & -1 \\ 1 & 0 \end{pmatrix} \cdots \begin{pmatrix} E - w_{0} & -1 \\ 1 & 0 \end{pmatrix},
\]
\[
L(n,E,w) =  \frac 1n \log  \lVert \Phi(n, E,w) \rVert,
\]
and the Lyapunov exponent by
\[
L(E,w) =  \lim_{n\to\infty} L(n,E,w).
\]
We will also, for a finite family $W$ of potentials, denote
\[
L(E,W) = \frac 1{\#W} \sum_{w\in W} L(E,w)
\]
and analogously for $L(n,E,W)$.

The construction of \cite{Avila09} is iterative. In the $i$-th step, one constructs a finite family $W_i$ of $p_i$-periodic potentials and constants $\delta_i \in (0,1)$ and
\[
\epsilon_{i} = \frac 1{10} \min(\epsilon_{i-1}, \delta_{i-1})
\]
such that
\begin{equation}\label{qw02}
L(E,\lambda W_i) \ge \delta_i \quad \text{if }E\in \mathbb{R},\quad  \epsilon_i \le \lvert \lambda \rvert \le \epsilon_i^{-1},
\end{equation}
and
\begin{equation}\label{qw01}
\lvert L(E,\lambda W_{i}) - L(E, \lambda W_{i-1}) \rvert < \epsilon_{i}, \quad \text{if }\lvert E \rvert, \lvert \lambda \rvert < \epsilon_{i}^{-1}.
\end{equation}
The families $W_i$ lie in balls $B_i$ whose closures are nested and radii converge to $0$, so that $W_i$ converge to a single potential $V$; that potential $V$ has the desired properties. We refer the reader to \cite{Avila09}, \cite{DamanikGan10} for details of the construction. Here we merely point out the adjustments necessary for our purpose.

The estimate
\eqref{qw01} is established by using continuity of Lyapunov exponents for periodic potentials of a fixed period $p_i$, a fact which follows by noting that for $p$-periodic potentials, $L(E,w) = L(p,E,w)$, and by using continuity of $L(p,E,w)$. In this step in the construction, we can instead use continuity of $L(q,E,w)$ for all $q\le p$ to replace \eqref{qw01} by
\begin{equation}\label{qw01b}
\lvert L(p_j, E,\lambda W_{i}) - L(p_j, E, \lambda W_{i-1}) \rvert < \epsilon_{i}, \quad \text{if }\lvert E \rvert, \lvert \lambda \rvert < \epsilon_{i}^{-1}, \quad j = 1, \dots, i.
\end{equation}
This is the only change we need --- and it doesn't affect the remainder of the construction. Moreover, since $p_j$ divides $p_{j+1}$, we have trivially $L(p_j,E,\lambda W_i) = L(p_{j+1}, E, \lambda W_{i})$ for $j > i$, so we can further restate \eqref{qw01b} as
\begin{equation}\label{qw01a}
\lvert L(p_j, E,\lambda W_{i}) - L(p_j, E, \lambda W_{i-1}) \rvert < \epsilon_{i}, \quad \text{if }\lvert E \rvert, \lvert \lambda \rvert < \epsilon_{i}^{-1}, \quad j \in \mathbb{N}.
\end{equation}
Similarly, since $W_i$ consists of $p_i$-periodic potentials, the estimate \eqref{qw02} can be rewritten as
\begin{equation}\label{qw02a}
L(p_j, E,\lambda W_i) \ge \delta_i \quad \text{if }E\in \mathbb{R},\quad \epsilon_i \le \lvert \lambda \rvert \le \epsilon_i^{-1}, \quad j\ge i.
\end{equation}
Convergence of $W_i$ to $V$ implies
\[
L(p_j, E, \lambda V) = \lim_{i\to \infty} L(p_j, E, \lambda W_i).
\]
Now assume that $\lvert E\rvert < \epsilon_k^{-1}$ and $\epsilon_k < \lvert \lambda \rvert < \epsilon_k^{-1}$. Using \eqref{qw01a} and \eqref{qw02a} we obtain for $j\ge k$ the estimate
\begin{align*}
L(p_j, E, \lambda V) & \ge L(p_j, E, \lambda W_k) - \sum_{i=k+1}^\infty \lvert L(p_j, E,\lambda W_{i}) - L(p_j, E, \lambda W_{i-1}) \rvert \\
& \ge \delta_k - \sum_{i=k+1}^\infty \epsilon_i \\
& \ge \delta_k - \sum_{i=k+1}^\infty \frac 1{10^{i-k}} \delta_k \\
& \ge \frac 89 \delta_k.
\end{align*}
By the Thouless formula,
\[
L(p_j,z,\lambda V) = \int_{\mathbb{R}} \ln\lvert z-x\rvert dk(x)
\]
where $dk$ is the density of states of a corresponding $p_j$-periodic problem corresponding to $\lambda V_1, \dots, \lambda V_{p_j}$. Since $\ln \lvert z-x\rvert \ge \ln \lvert \Re z - x \rvert$, we conclude that
\[
L(p_j, z, \lambda V) \ge L(p_j,\Re z, \lambda V) \ge \frac 89 \delta_k.
\]
To summarize, we have shown that there is a sequence $p_k \to \infty$ such that for $j\ge k$, $\lvert \Re z \rvert < \epsilon_k^{-1}$ and $\epsilon_k < \lvert \lambda \rvert < \epsilon_k^{-1}$, we have
\[
\frac 1{p_j} \log \lVert \Phi(p_j, z, \lambda V) \rVert \ge C
\]
for a constant $C = \frac 89 \delta_k >0$ independent of $z$ or $j$. Obviously, this implies for any $\lambda, K, \alpha > 0$ that, choosing $k$ large enough that $K \le \epsilon_k^{-1}$ and choosing $T_j^\alpha = p_j$, we have
\[
\int_{-K}^{K} \left( \max_{1 \le n \le T_j^\alpha} \left\lVert \Phi\left(p_j, E + \frac i{T_j}, \lambda V \right) \right\rVert^2 \right)^{-1} dE \le 2K e^{-2Cp_j}  = O(p_j^{-m})
\]
for any $m\ge 1$.  In Theorem 1 of  \cite{DamanikTcheremchantsev07} together with the remark after that theorem, this is exactly the input required in order to conclude that \eqref{betaminus} holds, which concludes the proof.
\end{proof}

\begin{remark}
Finally, we note that limit-periodic potentials can exhibit much more drastic lack of transport than that in the previous theorem. P\"oschel \cite{Poschel83} has constructed examples of limit-periodic discrete Schr\"odinger operators with a complete set of uniformly localized eigenfunctions (see also a strengthening by Damanik--Gan~\cite{DamanikGan11}). That is, those operators have a complete set $\{u_k\}\subset \ell^2(\mathbb{Z})$ of eigenfunctions and there are constants $r, C>0$ (which are independent of $k$) and $m_k \in \mathbb{Z}$ such that
\[
\lvert u_k(n) \rvert \le C e^{-r \lvert n - m_k\rvert} \quad \forall k, n\in\mathbb{Z}.
\]
By del Rio--Jitomirskaya--Last--Simon~\cite{dRio96}, this is equivalent to the property of uniform dynamical localization, that is, the existence of $D, \alpha >0$ such that
\[
\sup_{t \in \R} | \langle \delta_n,  e^{-itH} \delta_m \rangle | \le D e^{-\alpha \lvert n-m\rvert}.
\]
In turn, uniform dynamical localization implies for any exponentially decaying $\psi \in \ell^2(\Z)$,
\begin{align*}
\sup_{t \in \R} \sum_{n \in \Z} |n|^p | \langle \delta_n, e^{-itH} \psi \rangle |^2 & = \sup_{t \in \R} \sum_{n \in \Z} |n|^p | \langle e^{itH} \delta_n, \sum_{m \in \Z} \langle \delta_m , \psi \rangle \delta_m \rangle |^2 \\
& = \sup_{t \in \R} \sum_{n \in \Z} |n|^p \left\lvert \sum_{m \in \Z} \langle \delta_n,  e^{-itH} \delta_m \rangle \langle \delta_m , \psi \rangle \right\rvert ^2 \\
& \le D \sum_{n \in \Z} |n|^p \left( \sum_{m \in \Z} e^{-\alpha\lvert n -m\rvert} \lvert \langle \delta_m , \psi \rangle \rvert \right) ^2 \\
& \le D C_\alpha \|\psi\| \sum_{n \in \Z} |n|^p \left( \sum_{m \in \Z} e^{-\alpha\lvert n -m\rvert} \lvert \langle \delta_m , \psi \rangle \rvert \right)  \\
& = D C_\alpha \|\psi\| \sum_{m \in \Z} \lvert \langle \delta_m , \psi \rangle \rvert \sum_{n \in \Z} |n|^p e^{-\alpha\lvert n -m\rvert} \\
& < \infty.
\end{align*}
In particular, we have $\beta^\pm_\psi(p) = 0$ for every exponentially decaying $\psi \in \ell^2(\Z)$ and every $p > 0$.
\end{remark}

\section{Acknowledgment}

The authors would like to acknowledge helpful discussions with Robert Sims and G\"unter Stolz.

\bibliographystyle{amsplain}

\providecommand{\bysame}{\leavevmode\hbox to3em{\hrulefill}\thinspace}
\providecommand{\MR}{\relax\ifhmode\unskip\space\fi MR }
% \MRhref is called by the amsart/book/proc definition of \MR.
\providecommand{\MRhref}[2]{%
  \href{http://www.ams.org/mathscinet-getitem?mr=#1}{#2}
}
\providecommand{\href}[2]{#2}

\end{document}